\let\originalleft\left
\let\originalright\right
\renewcommand{\left}{\mathopen{}\mathclose\bgroup\originalleft}
\renewcommand{\right}{\aftergroup\egroup\originalright}
\newcommand{\Nat}{\mathbb{N}}
\newcommand{\Procs}{\ensuremath{\mathcal{P}}}
\definecolor{roleColor}{rgb}{0.1, 0.3, 0.1}\newcommand{\roleCol}[1]{{\color{roleColor}#1}}\newcommand{\roleFmt}[1]{\boldsymbol{\roleCol{\mathtt{#1}}}}
\newcommand{\procA}{{\color{roleColor}\roleFmt{p}}}
\newcommand{\procB}{{\color{roleColor}\roleFmt{q}}}
\newcommand{\procC}{{\color{roleColor}\roleFmt{r}}}
\newcommand{\procD}{{\color{roleColor}\roleFmt{s}}}
\newcommand{\procE}{{\color{roleColor}\roleFmt{t}}}
\newcommand{\val}{\ensuremath{m}}
\newcommand{\Alphabet}{\Sigma}
\newcommand{\MsgVals}{\ensuremath{\mathcal{V}}}
\newcommand{\CSM}[1]{\ensuremath{\{\!\!\{#1_\procA\}\!\!\}_{\procA \in \Procs}}}
\newcommand{\CSMl}[1]{\ensuremath{\{\!\!\{{#1}\}\!\!\}_{\procA \in \Procs}}}
\newcommand{\emptystring}{\varepsilon}
\newcommand{\set}[1]{\{#1\}}
\newcommand{\lang}{\mathcal{L}}
\newcommand{\interswaplang}{\mathcal{C}^{\interswap}}
\newcommand{\channels}{\ensuremath{\mathsf{Chan}}}
\newcommand{\channel}[2]{\ensuremath{\langle#1,#2\rangle}}
\newcommand{\trace}{\operatorname{trace}}
\newcommand{\GG}{\mathbf{G}}
\newcommand{\getMu}{\mathit{get\mu}}
\newcommand{\getMuG}{\getMu_\GG}
\newcommand{\apair}[2]{\langle #1, #2 \rangle}
\newcommand{\epair}[2]{[ #1, #2 ]}
\newcommand{\final}{\ensuremath{\ast}}
\newcommand{\nonfinal}{\ensuremath{\circ}}
\newcommand{\optionfn}{\ensuremath{\circledast}}
\newcommand{\semglobal}{\ensuremath{\mathsf{GAut}}}
\newcommand{\semlocal}{\ensuremath{\mathsf{LAut}}}
\newcommand{\semextlocal}{\ensuremath{\mathsf{EAut}}}
\newcommand{\interswap}{\ensuremath{\sim}}
\DeclareMathOperator*{\ExtCh}{\&}
\DeclareMathOperator*{\IntCh}{⊕}
\DeclareMathOperator*{\Merge}{\merge}
\DeclareMathOperator*{\Mmerge}{\mmerge}
\def \ifempty#1{\def\temp{#1} \ifx\temp\empty }
\newcommand{\snd}[3]{\ifempty{#1} #2!#3 \else #1\triangleright#2!#3 \fi}
\newcommand{\rcv}[3]{\ifempty{#2} #1?#3 \else #2\triangleleft#1?#3 \fi}
\newcommand{\msgFromTo}[3]{#1\!\to\!#2\!:\!#3}
\newcommand{\pref}{\operatorname{pref}}
\newcommand{\preforder}{\ensuremath{\leq}}
\newcommand{\avail}{\operatorname{avail}}
\newcommand{\channelcompliant}{channel-compliant\xspace}
\newcommand{\channelcompliancy}{channel-compliancy\xspace}
\newcommand{\Channelcompliant}{Channel-compliant\xspace}
\newcommand{\projectable}{projectable\xspace}
\newcommand{\Projectable}{Projectable\xspace}
\newcommand{\tproj}{{\ensuremath{\upharpoonright}}}
\newcommand{\wproj}{{\ensuremath{\Downarrow}}}
\renewcommand{\merge}{\sqcap}
\def\mmerge{\mathrel{\ThisStyle{\stretchrel*{\ooalign{\raise0.2\LMex\hbox{$\SavedStyle\sqcap$}\cr \raise-0.2\LMex\hbox{$\SavedStyle\sqcap$}}}{\sqcap}}}}
\def\mmmerge{\mathrel{\ThisStyle{\stretchrel*{\ooalign{\raise0.6\LMex\hbox{$\SavedStyle\sqcap$}\cr \raise0.2\LMex\hbox{$\SavedStyle\sqcap$}\cr \raise-0.2\LMex\hbox{$\SavedStyle\sqcap$}}}{\sqcap}}}}
\tikzstyle{hmscarrow}=[->, thick]
\tikzstyle{bmscbox}=[rounded corners, opacity=0.5]
\newcommand{\powerset}{\ensuremath{\mathscr{P}}}
\newcommand{\union}{\cup}
\newcommand{\inters}{\cap}
\newcommand{\Union}{\bigcup}
\DeclarePairedDelimiter\card{\lvert}{\rvert}
 \providecommand{\Coloneqq}{\mathrel{\mathop{::}}=} \newcommand{\is}{\coloneq}
\newcommand{\from}{\colon}
\def\grammOr{\hspace{3pt}\mid\hspace{3pt}}
\def\grammIs{\Coloneqq}
\gdef\@grammar@bar{\catcode`\|=\active \def|{\grammOr}}
\newcommand{\gramm}[1]{\begingroup
  \def\is{\grammIs}\@grammar@bar #1\endgroup }
\newenvironment{grammar}{\begin{equation*}\def\is{& \grammIs }\@grammar@bar \aligned }
{\endaligned \end{equation*}\aftergroup\ignorespaces }
\newcommand{\hole}{\hbox{-}}
\DeclarePairedDelimiterXPP\aenc[2]{\constr{a}}{(}{)}{_{#2}}{\strip@parens#1}
\DeclarePairedDelimiterXPP\pub[1]{\constr{p}}{(}{)}{}{\strip@parens#1}
\newcommand{\Parallel}{\@ifstar{\prod}{{\textstyle\prod}}}
\newcommand{\Alt}{\@ifstar{\sum}{{\textstyle\sum}}}
\newcommand{\Sum}{\Sigma}
\newcommand{\redtoover}[1]{\xrightarrow{#1}}
\title{Generalising Projection in \\ Asynchronous Multiparty Session Types}
\titlerunning{
Generalising Projection in Asynchronous Multiparty Session Types}
\author{Rupak Majumdar}
{
Max Planck Institute for Software Systems, Kaiserslautern, Germany
}
{rupak@mpi-sws.org}
{}
{}
\author{Madhavan Mukund}
{
Chennai Mathematical Institute, India
\and CNRS IRL 2000, ReLaX, Chennai, India
}
{madhavan@cmi.ac.in}
{}
{}
\author{Felix Stutz}
{
Max Planck Institute for Software Systems, Kaiserslautern, Germany
}
{fstutz@mpi-sws.org}
{https://orcid.org/0000-0003-3638-4096}
{}
\author{Damien Zufferey}
{
Max Planck Institute for Software Systems, Kaiserslautern, Germany
}
{zufferey@mpi-sws.org}
{https://orcid.org/0000-0002-3197-8736}
{}
\authorrunning{R.\, Majumdar, M.\, Mukund, F.\, Stutz, and D.\, Zufferey} 
\begin{document}
\maketitle              \begin{abstract}

Multiparty session types (MSTs) provide an efficient methodology for specifying and verifying message passing software systems.
In the theory of MSTs,  a global type specifies the interaction among the roles at the global level.
A local specification for each role is generated by projecting from the global type on to the message exchanges it participates in.
Whenever a global type can be projected on to each role,
the composition of the projections is deadlock free and has exactly the behaviours specified by the global type.
The key to the usability of MSTs is the projection operation: a more expressive projection allows more systems to be type-checked
but requires a more difficult soundness argument.

In this paper, we generalise the standard projection operation in MSTs.
This allows us to model and type-check many design patterns in distributed systems, such as load balancing,
that are rejected by the standard projection.
The key to the new projection is an analysis that tracks causality between messages. 
Our soundness proof uses novel graph-theoretic techniques from the theory of message-sequence charts.
We demonstrate the efficacy of the new projection operation by showing many global types for common patterns
that can be projected under our projection but not under the standard projection operation.
 \keywords{Multiparty session types,
           Verification,
           Communicating state machines}
\end{abstract}
\section{Introduction}
\label{sec:intro}

Distributed message-passing systems are both widespread and challenging to design and implement.
A process tries to implement its role in a protocol with only the partial information received through messages.
The unpredictable communication delays mean that messages from different sources can be arbitrarily reordered.
Combining concurrency, asynchrony, and message buffering makes the verification problem
algorithmically undecidable \cite{DBLP:journals/jacm/BrandZ83} and principled design and verification
of such systems is an important challenge.

Multiparty Session Types (MSTs) 
\cite{DBLP:journals/jacm/HondaYC16,DBLP:journals/pacmpl/ScalasY19}
provide an appealing type-based approach for formalising and compositionally verifying structured concurrent distributed systems.
They have been successfully applied to
web services~\cite{DBLP:conf/tgc/YoshidaHNN13},
distributed algorithms~\cite{DBLP:journals/corr/abs-1902-01353},
smart contracts~\cite{DBLP:journals/corr/abs-1902-06056},
operating systems~\cite{DBLP:conf/eurosys/FahndrichAHHHLL06},
high performance computing~\cite{DBLP:conf/pvm/HondaMMNVY12},
timed systems~\cite{DBLP:conf/esop/BocchiMVY19},
cyber-physical systems~\cite{DBLP:conf/ecoop/MajumdarPYZ19},
etc.
By decomposing the problem of asynchronous verification on to local roles,
MSTs provide a clean and modular approach to the verification of distributed systems (see
the surveys \cite{DBLP:journals/ftpl/AnconaBB0CDGGGH16,DBLP:journals/csur/HuttelLVCCDMPRT16}).

The key step in MSTs is the \emph{projection} from a \emph{global type}, specifying all possible global message exchanges, to \emph{local types} for each role.
The soundness theorem of MSTs states that every projectable global type is \emph{implementable}:
 there is a distributed implementation that is free from communication safety errors such as deadlocks and unexpected messages.

The projection keeps only the operations observable by a given role
and yet maintains the invariant that every choice
can be distinguished in an unambiguous way.
Most current projection operations ensure this invariant by syntactically merging different paths locally for each role.
While these projections are syntactic and efficient, they are also very conservative and disallow many common design patterns in distributed systems.

In this paper, we describe a more general projection for MSTs to address the conservatism of
existing projections.
To motivate our extension, consider a simple load balancing protocol:
a client sends a request to a server and the server forwards the request to one of two workers.
The workers serve the request and directly reply to the client.
(We provide the formal syntax later.)
This common protocol is disallowed by existing MST systems, either because they syntactically disallow
such messages (the \emph{directed choice} restriction that states the sender and recipient must be the same
along every branch of a choice), or because the projection operates only on the global type and disallows
inferred choice.

The key difficulty in projection is to manage the interaction between choice and concurrency in a distributed setting.
Without choice, all roles would just follow one predetermined sequence of send and receive operations.
Introducing choice means a role either decides whom to send which message next,
or reacts to the choices of other roles---even if such choices are not locally visible.
This is only possible when the outcome of every choice propagates unambiguously.
At each point, every role either is agnostic to a prior choice or knows exactly the outcome of the choice,
even though it may only receive information about the choice indirectly through subsequent communication with other roles.
Unfortunately, computing how choice propagates in a system is undecidable in general \cite{DBLP:journals/tcs/AlurEY05};
this is the reason why conservative restrictions are used in practice.

The key insight in our projection operation is to manage the interaction of choice and concurrency via a 
\emph{message causality analysis}, inspired by the theory of communicating state machines (CSMs) and
message sequence charts (MSCs), that provides a more global view.
We resolve choice based on \emph{available messages} along different branches.
The causality analysis provides more information when merging two paths based on expected messages.

We show that our generalised projection subsumes previous approaches that lift the directed choice restriction \cite{DBLP:journals/corr/abs-1203-0780,DBLP:conf/fase/HuY17,DBLP:journals/acta/CastellaniDG19,DBLP:conf/esop/JongmansY20}.
Empirically, it allows us to model and verify common distributed programming patterns such as load balancing,
replicated data, and caching---where a server needs to choose between different workers---that are
not in scope of current MSTs, while preserving the efficiency of projection.

We show type soundness for generalised projection. 
This generalisation is non-trivial, since soundness depends on subtle arguments about asynchronous messages in the system.
We prove the result using an automata-theoretic approach, also inspired by the theory of MSCs, that argues about traces in 
communicating state machines.
Our language-theoretic proof is different from the usual proof-theoretic approaches in soundness proofs of MST systems,
and builds upon technical machinery from the theory of MSCs.

We show empirically that generalised choice is key to modelling several interesting instances in distributed
systems while maintaining the efficiency of more conservative systems.
Our global type specifications go beyond existing examples in the literature of MSTs.

 \section{Multiparty Session Types with Generalised Choice}
\label{sec:local}

In this section, we define global and local types. 
We explain how multiparty session types (MSTs) work and present a shortcoming of current MSTs.
Our MSTs overcome this shortcoming by allowing a role to wait for messages coming from different senders.
We define a new projection operation from global to local types: 
the projection represents global message exchanges from the perspective of a single role.
The key to the new projection is a generalised \emph{merge} operator that prevents confusion between messages from different senders.

\subsection{Global and Local Types}

We describe the syntax of global types following work by Honda et al.~\cite{DBLP:conf/popl/HondaYC08}, Hu and Yoshida~\cite{DBLP:conf/fase/HuY17}, and Scalas and Yoshida~\cite{DBLP:journals/pacmpl/ScalasY19}. 
We focus on the core message-passing aspects of asynchronous MSTs and do not include
features such as delegation. 

\begin{definition}[Syntax]
\emph{Global types for MSTs} are defined by the grammar:
    \begin{grammar}
     G \is
       0
     | \sum_{i ∈ I} \msgFromTo{\procA}{\procB_{i}}{\val_i.G_i}
     | μ t. \; G
     | t
    \end{grammar}\\[-3ex]
where $\procA, \procB_i$ range over a set of roles $\Procs$, $\val_i$ over a set of messages $\MsgVals$, and $t$ over type variables.
\end{definition}

Note that our definition of global types extends the standard syntax (see, e.g., \cite{DBLP:conf/popl/HondaYC08}), which has a 
\emph{directed choice} restriction, requiring that a sender must send messages to the same role along different branches of a choice.
Our syntax $\sum_{i\in I}\msgFromTo{\procA}{\procB_i}{\val_i.G_i}$
allows a sender to send messages to different roles along different branches (as in, e.g., \cite{DBLP:conf/fase/HuY17}).
For readability, we sometimes use the infix operator $+$ for choice, instead of~$\sum$. 
When $|I| = 1$, we omit~$\sum$.

In a global type, the send and the receive operations of a message exchange are specified atomically.
An expression $\msgFromTo{\procA}{\procB}{\val}$ represents two events: 
a \emph{send} $\snd{\procA}{\procB}{\val}$ and a \emph{receive} $\rcv{\procA}{\procB}{\val}$.
We require the sender and receiver processes to be different: $\procA ≠ \procB$.
A \emph{choice}~($\sum$) occurs at the sender role.
Each branch of a choice needs to be uniquely distinguishable: $∀ i,j ∈ I.\, i≠j ⇒ (\procB_{i},\val_i) ≠ (\procB_{j},\val_j)$.
The least fixed point operator encodes loops and we require recursion to be guarded, i.e., in $μ t. \, G$, there is at least one message between $μt$ and each $t$ in $G$. Without loss of generality, we assume that all occurrences of~$t$ are bound and each bound variable~$t$ is distinct.
As the recursion is limited to tail recursion, it is memoryless and
generates regular sequences, so a global type can be interpreted as a regular language of message exchanges.

\begin{figure}[t]
\begin{center}
\begin{subfigure}[b]{0.33\textwidth}
    \includegraphics[width=1.00\textwidth]{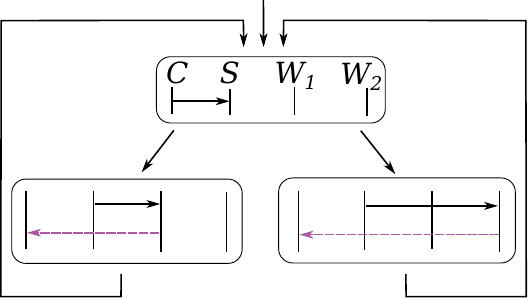}
    \caption{Load balancing}
    \label{fig:load-balance}
\end{subfigure}
    \hfill
\begin{subfigure}[b]{0.34\textwidth}
    \includegraphics[width=0.95\textwidth]{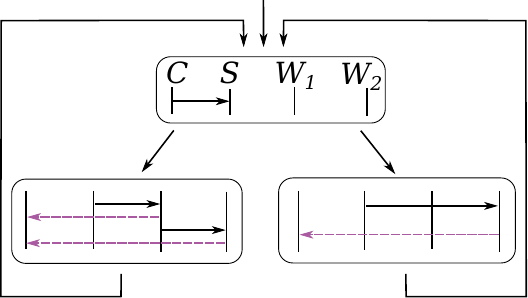}
    \caption{Variant of load balancing}\label{fig:load-balance-confusion}
\end{subfigure}
    \hfill
\begin{subfigure}[b]{0.15\textwidth}
    \includegraphics[width=0.95\textwidth]{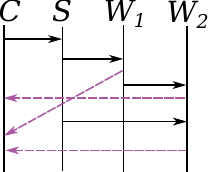}
    \caption{Execution} \label{fig:load-balance-confusion-execution}
\end{subfigure}
    \caption{Load balancing and some variant with potential for confusion exemplified by an execution}
    \label{fig:load-balance-confusion-combined}
\end{center}
\end{figure}

\begin{example}[Load balancing]
\label{ex:load-balancing}

A simple load balancing scenario can be modelled with the global type:
~~ {\small
$
\mu t. \; \msgFromTo{\mathsf{Client}}{\mathsf{Server}}{\mathit{req}}.
  + \begin{cases}
    \msgFromTo{\mathsf{Server}}{\mathsf{Worker_1}}{\mathit{req}}. \;\msgFromTo{\mathsf{Worker_1}}{\mathsf{Client}}{\mathit{reply}}.\; t \\
    \msgFromTo{\mathsf{Server}}{\mathsf{Worker_2}}{\mathit{req}}. \;\msgFromTo{\mathsf{Worker_2}}{\mathsf{Client}}{\mathit{reply}}.\; t
    \end{cases}
$}

The least fixed point operator $μ$ encodes a loop in which a client sends a request to a server.
The server then non-deterministically forwards the request to one of two workers.
The chosen worker handles the request and replies to the client.
In this protocol, the server communicates with a different worker in each branch. \cref{fig:load-balance} shows this example as a high-level message sequence chart (HMSC).
The timeline of roles is shown with vertical lines and the messages with horizontal arrows.
Different message contents are represented by different styles of arrows.
\qed
\end{example}

Next, we define local types, which specify a role's view of a protocol.

\begin{definition}[Local types]
\label{def:local-type}
The \emph{local types} for a role $\procA$ are defined as: 
    \begin{grammar}
     L \is 0
         | \IntCh_{i ∈ I} \snd{}{\procB_i}{\val_i}.L_i
         | \ExtCh_{i ∈ I} \rcv{\procB_{i}}{}{\val_i}.L_i
         | μ t. L
         | t
    \end{grammar}
where the internal choice $(\IntCh)$ and  external choice $(\ExtCh)$ both respect
$∀ i,j ∈ I.\; i≠j ⇒ (\procB_i, \val_i) ≠ (\procB_j, \val_j)$.
As for global types,
we assume every recursion variable is bound, each recursion
operator ($μ$) uses a different identifier $t$,
and we may omit $\IntCh$ and $\ExtCh$ if $\card{I} = 1$. \end{definition}

Note that a role can send to, resp.\ receive from, multiple roles in a choice:
we generalise $\IntCh_{i ∈ I} \snd{}{\procB}{\val_i}.L_i$ of standard MSTs to $\IntCh_{i ∈ I} \snd{}{\procB_{\textcolor{red}{i}}}{\val_i}.L_i$ and
              $\ExtCh_{i ∈ I} \rcv{\procB}{}{\val_i}.L_i$ to $\ExtCh_{i ∈ I} \rcv{\procB_{\textcolor{red}{i}}}{}{\val_i}.L_i$.

\begin{example}\label{ex:load-balance-local-types}
We can give the following local types for  \cref{fig:load-balance}:
 
\vspace{1ex}
{\small
\hspace{5mm}
$
\begin{array}{rl}
\mathsf{Server} : &
μ t.\, \rcv{\mathsf{Client}}{}{\mathit{req}}.\,
    \left(
    \snd{}{\mathsf{Worker_1}}{\mathit{req}}.\, t
  \; \IntCh \;
    \snd{}{\mathsf{Worker_2}}{\mathit{req}}.\, t
    \right)
    \\
\mathsf{Client} : &
μ t.\, \snd{}{\mathsf{Server}}{\mathit{req}}.\,
    \left(
    \rcv{\mathsf{Worker_1}}{}{\mathit{reply}}.\, t
    \; \ExtCh \;
    \rcv{\mathsf{Worker_2}}{}{\mathit{reply}}.\, t
    \right)
\\
\mathsf{Worker_{i}} : &
μ t.\, \rcv{\mathsf{Server}}{}{\mathit{req}}.\,
    \snd{}{\mathsf{Client}}{\mathit{reply}}.\, t
\text{ for }
  i \in \set{1, 2}
\end{array}
$
}
\vspace{1ex}

\noindent
Note that their structure, i.e., having a loop with at most two options, resembles the one of the global type in \cref{ex:load-balancing}.
\qed
\end{example}

Our goal is to define a partial \emph{projection} operation from a given global type to a local type for each role.
If the projection is defined, we expect that the type is \emph{implementable}.
We shall show that the global type of \cref{ex:load-balancing} projects to the local types in \cref{ex:load-balance-local-types}.
As a consequence, the global type is implementable.
Intuitively, when each role in the example executes based on its local type, they agree on a unique global path in an unrolling of the global type.
We formalise projection and soundness in Section~\ref{sec:soundness}.
We note that existing projection operations, including the ones by Hu and Yoshida~\cite{DBLP:conf/fase/HuY17} as well as Scalas and Yoshida~\cite{DBLP:journals/pacmpl/ScalasY19},
reject the above global type as not implementable.

\subparagraph*{Notations and Assumptions.} 
We write $\GG$ for the global type we try to project.
When traversing the global type $\GG$, we use $G$ for the current term (which is a subterm of $\GG$).
To simplify the notation, we assume that the index $i$ of a choice uniquely determines the sender and the message $\rcv{\procB_i}{}{\val_i}$.
Using this notation, we write $I ∩ J$ to select the set of choices with identical sender and message value and $I \setminus J$ to select the alternatives present in $I$ but not in $J$.
When looking at send and receive events in a global setting we write $\snd{\procA}{\procB}{\val}$ for $\procA$ sending to~$\procB$ and $\rcv{\procA}{\procB}{\val}$ for $\procB$ receiving from $\procA$.

In later definitions, we unfold the recursion in types.
We could get the unfolding through a congruence relation.
However, this requires dealing with infinite structures, which makes some definitions not effective. Instead, we precompute the map from each recursion variable~$t$ to its unfolding.
For a given global type, let $\getMu$ be a function that returns a map from $t$ to~$G$ for each subterm $μt.\,G$.
Recall, each $t$ in a type is different.
$\getMu$ is defined as follows:

{} \hfill
$
  \getMu(0) \is [\,]
$ \hfill $
  \getMu(t) \is [\,]
$ \hfill $
  \getMu(μt.G) \is [t \mapsto G] ∪ \getMu(G)
$  \hfill {}

{} \hfill $
  \getMu(\sum_{i ∈ I} \msgFromTo{\procA}{\procB_i}{\val_i.G_i}) \is \bigcup_{i∈I} \getMu(G_i)
$  \hfill {}

\noindent
We write $\getMuG$ as shorthand for the map returned by $\getMu(\GG)$.

\subsection{Generalised Projection and Merge}
\label{sec:projection-merge}

We now define a partial \emph{projection} operation
that projects a global type on to each role.
The projection on to a role $\procC$ is a local type and keeps only $\procC$'s actions.
Intuitively, it gives the ``local view'' of message exchanges performed by $\procC$.
While projecting, non-determinism may arise due to choices that $\procC$ does not observe directly.
In this case, the different branches are merged using a partial \emph{merge} operator ($\merge$). 
The merge operator checks that a role, which has not yet learned the outcome of a choice,
only performs actions that are allowed in all possible branches.
The role can perform branch-specific actions after it has received a message that resolves the choice. 
For a role that is agnostic to the choice, i.e., behaves the same on all the branches, 
the merge allows the role to proceed as if the choice does not exist.

So far, the idea follows standard asynchronous MSTs.
What distinguishes our new projection operator from prior ones (e.g., \cite{DBLP:conf/fase/HuY17,DBLP:journals/pacmpl/ScalasY19}), 
is that we allow a role to learn which branch has been taken through messages received from different senders. 
This generalisation is non-trivial.
When limiting the reception to messages from a single role, one can rely on the FIFO order provided by the corresponding channel.
However, messages coming from different sources are only partially ordered.
Thus, unlike previous approaches, our merge operator looks at the result of a \emph{causality analysis} on the 
global type to make sure that this partial ordering cannot introduce any confusion.

\begin{example}[Intricacies of generalising projection]
\label{ex:extended-proj}
\label{ex:need-for-extended-projection}
We demonstrate that a straightforward generalisation of existing projection operators can lead to unsoundness.
Consider a \emph{naive} projection that merges branches with internal choice if they are equal,
and for receives, simply always merges external choices -- also from different senders.
In addition, it removes empty loops. 
For \cref{fig:load-balance}, this naive projection yields the expected local types presented in \cref{ex:load-balance-local-types}.
We show that naive projection can be unsound.
\cref{fig:load-balance-confusion} shows a variant of load balancing, for which naive projection yields the following local types:

\vspace{1ex}
\noindent
\hspace{5mm}
$
\begin{array}{rl}
\mathsf{Server} : &
μ t.\, \rcv{\mathsf{Client}}{}{\mathit{req}}.\,
    \left(
    \snd{}{\mathsf{Worker_1}}{\mathit{req}}.\, t
  \; \IntCh \;
    \snd{}{\mathsf{Worker_2}}{\mathit{req}}.\, t
    \right)
    \\
\mathsf{Client} : &
μ t.\,
  \snd{}{\mathsf{Server}}{\mathit{req}}.\,\left(
      \rcv{\mathsf{Worker_1}}{}{\mathit{reply}}.\,\rcv{\mathsf{Worker_2}}{}{\mathit{reply}}.\,t
    \; \& \;
      \rcv{\mathsf{Worker_2}}{}{\mathit{reply}}.\,t
  \right)
\\
\mathsf{Worker_{1}} : &
μ t.\,
    \rcv{\mathsf{Server}}{}{\mathit{req}}.\,
    \snd{}{\mathsf{Client}}{\mathit{reply}}.\,
    \snd{}{\mathsf{Worker_2}}{\mathit{req}}.\, t
\\
\mathsf{Worker_{2}} : &
μ t.\,
\left(
        \rcv{\mathsf{Worker_1}}{}{\mathit{req}}.\,
        \snd{}{\mathsf{Client}}{\mathit{reply}}.\,t
        \; \& \;
        \rcv{\mathsf{Server}}{}{\mathit{req}}.\,
        \snd{}{\mathsf{Client}}{\mathit{reply}}.\,t
    \right)
\end{array}
$
\vspace{1ex}

\noindent
Unfortunately, the global type is not implementable. The problem is that, for the $\mathit{Client}$, the two messages on its left branch are not causally related.
Consider the execution prefix in \cref{fig:load-balance-confusion-execution} which is not specified in the global type.
The $\mathsf{Server}$ decided to first take the left ($L$) and then the right ($R$) branch.
For $\mathsf{Server}$, the order $LR$ is obvious from its events and the same applies for $\mathsf{Worker_2}$.
For $\mathsf{Worker_1}$, every possible order $R^*LR^*$ is plausible as it does not have events in the 
right branch. Since $LR$ belongs to the set of plausible orders, there is no confusion. Now, the messages from the two workers to the client are independent and, therefore, 
can be received in any order.
If the client receives $\rcv{\mathit{Worker_2}}{}{\mathit{reply}}$ first, then 
its local view is consistent with the choice $RL$ as the order of branches.
This can lead to confusion and, thus, execution prefixes which are not specified in the global type. 
\qed
\end{example}

We shall now define our generalised projection operation.
To identify confusion as above, we keep track of causality between messages.
We determine what messages a role \emph{could} receive at a given point in the global type through an \emph{available messages} analysis.
Tracking causality needs to be done at the level of the global type.
We look for chains of dependent messages and we also need to unfold loops.
Fortunately, since we only check for the presence or absence of some messages, it is sufficient to unfold each recursion at most once.

\subparagraph*{Projection and Interferences from Independent Messages.}
The challenge of projecting a global type lies in resolving the non-determinism introduced by having only the endpoint~view.
\cref{ex:extended-proj} shows that in order to decide if a choice is safe, we need to know which messages can arrive at the same time.
To enable this, we annotate local types with  the messages that could be received at that point in the protocol.
We call these \emph{availability annotated local types} and write them as $\mathit{AL} = \apair{L}{\mathit{Msg}}$ where
    $L$ is a local type and
    $\mathit{Msg}$ is a set of messages.
This signifies that when a role has reached $\mathit{AL}$, the messages in $\mathit{Msg}$ can be present in the communication channels.
We annotate types using the  grammar for local types (\cref{def:local-type}), where each subterm is annotated.
To recover a local type, we erase the annotation, i.e., recursively replace each $\mathit{AL} = \apair{L}{\mathit{Msg}}$ by $L$.
The projection internally uses annotated types.

The projection of $\GG$ on to $\procC$, written $\GG \tproj_\procC$, 
traverses $\GG$ to erase the operations that do not involve $\procC$. 
During this phase, we also compute the messages that $\procC$ may receive.
The function $\avail(B, T, G)$ computes the set of messages that other roles can send while $\procC$ has not yet learned the outcome of the choice.
This set depends on
$B$, the set of \emph{blocked} roles, i.e., the roles which are waiting to receive a message and hence cannot move;
$T$, the set of recursion variables we have already visited; and
$G$, the subterm in $\GG$ at which we compute the available messages.
We defer the definition of $\avail(B, T, G)$ to later in this section.
\subparagraph*{Empty Paths Elimination.}
When projecting, there may be paths and loops where a role neither sends nor receives a message, e.g., the right loop in \cref{ex:load-balancing} for $\mathsf{Worker_1}$.
Such paths can be removed during projection.
Even if conceptually simple, the notational overhead impedes understandability of how our message availability analysis is used.
Therefore, we first focus on the message availability analysis and define a projection operation that does not account for empty paths elimination.
After defining the merge operator $\merge$, we give the full definition of our generalised projection operation.

\begin{definition}[Projection without empty paths elimination]
The \emph{projection without empty paths elimination} $\GG \tproj_\procC$
of a global type $\GG$ on to a role $\procC\in \Procs$ 
is an availability annotated local type
inductively defined as:
\vspace{1.5ex}

\begin{small}
$
 t \tproj_\procC \is \apair{t}{\avail(\set{\procC},\set{t},\getMuG(t))}
$
 \hfill
$
 0 \tproj_\procC \is \apair{0}{∅}
$
 \hfill {}

 \vspace{1mm}
$
 \left( μ t. G \right)
 \tproj_\procC \is
 \begin{cases}
    \apair{μ t. (G \tproj_\procC)}{\avail(\set{\procC},\set{t},G)}
        & \text{if } G \tproj_\procC \neq \apair{t}{\_} \\ \apair{0}{∅}
        & \text{otherwise}
 \end{cases}
$

$
 \left( \sum_{i ∈ I} \msgFromTo{\procA}{\procB_i}{\val_i.G_i} \right)
 \tproj_\procC \is
 \begin{cases}
    \apair{
        \IntCh_{i ∈ I} \snd{}{\procB_i}{\val_i}.(G_i \tproj_\procC)
    }{
        \bigcup_{i∈I} \avail(\set{\procB_i,\procC},∅,G_i)
    } 
    \hfill \text{ if } \procC = \procA \\[1mm]
    \merge
    \left(
    \begin{array}{l}
        \apair{
            \ExtCh_{i \in I_{[=\procC]}} 
            \rcv{\procA}{}{\val_i}.(G_i \tproj_\procC)
        }{
            \bigcup_{i∈I_{[=\procC]}} \avail(\set{\procC},∅,G_i)
        }\\
        \Merge_{i \in I_{[\neq\procC]}} 
            G_i \tproj_\procC
    \end{array}
    \right) \hfill \quad \text{otherwise}
 \end{cases}
 $
 \\
 \phantom{sometext}
 \hspace{28mm}
 $
    \begin{small}
        \text{ where }
        I_{[= \procC]} \is \set{i \in I \mid \procB_i = \procC}
        \text{ and }
        I_{[\neq \procC]} \is \set{i \in I \mid \procB_i \neq \procC}
    \end{small}
$
\end{small}

A global type $\GG$ is said to be \emph{\projectable} if $\GG \tproj_\procC$ is defined for every $\procC \in \Procs$.
\end{definition}

Projection erases events not relevant to $\procC$ by a recursive traversal of the global type;
however, at a choice not involving $\procC$, it has to ensure that
either $\procC$ is indifferent to the outcome of the choice or
it indirectly receives enough information to distinguish the outcome. 
This is managed by the merge operator $\merge$ and the use of available messages.
The merge operator takes as arguments a sequence of availability annotated local types.
Our merge operator generalises the full merge by Scalas and Yoshida~\cite{DBLP:journals/pacmpl/ScalasY19}.
When faced with choice, it only merges receptions that cannot interfere with each other.
For the sake of clarity, we define only the binary merge.
As the operator is commutative and associative, it generalises to a set of branches $I$.
When $I$ is a singleton, the merge just returns that one branch.

\begin{definition}[Merge operator $\merge$]
\label{def:merge-operator-wo-check-avail}
Let $\apair{L₁}{\mathit{Msg}₁}$ and $\apair{L₂}{\mathit{Msg}₂}$ be availability annotated local types for a role $\procC$.
$\apair{L₁}{\mathit{Msg}₁} \Merge \,\apair{L₂}{\mathit{Msg}₂}$ is defined by cases, as follows:
\begin{small}
\begin{itemize}
\item $\apair{L₁}{\mathit{Msg}₁ ∪ \mathit{Msg}₂}$ $\;$ if $L₁ = L₂$
\item $\apair{μt₁.(\mathit{AL}₁ \merge \mathit{AL}₂[t₂/t₁])}{\mathit{Msg}₁ ∪ \mathit{Msg}₂}$
        $\;$
        if $L₁ =  μt₁.\mathit{AL}₁ \text{, } L₂ = μt₂.\mathit{AL}₂$
\item $\apair{\IntCh_{i \in I} \snd{}{\procB_i}{m_i}.(\mathit{AL}_{1,i} \merge \mathit{AL}_{2,i})}{\mathit{Msg}₁ ∪ \mathit{Msg}₂}$
        $\;$
        if $\begin{cases}
                L₁ = \IntCh_{i ∈ I} \snd{}{\procB_i}{\val_i.\mathit{AL}_{1,i}} \text{, } \\
                L₂ = \IntCh_{i ∈ I} \snd{}{\procB_i}{\val_i.\mathit{AL}_{2,i}}
           \end{cases}$
\item
$\begin{array}{llr}
\langle &   \ExtCh_{i \in I \setminus J} \rcv{\procB_i}{}{m_i}.\mathit{AL}_{1,i} & \ExtCh \\
        &   \ExtCh_{i \in I∩J} \rcv{\procB_i}{}{m_i}.(\mathit{AL}_{1,i} \merge \mathit{AL}_{2,i}) & \ExtCh \\
        &   \ExtCh_{i \in J \setminus I} \rcv{\procB_i}{}{m_i}.\mathit{AL}_{2,i} & , \\
        & \mathit{Msg}₁ ∪ \mathit{Msg}₂ \quad \rangle
\end{array}$
        $\;$
       if $\begin{cases}
            L₁ = \ExtCh_{i ∈ I} \rcv{\procB_i}{}{\val_i.\mathit{AL}_{1,i}} \text{, } \\
            L₂ = \ExtCh_{i ∈ J} \rcv{\procB_i}{}{\val_i.\mathit{AL}_{2,i}} \text{, } \\
            ∀ i∈ I \setminus J.\, \rcv{\procB_i}{\procC}{\val_i} \notin \mathit{Msg}₂ \text{, } \\
            ∀ i∈ J \setminus I.\, \rcv{\procB_i}{\procC}{\val_i} \notin \mathit{Msg}₁
          \end{cases}$
\end{itemize}
\end{small}
\end{definition}
When no condition applies, the merge and, thus, the projection are undefined.\footnote{
When we use the n-ary notation $\Merge_{i∈I}$ and $\card{I} = 0$, we implicitly omit this part.
Note that this can only happen if $\procC$ is the receiver among all branches for some choice so there is either another local type to merge with, or the projection is undefined anyway.
}

The important case of the merge is the external choice.
Here, when a role can potentially receive a message that is unique to a branch, it checks that the message cannot be available in another branch so actually being able to receive this message uniquely determines which branch was taken by the role to choose.
For the other cases, a role can postpone learning the branch as long as the actions on both branches are the same.

\subparagraph*{Adding Empty Paths Elimination.}
The preliminary version of projection requires every role to have at least one event in each branch of a loop and, thus, rejects examples where a role has no event in some loop branch.
Such paths can be eliminated.
However, determining such empty paths cannot be done on the level of the merge operator but only when projecting.
To this end, we introduce an additional parameter $E$ for the generalised projection: $E$ contains those variables $t$ for which $\procC$ has not observed any message send or receive event since $μt$.

\begin{definition}[Generalised projection -- with empty paths elimination]
\label{def:gen-projection}
The \emph{projection} $\GG \tproj^E_\procC$
of a global type $\GG$ on to a role $\procC\in \Procs$ is an availability annotated local type
which is inductively defined as follows:
\vspace{1ex}

\begin{footnotesize}
$
 t \tproj^{\textcolor{red}{E}}_\procC \is \apair{t}{\avail(\set{\procC},\set{t},\getMuG(t))}
$
 \hfill
$
 0 \tproj^{\textcolor{red}{E}}_\procC \is \apair{0}{∅}
$
 \hfill {}
 
 \vspace{1mm}
$
 \left( μ t. G \right)
 \tproj^{\textcolor{red}{E}}_\procC \is
 \begin{cases}
    \apair{μ t. (G \tproj^{\textcolor{red}{E ∪ \set{t}}}_\procC)}{\avail(\set{\procC},\set{t},G)}
        & \text{if } G \tproj^{\textcolor{red}{E ∪ \set{t}}}_\procC \neq \apair{t}{\_} \\
    \apair{0}{∅}
        & \text{otherwise}
 \end{cases}
 $ 
 
 \vspace{1mm}
 $
 \left( \sum_{i ∈ I} \msgFromTo{\procA}{\procB_i}{\val_i.G_i} \right)
 \tproj^{\textcolor{red}{E}}_\procC \is
 \begin{cases}
    \apair{
        \IntCh_{i ∈ I} \snd{}{\procB_i}{\val_i}.(G_i \tproj^{\textcolor{red}{∅}}_\procC)
    }{
        \bigcup_{i∈I} \avail(\set{\procB_i,\procC},∅,G_i)
    }
\hfill \text{if } \procC = \procA \\
    \merge
    \left(
    \begin{array}{l}
        \apair{
            \ExtCh_{i \in I_{[=\procC]}} 
            \rcv{\procA}{}{\val_i}.(G_i \tproj^{\textcolor{red}{∅}}_\procC)
        }{
            \bigcup_{i \in I_{[=\procC]}} \avail(\set{\procC},∅,G_i)
        }\\
        \Merge_{\,i \in I_{[\neq\procC]} \, 
        {\textcolor{red}{∧ \, ∀ t ∈ E.\, G_i \tproj^E_\procC ≠ \apair{t}{\_}}}}
        G_i \tproj^{\textcolor{red}{E}}_\procC
    \end{array}
    \right) \hfill \quad \text{otherwise} 
 \end{cases}
 $
 \phantom{sometext}
 \hspace{29mm}
 $
    \begin{small}
        \text{ where }
        I_{[= \procC]} \is \set{i \in I \mid \procB_i = \procC}
        \text{ and }
        I_{[\neq \procC]} \is \set{i \in I \mid \procB_i \neq \procC}
    \end{small}
$
\end{footnotesize}

Since the merge operator $\Merge$ is partial, the projection may be undefined.
We use $\GG \tproj_\procC$ as shorthand for $\GG \tproj^∅_\procC$ and 
only consider the generalised projection with empty paths elimination from now on.
A global type $\GG$ is called \emph{\projectable} if $\GG \tproj_\procC$ is defined for every role $\procC \in \Procs$.
\end{definition}

We highlight the differences for the empty paths elimination.
Recall that $E$ contains all recursion variables from which the role $\procC$ has not encountered any events.
To guarantee this, for the case of recursion $\mu t.\, G$, the (unique) variable $t$ is added to the current set $E$, while the parameter turns to the empty set $\emptyset$ as soon as $\procC$ encounters an event.
The previous steps basically constitute the necessary bookkeeping.
The actual elimination is achieved with the condition 
$
∀ t ∈ E.\, G_i \tproj^E_\procC ≠ \apair{t}{\_}
$
which filters all branches without events of role~$\procC$.

Other works \cite{DBLP:conf/fase/HuY17,DBLP:journals/acta/CastellaniDG19} achieve this with \emph{connecting actions}, marking non-empty paths.
Like classical MSTs, we do not include such explicit actions.
Still, we can automatically eliminate such paths in contrast to previous work.

\subparagraph*{Computing Available Messages.}
Finally, the function $\avail$ is computed recursively: \begin{center}
\begin{minipage}{0.98\linewidth}
\begin{footnotesize}
\noindent
$
\avail(B, T, 0) \is ∅ \\\avail(B, T, μt.G) \is \avail(B, T ∪ \set{t}, G) \\
\avail(B, T, t) \is \begin{cases}
∅ & \text{if} ~ t ∈ T\\
\avail(B, T ∪ \set{t}, \getMuG(t)) & \text{if} ~ t ∉ T
\end{cases}\\
\avail(B, T, \sum_{i ∈ I} \msgFromTo{\procA}{\procB_i}{\val_i.G_i}) \is \begin{cases}
\bigcup_{i∈I,m∈\MsgVals} (\avail(B, T, G_i) \setminus \set{ \rcv{\procA}{\procB_i}{\val} }) ∪  \set{ \rcv{\procA}{\procB_i}{\val_i} } \quad \hfill \text{if} ~ \procA ∉ B\\
\bigcup_{i∈I} \avail(B ∪ \set{ \procB_i }, T, G_i) \quad \hfill \text{if} ~ \procA ∈ B 
\end{cases}
$
\end{footnotesize}
\end{minipage}
\end{center}

Since all channels are FIFO, we only keep the first possible message in each channel.
The fourth case discards messages not at the head of the channel.

Our projection is different from the one of Scalas and Yoshida \cite{DBLP:journals/pacmpl/ScalasY19}, not just because our syntax is more general.
It also represents a shift in paradigm.
In their work, the full merge works only on local types.
No additional knowledge is required.
This is  possible because their type system limits the flexibility of communication.
Since we allow more flexible communication, we need to keep some information, in form of available messages, about the possible global executions for the merge operator.

\begin{figure}[t]
\centering
\begin{subfigure}[b]{0.35\textwidth}
    \includegraphics[width=\textwidth]{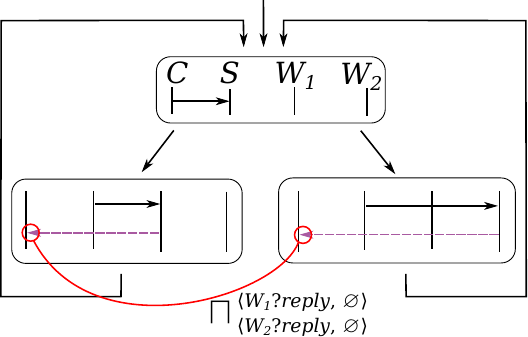}
    \caption{Merging for \cref{fig:load-balance}}
    \label{fig:merge-1}
\end{subfigure}
\hfill
\begin{subfigure}[b]{0.46\textwidth}
    \includegraphics[width=\textwidth]{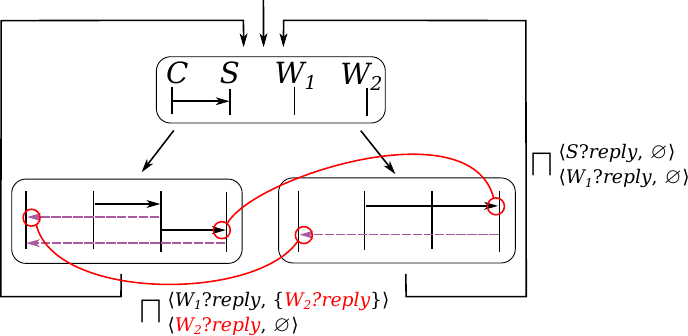}
    \caption{Merging for \cref{fig:load-balance-confusion} \phantom{some text}}
    \label{fig:merge-2}
\end{subfigure}
\caption{
    Availability annotated types for merging on the two examples.
    The red lines connect the receptions that get merged during projection.
    The annotations only show the receiver's messages.
}
\vspace{-1em}
\label{fig:simple-choice-avail}
\end{figure}

\begin{example}
Let us explain how our projection operator catches the problem in $\GG \tproj_{\mathit{Client}}$ of \cref{fig:load-balance-confusion}.
\cref{fig:simple-choice-avail} shows the function of available messages during the projection for \cref{fig:load-balance} and \cref{fig:load-balance-confusion}. 
In \cref{fig:merge-1}, the messages form chains, i.e., except for the role making the choice, a role only sends in reaction to another message.
Therefore, only a single message is available at each reception and the protocol is \projectable.
On the other hand, in \cref{fig:merge-2} both replies are available and, therefore, the protocol is not \projectable.

Here are the details of the projection for \cref{fig:merge-2}.
If not needed, we omit the availability annotations for readability.
Recall that $\mathsf{Client}$ receives $\mathit{reply}$ from $\mathsf{Worker_2}$ in the left branch, which is also present in the right branch.
Let us denote the two branches as follows:

\vspace{1ex}
\noindent
{\small
\hspace{5mm}
$
\begin{array}{ll}
 G_1  \is &
    \msgFromTo{\mathsf{Server}}{\mathsf{Worker_1}}{\mathit{req}}. \;
    \msgFromTo{\mathsf{Worker_1}}{\mathsf{Client}}{\mathit{reply}}. \; \\
    &
    \phantom{
    \msgFromTo{\mathsf{Server}}{\mathsf{Worker_1}}{\mathit{req}}. \;
    }
    \msgFromTo{\mathsf{Worker_1}}{\mathsf{Worker_2}}{\mathit{req}}. \;
    \msgFromTo{\mathsf{Worker_2}}{\mathsf{Client}}{\mathit{reply}}. \;
    t,
    \text{ and}\\
 G_2  \is &
    \msgFromTo{\mathsf{Server}}{\mathsf{Worker_2}}{\mathit{req}}. \;
    \msgFromTo{\mathsf{Worker_2}}{\mathsf{Client}}{\mathit{reply}}.\;
    t
\end{array}
$
}
\vspace{1ex}
\\
\noindent
Since the first message in $G_1$ does not involve $\mathsf{Client}$, the projection descends and we compute:

\vspace{1ex}
{ \small
\hspace{5mm}
$
\begin{array}{cl}
 \GG \tproj_\mathsf{Client} & = \langle μ t.(
            \apair{\rcv{\mathsf{Worker_1}}{}{\mathit{reply}}.\,(G'_1 \tproj_\mathsf{Client})}{\,\avail(\set{\mathsf{Client}}, \emptyset, G'_1)} \\
            & ~ \qquad \merge \hspace{0.2mm}
            \apair{\rcv{\mathsf{Worker_2}}{}{\mathit{reply}}.\,(G'_2 \tproj_\mathsf{Client})}{\,\avail(\set{\mathsf{Client}}, \emptyset, G'_2)} ), \_ \rangle \\
  \text{where } & G'_1 = 
  \msgFromTo{\mathsf{Worker_1}}{\mathsf{Worker_2}}{\mathit{req}}. \;\msgFromTo{\mathsf{Worker_2}}{\mathsf{Client}}{\mathit{reply}}.\; t 
  \text{ and } G'_2 = t.
\end{array}
$
}
\vspace{1ex}
\\
\noindent
For this, we compute both availability annotations:
$\avail(\set{\mathsf{Client}}, \emptyset, G'_2) = \emptyset$
which is empty as all message exchanges in the recursion ($t$) initiate from $\mathsf{Client}$ which is blocked;
and
$\avail(\set{\mathsf{Client}}, \emptyset, G'_1) = \set{\rcv{\mathsf{Worker_1}}{\mathsf{Worker_2}}{\mathit{req}}, \rcv{\mathsf{Worker_2}}{\mathsf{Client}}{\mathit{reply}}}$
which contains both receptions from $G'_1$ only since no receptions from the recursion are added for the same reason.
The conditions are not satisfied as the message for $\mathsf{Client}$ in the second branch is available in the first one:
$\rcv{\mathsf{Worker_2}}{\mathsf{Client}}{\mathit{reply}} \in \avail(\set{\mathsf{Client}}, \emptyset, G'_1)$.
Thus, the projection is undefined.
\qed
\end{example}

 \section{Type Soundness}
\label{sec:soundness}

We now show a soundness theorem for generalised projection; roughly, a \projectable global type can be implemented by communicating state machines
in a distributed way.
Our proof uses automata-theoretic techniques from the theory of MSCs.
We assume familiarity with the basics of formal languages.

As our running example shows, a protocol implementation often cannot enforce the event ordering specified in the type but only a weaker order.
In this section, we capture both notions through a type language and an execution language.

\subsection{Type Languages}

A \emph{state machine} $A = (Q, \Sigma, \delta, q_{0}, F)$
consists of
a finite set $Q$ of states,
an alphabet $\Sigma$,
a transition relation $\delta \subseteq Q \times (\Sigma \union \set{\emptystring}) \times Q$,
an initial state $q_{0}\in Q$, and
a set $F \subseteq Q$ of final states.
We write $q \xrightarrow{x} q'$ for $(q, x, q')\in\delta$.
We define the runs and traces in the standard way.
A run is maximal if it is infinite or if it ends at a final state.
The \emph{language} $\lang(A)$ is the set of (finite or infinite) maximal traces.
The projection $A \wproj_{\Delta}$ of a state machine is its projection to a sub-alphabet $\Delta\subseteq \Sigma$ obtained
by replacing all letters in $\Sigma\setminus \Delta$ with $\emptystring$-transitions.
It accepts the language $\lang(A)\wproj_\Delta = \set{w\wproj_\Delta \mid w \in \lang(A)}$.

\begin{definition}[Type language for global types]
\label{def:language-global-mst}
The semantics of a global type $\GG$ is given as a regular language.
We construct a state machine $\semglobal(\GG)$ using an auxiliary state machine $M(\GG)$.
First, we define $M(\GG) = (Q_{M(\GG)}, \Sigma_{\mathit{sync}}, δ_{M(\GG)}, q_{0 M(\GG)}, F_{M(\GG)})$ where
\vspace{-1ex}
\begin{itemize}
\item $Q_{M(\GG)}$ is the set of all syntactic subterms in $\GG$ together with the term $0$,
\item $Σ_{\mathit{sync}} = \set{ \msgFromTo{\procA}{\procB}{\val} \mid \procA,\procB ∈ \Procs \text{ and } \val ∈ \MsgVals}$,
\item $δ_{M(\GG)}$ is the smallest set containing
            $(\sum_{i ∈ I} \msgFromTo{\procA}{\procB_i}{\val_i.G_i}, \msgFromTo{\procA}{\procB_i}{\val_i}, G_i)$ for each $i ∈ I$,
            as well as $(μ t. G', ε, G')$ and $(t, ε, μ t. G')$ for each subterm~$\mu t.G'$ of~$\GG$,
\item $q_{0 M(\GG)} = \GG$ and 
$F_{M(\GG)} = \set{0}$.
\end{itemize}

Next, we expand each message $\msgFromTo{\procA}{\procB}{\val}$ into two events, $\snd{\procA}{\procB}{\val}$ followed by $\rcv{\procA}{\procB}{\val}$.
We define $\semglobal(\GG) = (Q_{\semglobal(\GG)}, \Sigma_{\semglobal(\GG)}, \delta_{\semglobal(\GG)}, q_{0 \semglobal(\GG)}, F_{\semglobal(\GG)})$ as follows:
\vspace{-1ex}
\begin{itemize}
\item $Q_{\semglobal(\GG)} = Q_{M(\GG)} ∪ (Q_{M(\GG)} × Σ_{\mathit{sync}} × Q_{M(\GG)})$,
\item $Σ_{\semglobal(\GG)} = \set{ \snd{\procA}{\procB}{\val} \mid \procA, \procB ∈ \Procs, \val ∈ \MsgVals} \cup \set{ \rcv{\procA}{\procB}{\val} \mid \procA, \procB ∈ \Procs, \val ∈ \MsgVals}$,
\item $δ_{\semglobal(\GG)}$ is the smallest set containing the
  transitions
  $(s,\snd{\procA}{\procB}{\val},(s,\msgFromTo{\procA}{\procB}{\val},s'))$
  and $((s,\msgFromTo{\procA}{\procB}{\val},s'),\rcv{\procA}{\procB}{\val},
  s'))$ for each transition $(s,\msgFromTo{\procA}{\procB}{\val},s') ∈
  δ_{M(\GG)}$,
\item $q_{0\semglobal(\GG)} = q_{0{M(\GG)}}$
and
$F_{\semglobal(\GG)} = F_{M(\GG)}$. 
\end{itemize}

The \emph{type language} $\lang(\GG)$ of a global type $\GG$ is given by $\lang(\semglobal(\GG))$.
\end{definition}

\begin{definition}[Type language for local types]
\label{def:language-local-mst}
Given a local type $L$ for $\procA$, we construct a state machine $\semlocal(L) = (Q, \Sigma_\procA, δ, q₀, F)$ where
\vspace{-1ex}
\begin{itemize}
\item $Q$ is the set of all syntactic subterms in $L$, 
\item $\Sigma_\procA = \set{ \snd{\procA}{\procB}{\val} \mid \procB ∈ \Procs, \val ∈ \MsgVals} \cup \set{ \rcv{\procB}{\procA}{\val} \mid \procA, \procB ∈ \Procs, \val ∈ \MsgVals}$,
\item $δ$ is the smallest set containing \\
            $(\IntCh_{i ∈ I} \snd{}{\procB_i}{\val_i.L_i}, \snd{\procA}{\procB_i}{\val_i}, L_i)$ and
            $(\ExtCh_{i ∈ I} \rcv{\procB_i}{}{\val_i.L_i}, \rcv{\procB_i}{\procA}{\val_i}, L_i)$ for each $i ∈ I$, 
            \\ 
            as well as 
            $(μ t. L', ε, L')$
            and $(t, ε, μ t. L')$
	    for each $\mu t.L'$ in $L$,
\item $q₀ = L$ and
$F = \set{0}$ if $0$ is a subterm of $L$, and empty otherwise.
\end{itemize}
We define the \emph{type language} of $L$ as language of this automaton:
$\lang(L) = \lang(\semlocal(L))$.
\end{definition}

\subsection{Implementability}
\label{sec:implementability}

An \emph{implementation} consists of a set of state machines, one per role, communicating with each other through
asynchronous messages and pairwise FIFO channels.
We use communicating state machines (CSMs) \cite{DBLP:journals/jacm/BrandZ83} as our formal model.
A CSM $\CSM{A}$ consists of a set of state machines $A_\procA$, one for each $\procA\in\Procs$ over the alphabet of message sends and receives.
Communication between machines happens asynchronously through FIFO channels.
The semantics of a CSM is a language $\lang(\CSM{A})$ of maximal traces over the alphabet of message sends and receives satisfying the FIFO condition
on channels.
A CSM is \emph{deadlock free} if every trace can be extended to a maximal trace.
We omit the (standard) formal definition of CSMs (see \cref{app:csms} for details).

\subparagraph*{Indistinguishability Relation.}
In the type language of a global type, every send event is always immediately succeeded by its receive event.
However, in a CSM, other independent events may occur between the send and the receipt and there is no way to force the order specified by the type language.
To capture this phenomenon formally, we define a family of \emph{indistinguishability relations} 
${\interswap_i} \subseteq \Sigma^* \times \Sigma^*$, for $i\geq 0$
and $\Sigma = \Sigma_{\semglobal(\GG)}$, as follows.
For all $w\in\Sigma^*$, we have $w \interswap_0 w$.
For $i=1$, we define:
\vspace{-1ex}
\begin{enumerate}[label=(\arabic*)]
\item
If $\procA ≠ \procC$, then
$
 w.\snd{\procA}{\procB}{\val}.\snd{\procC}{\procD}{\val'}.u
 \; \interswap_{1} \;
 w.\snd{\procC}{\procD}{\val'}.\snd{\procA}{\procB}{\val}.u
$.

\item
If $\procB ≠ \procD$, then
$
 w.\rcv{\procA}{\procB}{\val}.\rcv{\procC}{\procD}{\val'}.u
 \; \interswap_{1} \;
 w.\rcv{\procC}{\procD}{\val'}.\rcv{\procA}{\procB}{\val}.u
$.

\item
If $\procA ≠ \procD \land (\procA ≠ \procC ∨ \procB ≠ \procD)
$, then 
$
 w.\snd{\procA}{\procB}{\val}.\rcv{\procC}{\procD}{\val'}.u
 \; \interswap_{1} \;
 w.\rcv{\procC}{\procD}{\val'}.\snd{\procA}{\procB}{\val}.u
$.
\item
If $\card{w \wproj_{\snd{\procA}{\procB}{\_}}} >
    \card{w \wproj_{\rcv{\procA}{\procB}{\_}}}$,
then 
$
 w.\snd{\procA}{\procB}{\val}.\rcv{\procA}{\procB}{\val'}.u
 \; \interswap_{1} \;
 w.\rcv{\procA}{\procB}{\val'}.\snd{\procA}{\procB}{\val}.u
$.
\end{enumerate}
We refer to the proof of \cref{lm:csm-closed-under-interswap} in \cref{app:props-interswaplang} for further details on the conditions for swapping events.
Let $w, w', w''$ be sequences of events s.t.~$w \interswap_1 w'$ and $w' \interswap_i w''$ for some~$i$.
Then, $w \interswap_{i+1} w''$.
We define $w \interswap u$ if $w \interswap_n u$ for some $n$.
It is straightforward that $\interswap$ is an equivalence relation.
Define $u \preceq_\interswap v$ if there is $w\in\Sigma^*$ such that $u.w \interswap v$.
Observe that $u \interswap v$ iff
$u \preceq_\interswap v$ and $v \preceq_\interswap u$.
To extend $\interswap$ to infinite words, we follow the approach of Gastin~\cite{DBLP:conf/litp/Gastin90}.
For infinite words $u, v\in\Sigma^\omega$, we define $u \preceq_\interswap^\omega v$ 
if for each finite prefix $u'$ of $u$, there is a finite prefix~$v'$ of~$v$ such that
$u' \preceq_\interswap v'$.
Define $u \interswap v$ iff $u \preceq_\interswap^\omega v$ and $v\preceq_\interswap^\omega u$.

We lift the equivalence relation $\interswap$ on words to languages:

For a language $\Lambda$, we define
{
\small
$
  \interswaplang(\Lambda) = \left\{ w' \mid \bigvee
    \begin{array}{l}
    w' \in \Alphabet^* \land ∃ w ∈ \Alphabet^*. \; w \in \Lambda \text{ and } w' \interswap w \\
    w' ∈ \Alphabet^ω \land \exists w \in \Alphabet^\omega. \; w \in 
    \Lambda \text{ and } w' \preceq_\interswap^\omega w
  \end{array} \right\}.
$
}

For the infinite case, we take the downward closure w.r.t.~$\preceq_\interswap^\omega$.
Unlike \cite[Definition~2.1]{DBLP:conf/litp/Gastin90}, our closure operator is asymmetric.
Consider the protocol $(\snd{\procA}{\procB}{\val}.\;\rcv{\procA}{\procB}{\val})^ω$.
Since we do not make any fairness assumption on scheduling, we need to include in the closure the execution where only the sender is scheduled, i.e., $(\snd{\procA}{\procB}{\val})^ω \preceq_\interswap^\omega (\snd{\procA}{\procB}{\val}.\;\rcv{\procA}{\procB}{\val})^ω$.

\begin{example}[Indistinguishability relation $\interswap$ by examples]
 \label{ex:indistinguishability-relation}
 The four rules for $\interswap_1$ present conditions under which two adjacent events in an execution (prefix) can be swapped.
 These conditions are designed such that they characterise possible changes in an execution (prefix) which cannot be recognised by any CSM.
 To be precise, if $w$ is recognised by some CSM $\CSM{A}$ and $w' \interswap_1 w$ holds, then $w'$ is also recognised by $\CSM{A}$.
 In this example, we illustrate the intuition behind these rules.
 
 For the remainder of this example, the \emph{active role} of an event is the receiver of a receive event and the sender of a send event. 
 Visually, the active role is always the first role in an event.
 In addition, we assume that variables do not alias, i.e., two roles or messages with different names are different.
 
 Two send events (or two receive events) can be swapped if the active roles are distinct because there cannot be any dependency between two such events which do occur next to each other in an execution.
 For send events, the 1st rule, thus, admits 
 $\snd{\procA}{\procC}{\val}.\;\snd{\procB}{\procC}{\val}$
 $\interswap_1$
 $\snd{\procB}{\procC}{\val}.\;\snd{\procA}{\procC}{\val}$
 even though the receiver is the same.
 In contrast, the corresponding receive events cannot be swapped:
 $\rcv{\procA}{\procC}{\val}.\;\rcv{\procB}{\procC}{\val}$
 $\not \interswap_1$
 $\rcv{\procB}{\procC}{\val}.\;\rcv{\procA}{\procC}{\val}$.
 Note that the 1st rule is the only one with which two send events can be swapped while the 2nd rule is the only one for receive events so indeed no rule applies for the last case.
 
 The 3rd rule allows one send and one receive event to be swapped if either both senders or both receivers are different -- in addition to the requirement that both active roles are different.
 For instance, it admits
 $\snd{\procA}{\procC}{\val}.\;\rcv{\procC}{\procB}{\val}$
 $\interswap_1$
 $\rcv{\procC}{\procB}{\val}.\;\snd{\procA}{\procC}{\val}$.
 However, it does not admit two swap 
 $\snd{\procA}{\procB}{\val}.\;\rcv{\procA}{\procB}{\val}$
 $\not \interswap_1$
 $\rcv{\procA}{\procB}{\val}.\;\snd{\procA}{\procB}{\val}$.
 This is reasonable since the send event could be the one which emits $\val$ in the corresponding channel.
 In this execution prefix, this is in fact the case since there have been no events before, but in general one needs to incorporate the context to understand whether this is the case.
 The 4th rule does this and therefore admits swapping the same events when appended to some prefix:
 $\snd{\procA}{\procB}{\val}.
  \snd{\procA}{\procB}{\val}.\;\rcv{\procA}{\procB}{\val}$
 $\not \interswap_1$
 $\snd{\procA}{\procB}{\val}.
  \rcv{\procA}{\procB}{\val}.\;\snd{\procA}{\procB}{\val}$.
 Then, the FIFO order of channels ensures that the first message will be received first and the 2nd send event can happen after the reception of the 1st message.
\qed
\end{example}

\begin{example}[Load balancing revisited]
Let us consider the execution with confusion in \cref{fig:load-balance-confusion-execution}.
Compared to a synchronous execution, we need to delay the reception $\rcv{W_1}{C}{\mathit{reply}}$ to come after the first $\rcv{W_2}{C}{\mathit{reply}}$.
Using the 2nd and 3rd cases of $\interswap$ we can move $\rcv{W_1}{C}{\mathit{reply}}$ across the communications between the two workers. 
Finally, we use the 3rd case again to swap $\rcv{W_1}{C}{\mathit{reply}}$ and $\snd{W_2}{C}{\mathit{reply}}$ to get the desired sequence.
\qed
\end{example}

This example shows that $\interswap$ does not change the order of send and receive events of a single role.
Thus, if $w, w'\in \Sigma_{\procA}^\infty$, then $w\interswap w'$ iff $w = w'$.
Hence, any language over the message alphabet of a single role is (trivially) closed under $\interswap$.
Further, two runs of a CSM on $w$ and $w'$ with $w \interswap w'$ end in the same configuration. 

\subparagraph*{Execution Languages.}
For a global type $\GG$, the above discussion implies that any implementation $\CSM{A}$ can at most achieve that $\lang(\CSM{A}) = \interswaplang(\lang(\GG))$.
This is why we call $\interswaplang(\lang(\GG))$ the \emph{execution language} of $\GG$.
One might also call $\interswaplang(\lang(L))$ of a local type $L$ an execution language, however, since $\interswap$ does not swap any events on the same role, 
the type language and execution language are equivalent.

\begin{definition}[Implementability]
A global type $\GG$ is said to be \emph{implementable} if there exists a CSM $\CSM{A}$ s.t.
(i) [protocol fidelity] $\lang(\CSM{A}) = \interswaplang(\lang(\GG))$, and
(ii) [deadlock freedom] $\CSM{A}$ is deadlock free.
We say that $\CSM{A}$ implements $\GG$.
\end{definition}

\subsection{Type Soundness: Projectability implies Implementability}

The projection operator preserves the local order of events for every role and does not remove any possible event. 
Therefore, we can conclude that, for each role, the projected language of the global type is subsumed by the language of the projection.

\begin{proposition}
\label{prop:projection-preserves-per-process-runs}
For every \projectable $\GG$, role $\procC\in\Procs$, run with trace $w$ in $\semglobal(\GG) \wproj_{\Alphabet_\procC}$, there is a run with trace $w$ in $\semlocal(\GG \tproj_\procC)$.
Therefore, it holds that $\lang(\GG) \wproj_{\Alphabet_\procC} \subseteq \lang(\GG \tproj_\procC)$.
\end{proposition}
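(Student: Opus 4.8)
The plan is to prove the run-inclusion by exhibiting a \emph{simulation of $\semglobal(\GG)$ by $\semlocal(\GG\tproj_\procC)$ modulo hiding all letters outside $\Alphabet_\procC$}, and to read off the language inclusion from it. Since $\GG$ is projectable, every application of $\merge$ during the projection is defined; I would first note that the availability annotations can then be erased throughout, as they only gate definedness of $\merge$ in the external-choice case of \cref{def:merge-operator-wo-check-avail} and never affect the resulting state machine. From here on only annotation-free local types are in play.

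Next I would define a relation $\approx$ between the states of $\semglobal(\GG)$ (the subterms of $\GG$ together with the intermediate message-in-transit states) and those of $\semlocal(\GG\tproj_\procC)$, following the recursion of $\tproj$ (\cref{def:gen-projection}): a subterm $G$ in which $\procC$ has no events and whose projection collapsed to $0$ is related to $0$; a choice $\sum_{i\in I}\msgFromTo{\procA}{\procB_i}{\val_i.G_i}$ with $\procA\ne\procC$ is related to the merge result computed for it; a recursion variable $t$ is related to the $\semlocal$-state $t$, or to $0$ if that loop collapsed; etc., so that $\GG\approx\GG\tproj_\procC$. The claim, proved by structural induction on $\GG$, is that $\approx$ is a simulation in the following sense: if $s\approx L$ and $s\xrightarrow{x}s'$ with $x\in\Alphabet_\procC$, then $L$ reaches some $L'\approx s'$ via $\emptystring$-transitions followed by an $x$-transition; if $x\notin\Alphabet_\procC$ or $x=\emptystring$, then $L$ reaches some $L'\approx s'$ via zero or more $\emptystring$-transitions; and if $s=0$ then $L$ reaches the final state $0$ via $\emptystring$-transitions.

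The key auxiliary result is a \emph{merge-monotonicity lemma}: whenever $\mathit{AL}_1\merge\mathit{AL}_2$ is defined, every run of $\semlocal(\mathit{AL}_1)$ with trace $v$ ending in a state $L'$ is matched by a run of $\semlocal(\mathit{AL}_1\merge\mathit{AL}_2)$ with the same trace $v$ ending in a state that is either $L'$ or of the form $L'\merge L''$ for a suitable $L''$, preserving finality (and symmetrically for $\mathit{AL}_2$). This is proved by following the case split of \cref{def:merge-operator-wo-check-avail}: the equality case is immediate; the internal-choice case uses that both operands carry the same index set $I$ and the same pairs $(\procB_i,\val_i)$, so the outgoing transitions coincide and one recurses on the chosen continuation; the external-choice case, for the chosen $i$, keeps either the $\mathit{AL}_1$-branch verbatim (if $i\in I\setminus J$) or the merged continuation (if $i\in I\cap J$), and one recurses; the $\mu$-case aligns the $\emptystring$-transitions $\mu t.\,G'\xrightarrow{\emptystring}G'$ and $t\xrightarrow{\emptystring}\mu t.\,G'$ of both machines while carrying the renaming $[t_2/t_1]$. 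With this lemma, the inductive step for $\approx$ at a choice with $\procA\ne\procC$ goes through: if the taken branch has $\procB_i=\procC$, the hidden send is matched by idling and the receive by the external-choice transition of the merge; if $\procB_i\ne\procC$, both events are hidden and the sub-run through $G_i$ is absorbed by the merge via merge-monotonicity, \emph{unless} $G_i\tproj^E_\procC=\apair{t}{\_}$ for some $t\in E$, in which case $\procC$ performs nothing in $G_i$ before returning to $\mu t$ and we match by staying put (or by idling at $0$ if the loop collapsed). The sender case, recursion, and the variable case align $\emptystring$-transitions and use that $\getMuG$ precomputes the unfoldings, e.g.\ matching $t\xrightarrow{\emptystring}\mu t.G$ in $\semglobal(\GG)$ by $t\xrightarrow{\emptystring}\mu t.(G\tproj_\procC)$ in $\semlocal(\GG\tproj_\procC)$.

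Iterating the simulation along a run of $\semglobal(\GG)\wproj_{\Alphabet_\procC}$ then produces a run of $\semlocal(\GG\tproj_\procC)$ with the same trace: $\procC$-letters are never dropped, so the trace is reproduced exactly; a final global state maps to the final state $0$; for infinite runs one passes to the limit of the finite approximations, since the simulation maps each global step to a finite (possibly empty) block of $\semlocal$-steps (an infinite global run with only finitely many $\procC$-letters is matched by a finite $\semlocal$-run, which is exactly what the run-level statement asks for). The language inclusion follows. I expect the main obstacle to be the interplay of recursion with $\merge$ and empty-path elimination: a single loop of $\GG$ may be matched by a loop of $\GG\tproj_\procC$ whose body is itself a merge of several unfolded branches, so $\approx$ must be stable both across $\mu$-unfoldings and across nested merges, and the merge-monotonicity lemma must compose cleanly with the recursion-variable renamings it introduces; the surrounding $\emptystring$-transition bookkeeping, though routine, is where most of the care goes.
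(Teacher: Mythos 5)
The paper never actually proves this proposition---it is asserted on the strength of the informal remark that projection ``preserves the local order of events \ldots{} and does not remove any possible event''---so there is no official proof to compare against. Your simulation argument is the natural formalisation of that remark, and its engine, the merge-monotonicity lemma, is correct and is exactly the right key observation: every clause of \cref{def:merge-operator-wo-check-avail} either keeps a branch of the first operand verbatim (external choice, $i\in I\setminus J$) or pairs transitions with identical labels $(\procB_i,\val_i)$ and recurses on the continuations, so every run of an operand is reproduced in the merge, finality included. Your treatment of the branches removed by empty-path elimination (matching them by idling, since $\procC$ has no event on such a branch before control returns to $\mu t$) is also the right move, and the composition of the outer induction with nested merges is unproblematic since simulations compose. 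For the run-level claim the proposal is sound.

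One step deserves a warning, though it is a defect of the statement rather than of your argument: ``the language inclusion follows'' is not automatic from the run-level claim, because $\lang(\GG\tproj_\procC)$ consists of traces of \emph{maximal} runs. When a maximal run of $\semglobal(\GG)$ is infinite but contains only finitely many $\procC$-events, the finite local run you construct must additionally end in the final state $0$. This holds when the $\procC$-free tail comes from a loop that empty-path elimination collapses to $0$, but not in general: for $\GG = \mu t.\,(\msgFromTo{\procA}{\procC}{\val}.t \;+\; \msgFromTo{\procA}{\procB}{\val'}.t)$, which is \projectable with $\GG\tproj_\procC = \mu t.\,\rcv{\procA}{}{\val}.t$, the maximal global run that always takes the second branch projects to $\emptystring$ on $\Alphabet_\procC$, yet $\semlocal(\GG\tproj_\procC)$ has no final state, so $\emptystring\notin\lang(\GG\tproj_\procC)$ and the stated inclusion fails. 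You inherit this gap from the proposition itself (the paper's ``Therefore'' glosses over the same point); to close it one must either show that the reached local state is final in all cases that can actually arise, or flag the needed side condition.
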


The previous result shows that the projection does not remove behaviours.
Now, we also need to show that it does not add unwanted behaviours.
The main result is the following.

\begin{theorem}
\label{thm:projectable-MST-sat-protocol-fidelity}
If a global type $\GG$ is \projectable, then
$\GG$ is implementable.
\end{theorem}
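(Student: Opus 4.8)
The plan is to exhibit the \emph{canonical implementation}: take the CSM $\CSM{A}$ with $A_\procA \is \semlocal(\GG \tproj_\procA)$ for each $\procA \in \Procs$, and verify both implementability conditions for it. Protocol fidelity, $\lang(\CSM{A}) = \interswaplang(\lang(\GG))$, splits into two inclusions. The inclusion $\interswaplang(\lang(\GG)) \subseteq \lang(\CSM{A})$ is the routine direction: given $w \in \interswaplang(\lang(\GG))$ pick $v \in \lang(\GG)$ with $w \interswap v$ (resp.\ $w \preceq_\interswap^\omega v$ if $w$ is infinite); since $\interswap$ never reorders two events of the same role, $w\wproj_{\Sigma_\procC}$ equals, resp.\ is a prefix of, $v\wproj_{\Sigma_\procC}$, and the latter lies in $\lang(\GG\tproj_\procC)$ by \cref{prop:projection-preserves-per-process-runs}; hence every component of $\CSM{A}$ has a run on its restriction of $w$. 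That $w$ is a genuine CSM trace (channel-compliant, and complete when finite) is inherited from $v$, because $v$ is produced by $\semglobal(\GG)$, where each send is immediately followed by its matching receive, and the side-conditions in the rules for $\interswap_1$ --- notably (3) and (4) --- were crafted precisely to preserve channel-compliance (cf.\ \cref{ex:indistinguishability-relation}). So $w \in \lang(\CSM{A})$.

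The converse inclusion $\lang(\CSM{A}) \subseteq \interswaplang(\lang(\GG))$ is the core of the theorem, and it is where the message-causality analysis is needed. I would establish a \emph{path-agreement invariant} by induction on the length of a trace $w$ of $\CSM{A}$: for every configuration $c$ reachable by $w$ there is a trace $v$ of a run of $\semglobal(\GG)$ with $w \preceq_\interswap v$, and moreover the local state of each role $\procC$ in $c$ equals $\GG' \tproj_\procC$ for a common residual subterm $\GG'$ of $\GG$ selected by $v$ (up to the bookkeeping for empty-path elimination and at-most-one loop unfolding). In the step, fire the next event $e$ from $c$. If $e = \snd{\procA}{\procB}{\val}$, then $\procA$ is at an internal choice of $\GG'\tproj_\procA$, i.e.\ the chooser at the matching choice of $\GG'$; the selected branch fixes the new residual $\GG''$ and $\procA$'s new state, and every other role keeps a state consistent with $\GG''$ because its projection merged \emph{all} branches and so is unchanged along the one picked. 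If $e = \rcv{\procA}{\procB}{\val}$, role $\procB$ may sit at an external choice $\ExtCh_{i\in I}\rcv{\procB_i}{}{\val_i}.\ldots$ obtained by merging several global branches, and the delicate point is that receiving $\val$ from $\procA$ must pin down a \emph{unique} branch --- otherwise a confusion as in \cref{ex:extended-proj} could arise. This is exactly what the availability side-conditions of \cref{def:merge-operator-wo-check-avail} guarantee: a branch-distinguishing reception $\rcv{\procB_i}{\procB}{\val_i}$ was forbidden from appearing in the availability set of any conflicting branch, and a supporting lemma says that along every $\interswap$-reachable run of $\semglobal(\GG)$ the actual channel contents are contained in the statically computed $\avail$ sets. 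Unfolding each loop at most once is sound because only membership of finitely many messages is tested, and a message present after two or more unfoldings is already present after one. Hence $w.e \preceq_\interswap v'$ for a suitable $v'$, and per-role consistency is restored; passing to the limit (and, for finite maximal $w$, strengthening $\preceq_\interswap$ to $\interswap$ via completeness, which forces the global residual to be $0$) yields $w \in \interswaplang(\lang(\GG))$.

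Deadlock freedom is then a corollary. Given $c$ reachable by $w$ with witness $v$, if $c$ is not maximal then extend $v$ inside $\semglobal(\GG)$ by one message exchange; by \cref{prop:projection-preserves-per-process-runs} the corresponding send and receive are enabled in the relevant components, and the first inclusion lets us schedule them as a CSM step, so $w$ extends. Iterating produces a maximal trace through $c$.

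I expect the main obstacle to be the receive step of the path-agreement induction --- turning ``the merge side-conditions rule out confusion'' into a rigorous argument. The crux is the lemma that the statically computed availability sets over-approximate the messages that can be simultaneously pending along any $\interswap$-reachable global run; proving it calls for the graph-theoretic/MSC reasoning about causal chains of messages promised in the introduction. Handling infinite behaviour through Gastin's asymmetric $\preceq_\interswap^\omega$ closure and justifying the one-unfolding bound for $\avail$ are secondary but still non-trivial points.
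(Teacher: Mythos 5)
Your plan follows essentially the same route as the paper: the canonical CSM of projections, the easy inclusion via per-role run preservation plus closure of CSM languages under $\interswap$, the hard inclusion via an inductively maintained agreement of all roles on a common witness run of $\semglobal(\GG)$ (the paper's ``run mappings'' and ``control flow agreement''), with the receive case discharged by a lemma stating that the static $\avail$ sets over-approximate the messages that can be pending (the paper's $M^{\mathcal{B}}_{(G\ldots)} \subseteq \avail(\mathcal{B},\emptyset,G)$), and deadlock freedom obtained as a corollary of the same invariant. One caution: the invariant should record, for each role, a \emph{prefix} of the common witness run whose projection matches that role's history, rather than asserting that every local state is the projection of a single common residual subterm --- roles may lag behind or overtake a choice and sit at merged states, which is exactly why the paper formulates run mappings via prefixes and needs the blocked-projection machinery to show that uncommitted roles behave identically along all branches in the send case.
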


The complete proof can be found in the extended version \cref{sec:local-proof}. Here, we give a brief summary of the main ideas.
To show that a \projectable global type is implemented by its projections, we need to show that the projection preserves behaviours, does not add behaviours, and is deadlock free.
With \cref{prop:projection-preserves-per-process-runs},
showing that the projections combine to admit at least the behaviour specified by the global protocol is straightforward.
For the converse direction, we establish a property of the executions of the local types with respect to the global type:
all the projections agree on the run taken by the overall system in the global type.
We call this property \emph{control flow agreement}.
Executions that satisfy control flow agreement also satisfy protocol fidelity and are deadlock free.
The formalisation and proof of this property is complicated by the fact that not all roles learn about a choice at the same time.
Some roles can perform actions after the choice has been made and before they learn which branch has been taken.
In the extreme case, a role may not learn at all that a choice happened.
The key to control flow agreement is in the definition of the merge operator.
We can simplify the reasoning to the following two points.

\subparagraph*{Roles learn choices before performing distinguishing actions.}
When faced with two branches with different actions, a role that is not making the choice needs to learn the branch by receiving a message.
This follows from the definition of the merge operator.
Let us call this message the choice message.
Merging branches is only allowed as long as the actions are similar for this role.
When there is a difference between two (or more) branches, an external choice is the only case that allows a role to continue on distinct branches.

\subparagraph*{Checking available messages ensures no confusion.}
From the possible receptions ($\rcv{\procB_i}{}{\val_i}$) in an external choice, any pair of sender and message is unique among this list for the choice.
This follows from two facts.
First, the projection computes the available messages along the different branches of the choice.
Second, merging uses that information to make sure that the choice message of one branch does not occur in another branch as a message independent of that branch's choice messages.

\begin{example}
Let us use an example to illustrate why this is non-trivial. Consider:

$
    \GG \is
    \bigl( \msgFromTo{\procA}{\procB}{l}.\,
        \mu t.\, \msgFromTo{\procC}{\procA}{\val}.\, t \bigr)
  +
    \bigl( \msgFromTo{\procA}{\procB}{r}.\,
        \mu s.\, \msgFromTo{\procC}{\procA}{\val}.\, s \bigr)
$

\noindent
with its projections:

$
 \GG \tproj_\procA =
    \bigl( \snd{}{\procB}{l}.\,
        \mu t.\,\rcv{\procC}{}{\val}.\,t \bigr)
  \; \IntCh \;
    \bigl( \snd{}{\procB}{r}.\,
        \mu s.\,\rcv{\procC}{}{\val}.\,s \bigr) 
\quad \;
 \GG \tproj_\procB =
   \rcv{\procA}{}{l}.\,0
  \; \ExtCh \;
   \rcv{\procA}{}{r}.\,0 
\quad \;
 \GG \tproj_\procC  =
        \mu t.\,\snd{}{\procA}{\val}.\,t
$

\noindent
and an execution prefix $w$ of $\CSMl{\semlocal(\GG \tproj_\procA)}$:
\hfill
$
    \snd{\procC}{\procA}{\val}. \;
    \snd{\procC}{\procA}{\val}. \;
    \snd{\procA}{\procB}{l}. \;
    \rcv{\procC}{\procA}{\val}. \;
    \snd{\procC}{\procA}{\val}. 
$

\noindent
For this execution prefix, we check which runs in $\semglobal(\GG)$ each role could have pursued.
In this case, $\procC$ is not directly affected by the choice so it can proceed before the $\procA$ has even made the choice.
As the part of the protocol after the choice is a loop, we cannot bound how far some roles can proceed before the choice gets resolved.
\qed
\end{example}

\begin{remark}
Our projection balances expressiveness with tractability:
it does not unfold recursion,
i.e., the merge operator never expands a term $μt.G$ to obtain the local type (and we only unfold once to obtain the set of available messages).
Recursion variables are only handled by equality. While this restriction may seem arbitrary, unfolding can lead to comparing unbounded sequences of messages and, hence, undecidability \cite{DBLP:journals/tcs/AlurEY05} or non-effective constructions~\cite{DBLP:journals/corr/abs-1203-0780}. Our projection guarantees that a role is either agnostic to a choice or receives a choice message in an horizon bounded by the size of the type.
\end{remark}
 \section{Evaluation}

We implemented our generalised projection in a prototype tool which is publicly available \cite{prototype}.
The core functionality is implemented in about 800 lines of Python3 code.
Our tool takes as input a global type and outputs its projections (if defined).
We run our experiments on a machine with an Intel Xeon E5-2667 v2 CPU.
\Cref{tb:proj} presents the results of our evaluation.

\begin{table}
\caption{Projecting MSTs. For each example, we report
    the size as the number of nodes in the AST,
    the number of roles,
    whether it uses our extension,
    the time to compute the projections.
}
\label{tb:proj}

\begin{center}

\begin{tabular}{l l  c c  c c  c  c  c c}
 \toprule
 Source &
 Name &
 Size &
 $|𝓟|$ &
 Gen.\ Proj.\ needed &
 Time \\
 \midrule
 \multirow{6}{*}{\cite{DBLP:journals/pacmpl/ScalasY19}} &
 Instrument Contr.\ Prot.\ A &
 $16$ &
 $3$ &
 ✗ &
 $0.50$\,ms
 \\
 &
 Instrument Contr.\ Prot.\ B &
 $13$ &
 $3$ &
 ✗ &
 $0.41$\,ms
 \\
 &
 Multi Party Game &
 $16$ &
 $3$ &
 ✗ &
 $0.48$\,ms
 \\
 &
 OAuth2 &
 $7$ &
 $3$ &
 ✗ &
 $0.29$\,ms
 \\
 &
 Streaming &
 $7$ &
 $4$ &
 ✗ &
 $0.33$\,ms
 \\
 \cmidrule{1-1}
 \cite{DBLP:journals/corr/abs-1203-0780} &
 Non-Compatible Merge &
 $5$ &
 $3$ &
 ✓ &
 $0.22$\,ms
\\
 \cmidrule{1-1}
 \cite{springhibernate} &
 Spring-Hibernate &
 $44$ &
 $6$ &
 ✓ & $1.97$\,ms
\\
 \cmidrule{1-1}
 \multirow{4}{*}{New} &
 Group Present &
 $43$ &
 $4$ &
 ✓ & $1.62$\,ms
 \\
 &
 Late Learning &
 $12$ &
 $4$ &
 ✓ &
 $0.56$\,ms
 \\
 &
 Load Balancer ($n=10$) &
 $32$ &
 $12$ &
 ✓ & $8.18$\,ms \\
 &
 Logging ($n=10$) &
 $56$ &
 $13$ &
 ✓ & $20.96$\,ms
 \\
\bottomrule
\end{tabular}
 \end{center}
\end{table}

Our prototype successfully projects global types from the
literature \cite{DBLP:journals/pacmpl/ScalasY19}, in particular Multi-Party Game, OAuth2, Streaming, and two corrected versions of the Instrument Control Protocol.
These existing examples can be projected, but do not require generalised projection.

\subparagraph*{The Need for Generalised Projection.}
The remaining examples exemplify when our generalised projection is needed.
In fact, if some role can receive from different senders along two paths (immediately or after a sequence of same actions), its projection is only defined for the generalised projection operator.
To start with, our generalised projection can project a global type presented by Castagna et al.~\cite[p.~19]{DBLP:journals/corr/abs-1203-0780} which is not \projectable with their effective projection operator (see \cref{sec:rel-generalisations} for more details).
The Spring-Hibernate example was obtained by translating a UML sequence diagram \cite{springhibernate} to a global type.
There, \texttt{Hibernate Session} can receive from two different senders along two paths.
The Group Present example is a variation of the traditional book auction example \cite{DBLP:journals/jacm/HondaYC16} and describes a protocol where friends organise a birthday present for someone; in the course of the protocol, some people can be contacted by different people.
The Late Learning example models a protocol where a role submits a request and the server replies either with \texttt{reject} or \texttt{wait}, however, the last case applies to two branches where the result is served by different roles.
The Load Balancer (\cref{ex:load-balancing}) and Logging examples are simple versions of typical communication patterns in distributed computing.
The examples are parameterised by the number of workers, respectively, back-ends that call the logging service, to evaluate the efficiency of projection.
For both, we present one instance ($n=10$) in the table.
All new examples are rejected by previous approaches but shown projectable by our new projection.

\subparagraph*{Overhead.}
The generalised projection does not incur any overhead for global types that do not need it.
Our implementation computes the sets of available messages lazily, i.e., it is only computed if our message causality analysis is needed.
These sets are only needed when merging receptions from different senders.
We modelled four more parameterised protocols: Mem Cache, Map Reduce, Tree Broadcast, and P2P Broadcast.
We tested these examples, which do not need the generalised projection, up to size 1000 and found that our generalised projection does not add any overhead.
Thus, while the message causality analysis is crucial for our generalised projection operator and hence applicability of MST verification, it does not affect its efficiency.

 \section{Related Work}
\label{rel-work}

\subparagraph*{Session Types.}
MSTs stem from process algebra and they have been proposed for typing communication channels.
The seminal work on binary session types by Honda~\cite{DBLP:conf/concur/Honda93} identified channel duality as a condition for safe two party communication.
This work was inspired by linear logic \cite{DBLP:journals/tcs/Girard87} and lead to further studies on the connections between session types and linear logic \cite{DBLP:journals/jfp/Wadler14,DBLP:journals/mscs/CairesPT16}.
Moving from binary to multiparty session types, Honda et al.~\cite{DBLP:conf/popl/HondaYC08} identified consistency as the generalisation of duality for the multiparty setting.
The connection between MSTs and linear logic is still ongoing \cite{DBLP:conf/forte/CairesP16,DBLP:conf/concur/CarboneLMSW16,DBLP:journals/acta/CarboneMSY17}.
While we focus solely on communication primitives, the theory is extended with other features such as delegation \cite{DBLP:conf/esop/HondaVK98,DBLP:conf/popl/HondaYC08,DBLP:journals/tcs/CastellaniDGH20} and dependent types \cite{DBLP:conf/ppdp/ToninhoCP11,DBLP:journals/corr/abs-1208-6483,DBLP:conf/fossacs/ToninhoY18}.
These extensions have their own intricacies and we leave incorporating such features into our generalised projection for future work.

In this paper, we use local types directly as implementations for roles for simplicity.
Subtyping investigates ways to give implementation freedom while preserving the correctness properties.
For further details on subtyping, we refer to work by Lange and Yoshida~\cite{DBLP:conf/fossacs/LangeY17}, Bravetti et al.~\cite{DBLP:journals/tcs/BravettiCZ18}, and
Chen et al.~\cite{DBLP:conf/ppdp/ChenDY14,DBLP:journals/lmcs/ChenDSY17}.

\subparagraph*{Generalisations of Choice in MSTs.}
\label{sec:rel-generalisations}
Castagna et al.~\cite{DBLP:journals/corr/abs-1203-0780} consider a generalised choice similar to this work.
They present a non-effective general approach for projection, relying on global information, and an algorithmic projection which is limited to local information.
Our projection keeps some global information in the form of message availability and, therefore, handles a broader class of protocols.
For instance, our generalised projection operator can project the following example \cite[p.~19]{DBLP:journals/corr/abs-1203-0780} but their algorithmic version cannot:

{ \small
\hspace{5mm}
$
\bigl(
    \msgFromTo{\procA}{\procC}{a}. \;
    \msgFromTo{\procC}{\procA}{a}. \;
    \msgFromTo{\procA}{\procB}{a}. \;
    \msgFromTo{\procB}{\procC}{b}. \;
    0
\bigr)
+
\bigl(
    \msgFromTo{\procA}{\procB}{a}. \;
    \msgFromTo{\procB}{\procC}{b}. \;
    0
\bigr)
$
}

Hu and Yoshida \cite{DBLP:conf/fase/HuY17} syntactically allow a sender to send to different recipients in global and local types as well as a receiver to receive from different senders in local types.
However, their projection is only defined if a receiver receives messages from a single role.
From our evaluation, all the examples that needs the generalised projection are rejected by their projection.
Recently, Castellani et al.~\cite{DBLP:journals/acta/CastellaniDG19} investigated ways to allow local types to specify receptions from multiple senders for reversible computations but
only in the synchronous setting.
Similarly, for synchronous communication only, Jongmans and Yoshida \cite{DBLP:conf/esop/JongmansY20} discuss generalising choice in MSTs. Because their calculus has an explicit parallel composition, they can emulate some asynchronous communication but their channels have bag semantics
instead of FIFO queues.
The correctness of the projection also computes causality among messages as in our case and shares the idea of annotating local types.

\subparagraph*{Choreography Automata.}
Choreography automata \cite{DBLP:conf/coordination/BarbaneraLT20} and graphical choreographies \cite{DBLP:conf/popl/LangeTY15} model protocols as automata with transitions labelled by message exchanges, e.g., $\msgFromTo{\procA}{\procB}{\val}$.
Barbanera el al.~\cite{DBLP:conf/coordination/BarbaneraLT20} develop conditions for safely mergeable branches that ensure implementability on synchronous choreography automata.
However, when lifting them to the asynchronous setting, they miss the subtle introduction of partial order for messages from different senders.
Consider the choreography automaton in \cref{fig:conditions-unsound-choreography-automata}.
It can also be represented as a global type:
\vspace{0.2mm}
{ \small
\hspace{5mm}
$
+ \;
\begin{cases}
    \msgFromTo{\procA}{\procD}{{\color{orange}m_1}}. \;
    \msgFromTo{\procA}{\procE}{{\color{black}m}}. \;
    \msgFromTo{\procA}{\procD}{{\color{black}m}}. \;
    \msgFromTo{\procD}{\procE}{{\color{black}m}}. \;
    \msgFromTo{\procE}{\procA}{{\color{orange}m_1}}. \; 0
    \\
    \msgFromTo{\procA}{\procD}{{\color{purple}m_2}}. \;
    \msgFromTo{\procD}{\procE}{{\color{black}m}}. \;
    \msgFromTo{\procD}{\procA}{{\color{black}m}}. \;
    \msgFromTo{\procA}{\procD}{{\color{black}m}}. \;
    \msgFromTo{\procA}{\procE}{{\color{black}m}}. \;
    \msgFromTo{\procE}{\procA}{{\color{purple}m_2}}. \; 0
\end{cases}
$
}
\vspace{0.2mm}

It is \emph{well-formed} according to their conditions. However, $\procE$ cannot determine which branch was chosen since the messages $\val$ from $\procA$ and $\procD$ are not ordered when sent asynchronously.
As a result, it can send $\val_2$ in the top (resp.\ left) branch which is not specified as well as $\val_1$ in the bottom (resp.\ right) branch.

\begin{figure}[t]
\begin{subfigure}[b]{0.7\textwidth}
\resizebox{0.95\textwidth}{!}{
    \begin{tikzpicture}[->,>=stealth',shorten >=1pt,auto,node distance=2.8cm,
                    semithick]
  \tikzstyle{every state}=[circle,draw,text=white]

  \node[initial,state] (A)              {};
  \node[state]         (B1) [yshift=3em, right of=A] {};
  \node[state]         (B2) [right of=B1] {};
  \node[state]         (B3) [right of=B2] {};
  \node[state]         (B4) [right of=B3] {};
  \node[state,accepting]         (B5) [right of=B4] {};
  \node[state]         (C1) [yshift=-3em, right of=A] {};
  \node[state]         (C2) [right of=C1] {};
  \node[state]         (C3) [right of=C2] {};
  \node[state]         (C4) [right of=C3] {};
  \node[state]         (C5) [right of=C4] {};
  \node[state,accepting]         (C6) [right of=C5] {};

  \path (A) edge node [sloped]{$\snd{\procA}{\procD}{{\color{orange}m_1}}$} (B1)
        (A) edge node [sloped,below]{$\snd{\procA}{\procD}{{\color{purple}m_2}}$} (C1)
        (B1) edge node [sloped]{$\snd{\procA}{\procE}{{\color{black}m}}$} (B2)
        (B2) edge node [sloped]{$\snd{\procA}{\procD}{{\color{black}m}}$} (B3)
        (B3) edge node [sloped]{$\snd{\procD}{\procE}{{\color{black}m}}$} (B4)
        (B4) edge node [sloped]{$\snd{\procE}{\procA}{{\color{orange}m_1}}$} (B5)
        (C1) edge node [sloped]{$\snd{\procD}{\procE}{{\color{black}m}}$} (C2)
        (C2) edge node [sloped]{$\snd{\procD}{\procA}{{\color{black}m}}$} (C3)
        (C3) edge node [sloped]{$\snd{\procA}{\procD}{{\color{black}m}}$} (C4)
        (C4) edge node [sloped]{$\snd{\procA}{\procE}{{\color{black}m}}$} (C5)
        (C5) edge node [sloped]{$\snd{\procE}{\procA}{{\color{purple}m_2}}$} (C6);

\end{tikzpicture}
 }
\caption{Conditions for choreography automata are unsound in the \\ asynchronous setting.}
\label{fig:conditions-unsound-choreography-automata}
\end{subfigure}\begin{subfigure}[b]{0.3\textwidth}
\begin{center}
    \includegraphics[width=1.00\textwidth]{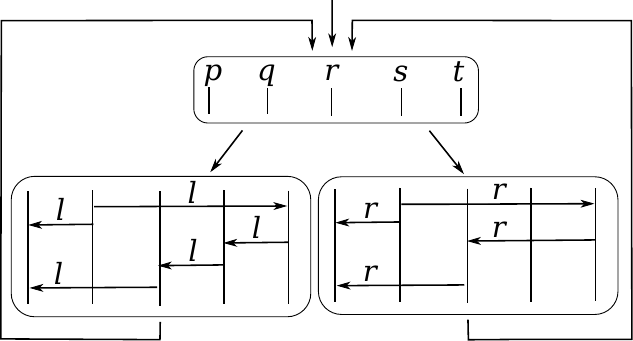}
    \caption{Reconstructible HMSC that is not implementable}
    \label{fig:wrong-local-choice}
\end{center}
\end{subfigure}\caption{These examples show that previous results for the asynchronous setting are flawed.}
\end{figure}

Lange et al.~\cite{DBLP:conf/popl/LangeTY15} have shown how to obtain graphical choreographies from CSM executions.
Unfortunately, they cannot fully handle unbounded FIFO channels as their method internally uses Petri nets.
Still, their branching property \cite[Def. 3.5]{DBLP:conf/popl/LangeTY15} consists of similar -- even though more restrictive -- conditions as our MST framework: a single role chooses at each branch but roles have to learn with the first received message or do not commit any action until the branches merge back.
We allow a role to learn later by recursive application of $\merge$.

\subparagraph*{Implementability in Message Sequence Charts.}
Projection is studied in hierarchical message sequence charts (HMSCs) as \emph{realisability}.
There, variations of the problem like changing the payload of existing messages or even adding new messages in the protocol are also considered \cite{DBLP:conf/stacs/Morin02,DBLP:journals/jcss/GenestMSZ06}.
Here, we focus on implementability without altering the protocol.
HMSCs are a more general model than MSTs and, unsurprisingly, realisability is undecidable~\cite{DBLP:conf/ac/GenestMP03,DBLP:journals/tcs/AlurEY05}.
Thus, restricted models have been studied.
Boundedness \cite{DBLP:journals/tcs/AlurEY05} is one such example:
checking safe realisability for bounded HMSCs is EXPSPACE-complete \cite{DBLP:journals/tcs/Lohrey03}.
Boundedness is a very strong restriction.
Weaker restrictions, as in MSTs, center on choice.
As we explain below, these restrictions are either flawed, overly restrictive, or not effectively checkable.

The first definition of (non-)local choice for HMSCs by Ben-Abdallah and Leue \cite{DBLP:conf/tacas/Ben-AbdallahL97} 
suffers from severely restrictive assumptions 
and only yields finite-state systems.
Given an HMSC specification, research on \emph{implied scenarios}, e.g. by Muccini et al.~\cite{DBLP:conf/fase/Muccini03}, investigates whether there are behaviours which, due to the asynchronous nature of communication, every implementation must allow in addition to the specified ones.
In our setting, an implementable protocol specification must not have any implied scenarios.
Mooij et al.~\cite{DBLP:conf/fase/MooijGR05} point out several contradictions of the observations on implied scenarios and non-local choice.
Hence, they propose more variants of non-local choices but allow implied scenarios.
In our setting, this corresponds to allowing roles to follow different branches. 

Similar to allowing implied scenarios of specifications, H{\'{e}}lou{\"{e}}t~\cite{DBLP:conf/sdl/Helouet01} pointed out that non-local choice has been frequently misunderstood as it actually does not ensure implementability but less ambiguity.
H{\'{e}}lou{\"{e}}t and Jard  proposed the notion of reconstructibility~\cite{NO-DBLP-wrong-local-choice} for a quite restrictive setting:
first, messages need to be unique in the protocol specification and, second, each node in an HMSC is implicitly terminal.
Unfortunately, we show their results are flawed. Consider the HMSC in \cref{fig:wrong-local-choice}.
(For simplicity, we use the same message identifier in each branch but one can easily index them for uniqueness.)
The same protocol can be represented by the following global type:

\vspace{0.3mm}
{ \small
\hspace{5mm}
$
\mu t. \;
+ \;
\begin{cases}
    \msgFromTo{\procB}{\procE}{{\color{orange}l}}. \;
    \msgFromTo{\procB}{\procA}{{\color{orange}l}}. \;
    \msgFromTo{\procE}{\procD}{{\color{orange}l}}. \;
    \msgFromTo{\procD}{\procC}{{\color{orange}l}}. \;
    \msgFromTo{\procC}{\procA}{{\color{orange}l}}. \; t
    \\
    \msgFromTo{\procB}{\procE}{{\color{purple}r}}. \;
    \msgFromTo{\procB}{\procA}{{\color{purple}r}}. \;
    \msgFromTo{\procE}{\procC}{{\color{purple}r}}. \;
    \msgFromTo{\procC}{\procA}{{\color{purple}r}}. \; t
\end{cases}
$
}
\vspace{0.3mm}

Because their notion of reconstructibility \cite[Def.~12]{NO-DBLP-wrong-local-choice} only considers loop-free paths, they report 
that the HMSC is reconstructible.
However, the HMSC is not implementable. 
Suppose that $\procB$ first chooses to take the top (resp.\ left) and then the bottom (resp.\ right) branch.
The message $l$ from $\procD$ to $\procC$ can be delayed  until after $\procC$ received $r$ from $\procE$.
Therefore, $\procC$ will first send $r$ to $\procA$ and then $l$ which contradicts with the order of branches taken.
This counterexample contradicts their result \cite[Thm.~15]{NO-DBLP-wrong-local-choice} and shows that reconstructibility is not sufficient for~implementability.

Dan et al.\ \cite{DBLP:conf/sefm/DanHC10}, improving Baker et al.~\cite{DBLP:conf/sigsoft/BakerBJKTMB05},  provide 
a definition that ensures implementability.
They provide a definition which is based on projected words of the HMSC in contrast to the choices. It is unknown how to check their condition for HMSCs.

\subparagraph*{CSMs and MSTs.}
The connection between MSTs, CSMs, and automata \cite{DBLP:journals/corr/abs-1203-0780,DBLP:conf/esop/DenielouY12}
came shortly after the introduction of MSTs.
Denielou and Yoshida \cite{DBLP:conf/esop/DenielouY12} use CSMs but they preserve the restrictions on choice from MSTs.
It is well-known that CSMs are Turing-powerful \cite{DBLP:journals/jacm/BrandZ83}.
Decidable instances of CSM verification can be obtained
by restricting the communication topology \cite{DBLP:journals/acta/PengP92, DBLP:conf/tacas/TorreMP08} or
by altering the semantics of communication, e.g. by making channels
lossy \cite{DBLP:conf/cav/AbdullaBJ98}, half-duplex \cite{DBLP:journals/iandc/CeceF05}, or input-bounded \cite{DBLP:conf/concur/BolligFS20}.
Lange and Yoshida \cite{DBLP:conf/cav/LangeY19} proposed additional notions that resemble ideas from MSTs.
 \section{Conclusion}

We have presented a generalised projection operator for asynchronous MSTs.
The key challenge lies in the generalisation of the external choice to allow roles to receive from more than one sender.
We provide a new projectability check and a soundness theorem that shows projectability implies implementability.
The key to our results is a message causality analysis and an automata-theoretic soundness proof.
With a prototype implementation, we have demonstrated that our MST framework can project examples from the literature as well as new examples, including typical communication patterns in distributed computing, which were not projectable by previous projection operators.
 
\phantomsection\label{paper-last-page}
\bibliographystyle{plain}

\clearpage
\appendix
\section{Communicating State Machines}
\label{app:csms}

A \emph{communicating state machine} (CSM) $\mathcal{A} = \CSM{A}$
over $\Procs$ and $\MsgVals$ consists of a state machine
${A}_\procA$
over $\Sigma_\procA$ for each $\procA\in\Procs$.
A state machine for $\procA$ will be denoted by $(Q_\procA, \Sigma_\procA, \delta_\procA, q_{0, \procA}, F_\procA)$.
If a state $q$ has multiple outgoing transitions, all labelled with send actions, then $q$ is called an \emph{internal choice} state.
If all the outgoing transitions are labelled with receive actions, $q$ is called an \emph{external choice} state.
Otherwise, $q$ is a \emph{mixed choice} state.
In this paper, we only consider state machines without mixed choice states.

Intuitively, a CSM represents a set of state machines, one for each role in $\Procs$,
interacting via message sends and receipts.
We assume that each pair of roles $\procA, \procB\in \Procs$, $\procA \neq\procB$, is connected
by a \emph{message channel}.
A transition $q_\procA \xrightarrow{\snd{\procA}{\procB}{\val}} q'_\procA$ in the state machine of $\procA$ specifies
that, when $\procA$ is in the state $q_\procA$, it sends a message $\val$ to $\procB$, and updates its local state to $q'_\procA$.
The message $\val$ is appended to the channel $\channel{\procA}{\procB}$.
Similarly, a transition $q_\procB \xrightarrow{\rcv{\procA}{\procB}{\val}} q'_\procB$ in the state machine of $\procB$
specifies that $\procB$ in state $q_\procB$ can retrieve the message $\val$ from the head of the channel $\channel{\procA}{\procB}$ and update
its local state to $q'_\procB$.

Let $\channels = \set{\channel{\procA}{\procB} \mid \procA,\procB\in \Procs, \procA\neq \procB}$ denote the set of channels.
The set of global states of the CSM is given by $\prod_{\procA \in \Procs} Q_\procA$.
For a global state $q$, we write $q_\procA$ for the state of $\procA$ in~$q$.
A \emph{configuration} of $\mathcal{A}$ is a pair $(q, \xi)$, where $q$ is a global state and
$\xi : \channels \rightarrow \MsgVals^*$ maps each channel to the queue of messages currently in the channel.
The initial configuration is $(q_0, \xi_\emptystring)$, where $q_{0,\procA}$ is the initial state of $A_\procA$ for each $\procA\in\Procs$ and
$\xi_\emptystring$ maps each channel to $\emptystring$.
A~configuration $(q, \xi)$ is \emph{final} iff $q_\procA$ is final for every $\procA$ and $\xi = \xi_\emptystring$.

In a global move of a CSM, a single role executes a local transition to change its state, while all other roles remain stationary.
For a send or a receive action, the corresponding channel is updated, while all other channels remain unchanged.
Formally, the global transition relation $\rightarrow$ on configurations is defined as follows:
\vspace{-2ex}
\begin{itemize}
\item
$(q,\xi) \xrightarrow{\snd{\procA}{\procB}{\val}} (q',\xi')$ if
$(q_\procA, \snd{\procA}{\procB}{\val}, q'_\procA)\in\delta_\procA$,
$q_\procC = q'_\procC$ for every $\procC \neq \procA$,
$\xi'(\channel{\procA}{\procB}) =  \xi(\channel{\procA}{\procB})\cdot\val$ and $\xi'(c) = \xi(c)$ for every other channel $c\in \channels$.

\item
$(q,\xi) \xrightarrow{\rcv{\procA}{\procB}{\val}} (q',\xi')$ if
$(q_\procB, \rcv{\procA}{\procB}{\val}, q'_\procB)\in\delta_\procB$,
$q_\procC = q'_\procC$ for every $\procC \neq \procB$,
$\xi(\channel{\procA}{\procB}) = \val\cdot \xi'(\channel{\procA}{\procB})$
and $\xi'(c) = \xi(c)$ for every other channel $c\in \channels$.

\item
$(q,\xi) \xrightarrow{\tau} (q',\xi)$ if
$(q_\procA, \emptystring, q'_\procA)\in\delta_\procA$ for some role
$\procA$, and
$q_\procB = q'_\procB$ for every role $\procB \neq \procA$.
\end{itemize}
A run of the CSM is a finite or infinite sequence: $
(q_0, \xi_0) \xrightarrow{x_0} (q_1, \xi_1) \xrightarrow{x_1} \ldots, 
$ such that $(q_0,\xi_0)$ is the initial configuration and for each $i \geq 0$, we have  $(q_i,\xi_i) \xrightarrow{x_i} (q_{i+1},\xi_{i+1})$.
The \emph{trace} of the run, written $\trace_{\mathcal{A}}(\rho)$, is the finite or infinite word $x_0x_1\ldots\in \Sigma^\infty$.
We also call $x_0 x_1\ldots$ an \emph{execution prefix}.
We may omit the subscript $\mathcal{A}$ when clear from context.
A run is maximal if it is infinite or if it is finite and ends in a final configuration. A trace is maximal if it is the trace of a maximal run.
The language $\lang(\mathcal{A})$ of the CSM $\mathcal{A}$ is the set of maximal traces.
A CSM is \emph{deadlock~free} if every finite run can be extended to a maximal~run.

\subparagraph*{Notations.}
From now on, we fix $\Alphabet = \Alphabet_\semglobal(\GG)$.
Let $w \in \Alphabet$.
We write $w\wproj_{\snd{\procA}{\procB}{\_}}$  and $w\wproj_{\rcv{\procB}{\procA}{\_}}$ to denote the projection of $w$ onto the send actions from $\procA$ to $\procB$ and receive actions of $\procA$ from $\procB$, respectively.
We write $\MsgVals(w)$ to project the send and receive actions in $w$ onto their message values.

Equipped with this notation, we define conditions that specify correct executions of a distributed message-passing system.

\begin{definition}\begin{enumerate}
\item \emph{\Channelcompliant}:
    A word $w\in \Sigma^\infty$ is \channelcompliant if messages are received after they are sent and, between two processes,
    the reception order is the same as the send order.
    Formally, for each prefix $w'$ of $w$, we require
    $\MsgVals(w'\wproj_{\rcv{\procA}{\procB}{\_}})$ to be a prefix of
    $\MsgVals(w'\wproj_{\snd{\procA}{\procB}{\_}})$, for every $\procA,\procB \in \Procs$.
\item \emph{Complete}:
    A \channelcompliant word $w\in\Sigma^\infty$ is \emph{complete} if it is infinite or the send and receive events match:
    if $w \in \Sigma^*$, then $\MsgVals(w\wproj_{\snd{\procA}{\procB}{\_}}) = \MsgVals(w\wproj_{\rcv{\procA}{\procB}{\_}})$ for every $\procA,\procB \in \Procs$.
\end{enumerate}
\end{definition}

The following lemma summarises that execution prefixes of CSMs satisfy these conditions.

\begin{lemma}
\label{lm:execution-prefix-and-channel-content}
Let $\CSM{A}$ be a CSM.
For any run $(q_0, \xi_0) \xrightarrow{x_0} \cdots \xrightarrow{x_n} (q,\xi)$ with trace $w = x_0\ldots x_n$,
it holds that (1) $\xi(\channel{\procA}{\procB}) = u$
where $\MsgVals(w \wproj_{\snd{\procA}{\procB}{\_}}) = \MsgVals(w \wproj_{\rcv{\procA}{\procB}{\_}}).u$
for every pair of roles $\procA, \procB \in \Procs$
and
(2) $w$ is \channelcompliant.
Maximal traces of $\CSM{A}$ are \channelcompliant and complete.
\label{lm:csm-channelcompliant-traces-only}
\end{lemma}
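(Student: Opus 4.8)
The plan is to prove statement~(1) by induction on the length of the run and then obtain statement~(2) and the final sentence as corollaries. For the base case the run is the single initial configuration $(q_0,\xi_\emptystring)$ with empty trace $w=\emptystring$: every channel is empty and both $\MsgVals(w\wproj_{\snd{\procA}{\procB}{\_}})$ and $\MsgVals(w\wproj_{\rcv{\procA}{\procB}{\_}})$ equal $\emptystring$, so the claimed identity holds with $u=\emptystring$ for every pair $\procA,\procB$. For the inductive step I would write the run as a shorter run ending in $(q',\xi')$ with trace $w'$, followed by one transition $x$ reaching $(q,\xi)$, so $w=w'x$, and apply the induction hypothesis to the prefix, which supplies $\MsgVals(w'\wproj_{\snd{\procA}{\procB}{\_}})=\MsgVals(w'\wproj_{\rcv{\procA}{\procB}{\_}})\cdot\xi'(\channel{\procA}{\procB})$ for every pair.

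The core is a case analysis on the last transition $x$, driven by the definition of the global transition relation. If $x=\snd{\procC}{\procD}{\val}$, then only $\channel{\procC}{\procD}$ changes, with $\xi(\channel{\procC}{\procD})=\xi'(\channel{\procC}{\procD})\cdot\val$, while $w\wproj_{\snd{\procC}{\procD}{\_}}$ gains one letter (of value $\val$) and $w\wproj_{\rcv{\procC}{\procD}{\_}}$ is unchanged; appending $\val$ to both sides of the hypothesis for this channel re-establishes the identity, and every other channel inherits it verbatim. If $x=\rcv{\procC}{\procD}{\val}$, the enabling condition for a receive forces $\xi'(\channel{\procC}{\procD})=\val\cdot\xi(\channel{\procC}{\procD})$, i.e.\ the hypothesis' witness $u'$ for this channel begins with $\val$; since now $w\wproj_{\rcv{\procC}{\procD}{\_}}$ gains $\val$ while $w\wproj_{\snd{\procC}{\procD}{\_}}$ is unchanged, the leading $\val$ simply moves from the channel content to the received prefix and the identity holds with the shortened witness $\xi(\channel{\procC}{\procD})$. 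The $\tau$-case touches neither channels nor projections, so the hypothesis carries over unchanged. This receive case is the one step that needs care: it relies precisely on the transition being enabled only when the matching value sits at the head of the channel, which is exactly what closes the FIFO bookkeeping.

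Statement~(2) is then immediate, because every prefix $w''$ of $w$ is itself the trace of a prefix of the run; applying~(1) to that prefix yields $\MsgVals(w''\wproj_{\rcv{\procA}{\procB}{\_}})$ as a prefix of $\MsgVals(w''\wproj_{\snd{\procA}{\procB}{\_}})$ for every $\procA,\procB$, which is exactly \channelcompliancy. For the final sentence, a maximal trace is either infinite or finite and ending in a final configuration. \Channelcompliant maximal traces: for a finite one this is~(2) directly, and for an infinite one it follows because the defining prefix condition quantifies only over finite prefixes, each of which is the trace of a finite run and hence handled by~(1). For completeness, infinite maximal traces are complete by definition, whereas a finite maximal trace ends in a final configuration where, by definition, every channel is empty ($\xi=\xi_\emptystring$); feeding $\xi(\channel{\procA}{\procB})=\emptystring$ into~(1) gives $\MsgVals(w\wproj_{\snd{\procA}{\procB}{\_}})=\MsgVals(w\wproj_{\rcv{\procA}{\procB}{\_}})$ for every pair, which is completeness. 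Overall this is a routine invariant induction, and the only genuinely delicate point is the FIFO bookkeeping in the receive case described above.
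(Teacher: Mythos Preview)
Your proof is correct and follows essentially the same approach as the paper: induction on the length of the run with a case analysis on the final transition (send, receive, $\tau$), then deriving \channelcompliancy and completeness of maximal traces from~(1). Your presentation is in fact slightly more explicit than the paper's in separating the derivation of~(2) from~(1) and in spelling out the infinite-trace case for the final sentence.
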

\begin{proof}
We prove the claims by induction on an execution prefix $w$.
The base case where $w = \emptystring$ is trivial.
For the induction step, we consider $wx$ with the following run in $\CSM{A}$:
$
 (q_0, \xi_0) \redtoover{w}
 (q, \xi) \redtoover{x}
 (q', \xi').
$
The induction hypothesis holds for $w$ and $(q, \xi)$ and we prove the claims for $(q', \xi')$ and $wx$.
We do a case analysis on $x$.
If $x = \tau$, the claim trivially follows.

Let $x = \rcv{\procA}{\procB}{\val}$.
From the induction hypothesis, we know that $\xi(\channel{\procA}{\procB}) = u $
where $\MsgVals(w \wproj_{\snd{\procA}{\procB}{\_}}) =
\MsgVals(w \wproj_{\rcv{\procA}{\procB}{\_}}).u$.
Since $x$ is a possible transition, we know that $u = \val.u'$ for some~$u'$ and $\xi'(\channel{\procA}{\procB}) = u'$.
By definition, it holds that
$\MsgVals(w \wproj_{\rcv{\procA}{\procB}{\_}}).\val.u'
= \MsgVals((wx) \wproj_{\rcv{\procA}{\procB}{\_}}).u'$.
For all other pairs of roles, the induction hypothesis applies since the above projections do not change.
Hence, $wx$ is \channelcompliant.

Let $x = \snd{\procA}{\procB}{\val}$.
From the induction hypothesis, we know that $\xi(\channel{\procA}{\procB}) = u $
where $\MsgVals(w \wproj_{\snd{\procA}{\procB}{\_}}) =
\MsgVals(w \wproj_{\rcv{\procA}{\procB}{\_}}).u$.
Since $x$ is a possible transition, we know that $\xi'(\channel{\procA}{\procB}) = u.\val$.
By definition and induction hypothesis, we have:
$
\MsgVals((wx) \wproj_{\snd{\procA}{\procB}{\_}}) =
 \MsgVals(w \wproj_{\snd{\procA}{\procB}{\_}}).\val =
\MsgVals(w \wproj_{\rcv{\procA}{\procB}{\_}}).u.\val.
$
For all other combinations of roles, the induction hypothesis applies since the above projections do not change.
Hence, $wx$ is \channelcompliant.

From (1) and (2), it follows directly that maximal traces of $\CSM{A}$ are \channelcompliant and complete.
\end{proof}
 \section{Properties of $\interswaplang$}
\label{app:props-interswaplang}

\begin{lemma}
\label{lm:proc-lang-closed-interswap}
Let $L \subseteq \Alphabet_{\procA}^\infty$.
Then, $L = \interswaplang(L)$.
\end{lemma}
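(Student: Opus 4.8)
The plan is to establish the two inclusions $L \subseteq \interswaplang(L)$ and $\interswaplang(L) \subseteq L$ separately. The first holds for \emph{every} language, by reflexivity: a finite $w \in L$ satisfies $w \interswap_0 w$, and an infinite $w \in L$ satisfies $w \preceq_\interswap^\omega w$ because every finite prefix $u$ of $w$ has $u.\emptystring \interswap u$; in each case $w$ falls into the matching disjunct of the definition of $\interswaplang(L)$. All the content is therefore in the converse inclusion, which I would reduce to the key lemma that $\interswap$ is the identity on single-role words: \emph{if $v \in \Sigma_\procA^*$ and $u \interswap v$, then $u = v$}.

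For the key lemma, I would first observe that each application of $\interswap_1$ only transposes two adjacent letters, so $\interswap$ preserves the multiset of letters of a word; hence $u \interswap v$ with $v \in \Sigma_\procA^*$ already forces $u$ to lie in $\Sigma_\procA^*$, and indeed every word on a chain of $\interswap_1$ steps from $u$ to $v$ lies in $\Sigma_\procA^*$. It then suffices to check that no $\interswap_1$ step applies to a word all of whose letters are in $\Sigma_\procA$, by going through the four rules: over $\Sigma_\procA$ every send letter has sender $\procA$ and every receive letter has receiver $\procA$, so rule~(1) cannot fire (it needs two adjacent sends with distinct senders), rule~(2) cannot fire (two adjacent receives with distinct receivers), rule~(3) cannot fire (it needs the sender of the send to differ from the receiver of the receive, but both are $\procA$), and rule~(4) is vacuous (it would need a channel from $\procA$ to itself). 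Since $\interswap$ is the reflexive--transitive closure of $\interswap_1$, any such chain has length zero, so $u = v$.

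Given the lemma, $\interswaplang(L) \subseteq L$ splits into a finite and an infinite case. If $w' \in \Sigma^*$ with $w' \interswap w$ for some $w \in L \subseteq \Sigma_\procA^\infty$, then $w \in \Sigma_\procA^*$ and the lemma gives $w' = w \in L$. If $w' \in \Sigma^\omega$ with $w' \preceq_\interswap^\omega w$ for some $w \in L$ (so $w \in \Sigma_\procA^\omega$), then for each finite prefix $u'$ of $w'$ there are a finite prefix $v' \in \Sigma_\procA^*$ of $w$ and a word $z$ with $u'.z \interswap v'$; the lemma yields $u'.z = v'$, so $u'$ is a prefix of $v'$ and hence of $w$. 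As every finite prefix of the infinite word $w'$ is then a prefix of $w$, we conclude $w' = w \in L$. The one step that genuinely needs care is the rule-by-rule inspection in the key lemma---reading each rule's role positions correctly (the active role is written first) and noticing that rule~(4)'s pattern is impossible over $\Sigma_\procA$ rather than merely forbidden by its side condition; the rest is routine unwinding of the definitions of $\preceq_\interswap$ and $\preceq_\interswap^\omega$.
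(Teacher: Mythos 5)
Your proof is correct and follows essentially the same route as the paper's (much terser) argument: the whole content is the observation that none of the four $\interswap_1$ rules can fire on a word over $\Sigma_\procA$, since each rule's side condition forces the two adjacent events to have distinct active roles (or, for rule~(4), a self-channel), whence $\interswap$ restricted to such words is the identity. Your additional care with the multiset-preservation step and with the asymmetric $\preceq_\interswap^\omega$ closure in the infinite case only makes explicit what the paper leaves implicit.
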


\begin{proof}
For any $w\in \Alphabet_{\procA}^\infty$,  none of the rules of $\interswap_1$ applies to $w$, and we have that
$w \interswap w'$ iff $w = w'$.
Thus, $L = \interswaplang(L)$ for any language $L \subseteq \Alphabet_{\procA}^\infty$.
\end{proof}

\begin{lemma}
\label{lm:csm-closed-under-interswap}
Let $\CSM{A}$ be a CSM.
Then, for every finite $w$ with a run in $\CSM{A}$ and every $w' \interswap w$, $w'$ has a run that ends with the same configuration.
The language $\lang(\CSM{A})$ is closed under $\interswap$: $\lang(\CSM{A}) = \interswaplang(\lang(\CSM{A}))$.
\end{lemma}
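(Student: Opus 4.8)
The plan is to reduce the statement to a local commutation (``diamond'') property of the CSM transition relation and then bootstrap. Since $\interswap$ is the equivalence relation generated by $\interswap_1$, for the first claim it suffices to treat a single swap and induct on the length of a chain of single swaps witnessing $w' \interswap w$, each step being reversible because the commutation below is symmetric in the two swapped events. So write $w = v.a.b.u$ and $w' = v.b.a.u$ where the swap $a.b \mapsto b.a$ is licensed by one of the four rules, and fix a run of $\CSM{A}$ on $w$ passing through configuration $(q,\xi)$ just after $v$ and through $(q'',\xi'')$ just after $a.b$. I will show $(q,\xi) \xrightarrow{b}\; \cdots \;\xrightarrow{a} (q'',\xi'')$ as well, transporting any interspersed $\varepsilon$-steps (those performed by the role of $b$ to just before $b$, those performed by the role of $a$ to just after $a$, the rest being independent of both). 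Since the suffix $u$ then runs unchanged, $w'$ has a run ending in exactly the final configuration of the $w$-run; if $w$ is a finite maximal trace, that configuration is final, so $w'$ is maximal too.

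The heart is the four-case commutation lemma, where each rule's side condition is precisely what is needed. In every case the condition forces $a$ and $b$ to be performed by distinct roles, hence to touch disjoint components of the global state, so it remains to see that their channel operations commute. Rule~(1): two sends with $\procA \neq \procC$ append to $\channel{\procA}{\procB}$ and $\channel{\procC}{\procD}$, which are distinct, and appends to distinct channels commute. Rule~(2): two receives with $\procB \neq \procD$ pop from the distinct channels $\channel{\procA}{\procB}$ and $\channel{\procC}{\procD}$. Rule~(3): a send by $\procA$ and a receive by $\procD$ with $\procA \neq \procD$; the clause $\procA \neq \procC \lor \procB \neq \procD$ makes $\channel{\procA}{\procB}$ and $\channel{\procC}{\procD}$ distinct, so the pop's enabledness and effect are independent of the append. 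Rule~(4): a send $\snd{\procA}{\procB}{\val}$ and a receive $\rcv{\procA}{\procB}{\val'}$ on the same channel but by different roles; here I invoke Lemma~\ref{lm:execution-prefix-and-channel-content}: after $v$ the content of $\channel{\procA}{\procB}$ has length $\card{v\wproj_{\snd{\procA}{\procB}{\_}}} - \card{v\wproj_{\rcv{\procA}{\procB}{\_}}} \geq 1$ by the side condition, so $\channel{\procA}{\procB}$ is non-empty at $(q,\xi)$ and, since the send only appends to the tail, its head equals the value $\val'$ received by $b$; hence $b$ is enabled first and pops $\val'$, after which the send appends, reaching the same channel content and the same (disjoint) local states as before.

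For the language equation, ``$\subseteq$'' is immediate since $w \interswap w$ for finite $w$ and $w \preceq_\interswap^\omega w$ for infinite $w$. For ``$\supseteq$'', take $w' \in \interswaplang(\lang(\CSM{A}))$. If $w'$ is finite, pick a finite $w \in \lang(\CSM{A})$ with $w' \interswap w$; $w$ ends in a final configuration, so by the first claim so does a run on $w'$, whence $w' \in \lang(\CSM{A})$. If $w'$ is infinite, pick an infinite $w \in \lang(\CSM{A})$ with $w' \preceq_\interswap^\omega w$; for every finite prefix $p$ of $w'$ there are a finite prefix $r$ of $w$ and a word $z$ with $p.z \interswap r$, and $r$ (a prefix of the run on $w$) has a run in $\CSM{A}$, so by the first claim $p.z$, hence its prefix $p$, is the trace of a run of $\CSM{A}$. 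Thus every finite prefix of $w'$ is realised. To glue these into one infinite run I apply K{\"o}nig's lemma to the tree of finite runs whose trace is a prefix of $w'$: by Lemma~\ref{lm:execution-prefix-and-channel-content} the channel contents after a fixed trace prefix are determined, and each local state machine is finite, so only finitely many configurations are reachable after each prefix of $w'$; the tree is therefore finitely branching and infinite, and an infinite branch yields an infinite (hence maximal) run with trace $w'$, so $w' \in \lang(\CSM{A})$.

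The four-case lemma is routine bookkeeping once set up; the care is in two places. First, matching rule~(4)'s numeric side condition to ``$\channel{\procA}{\procB}$ is non-empty'' via Lemma~\ref{lm:execution-prefix-and-channel-content}. Second, the infinite case, where one must respect the asymmetry of $\preceq_\interswap^\omega$ (prefixes of $w'$ relate to prefixes of $w$, not conversely --- exactly what captures unfair schedules such as $(\snd{\procA}{\procB}{\val})^\omega \preceq_\interswap^\omega (\snd{\procA}{\procB}{\val}.\rcv{\procA}{\procB}{\val})^\omega$) and invoke K{\"o}nig's lemma using the finite-reachability fact. This last step is what I expect to be the main obstacle.
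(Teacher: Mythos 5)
Your proof follows essentially the same route as the paper's: induction on the number of elementary swaps, a four-case commutation argument in which the side conditions guarantee distinct active roles and (except for rule~(4)) distinct channels, with \cref{lm:execution-prefix-and-channel-content} supplying non-emptiness of the channel in the rule-(4) case, and then a finite/infinite split for the language equality, using K\"onig's lemma on the finitely branching tree of runs over prefixes of $w'$ for the infinite case. The differences are only presentational (you are more explicit about transporting the interspersed $\varepsilon$-steps and about why every finite prefix of an infinite $w'$ is realised as a trace), so there is nothing to add.
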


\begin{proof}

Let $w$ be a finite word with a run in $\CSM{A}$ and $w' \interswap w$.
By definition, $w' \interswap_n w$ for some $n$.
We prove that $w'$ has a run that ends in the same configuration by induction on $n$.
The base case for $n = 0$ is trivial.
For the induction step, we assume that the claim holds for $n$ and prove it for $n+1$.
Suppose that $w \interswap_{n+1} w'$.
Then, there is $w''$ such that $w' \interswap_1 w''$ and $w'' \interswap_n w$.
By assumption, we know that $w' = u' u_1 u_2 u''$ and $w'' = u' u_2 u_1 u''$ for some $u', u'' \in \Alphabet^*$, $u_1, u_2 \in \Alphabet$.
By induction hypothesis, we know that $w'' \in \lang(\CSM{A})$ so there is run for $w''$ in $\CSM{A}$.
Let us investigate the run at $u_1$ and $u_2$:
$
 \cdots
 (q_1, \xi_1)
    \redtoover{u_1}
 (q_2, \xi_2)
    \redtoover{u_2}
 (q_3, \xi_3)
 \cdots.
$
It suffices to show that
$
 \cdots
 (q_1, \xi_1)
    \redtoover{u_2}
 (q'_2, \xi'_2)
    \redtoover{u_1}
 (q_3, \xi_3)
 \cdots
$
is possible in $\CSM{A}$ for some configuration $(q'_2, \xi'_2)$.
We do a case analysis on the rule that was applied for $w' \interswap_1 w''$.
\vspace{-2ex}
\begin{itemize}
\item
 $u_1 = \snd{\procA}{\procB}{\val}$,
 $u_2 = \snd{\procC}{\procD}{\val'}$, and
 $\procA ≠ \procC$:

 We define $q'_2$ such that
 $q'_{2,\procA} = q_{1,\procA}$,
 $q'_{2,\procC} = q_{3,\procC}$,
 and
 $q'_{2,\procE} = q_{3,\procE}$
 for all $\procE \in \Procs$ with $\procE \neq \procA$ and $\procE \neq \procC$.
 Both transitions are feasible in $\CSM{A}$ because both $\procA$ and $\procC$ are different and send a message to different channels.
 They can do this independently from each other.
\item
 $u_1 = \rcv{\procA}{\procB}{\val}$,
 $u_2 = \rcv{\procC}{\procD}{\val'}$, and
 $\procB ≠ \procD$:

 We define $q'_2$ such that
 $q'_{2,\procB} = q_{1,\procB}$,
 $q'_{2,\procD} = q_{3,\procD}$,
 and
 $q'_{2,\procE} = q_{3,\procE}$
 for all $\procE \in \Procs$ with $\procE \neq \procA$ and $\procE \neq \procC$.
 Both transitions are feasible in $\CSM{A}$ because both $\procB$ and $\procD$ are different and receive a message from a different channel.
 They can do this independently from each other.
\item
 $u_1 = \snd{\procA}{\procB}{\val}$,
 $u_2 = \rcv{\procC}{\procD}{\val'}$, and
 and $\procA ≠ \procD \land (\procA ≠ \procC ∨ \procB ≠ \procD)$.

 We define $q'_2$ such that
 $q'_{2,\procA} = q_{1,\procA}$,
 $q'_{2,\procD} = q_{3,\procD}$,
 and
 $q'_{2,\procE} = q_{3,\procE}$
 for all $\procE \in \Procs$ with $\procE \neq \procA$ and $\procE \neq \procC$.
 Let us do a case split according to the side conditions.
 First, let $\procA \neq \procD$ and $\procA \neq \procC$.
 The channels of $u_1$ and $u_2$ are different and $\procA$ and $\procD$ are different, so both transitions are feasible in $\CSM{A}$.

 Second, let $\procA \neq \procD$ and $\procB \neq \procD$.
 The channels of $u_1$ and $u_2$ are different and $\procB$ and $\procD$ are different, so both transitions are feasible in $\CSM{A}$.
\item
 $u_1 = \snd{\procA}{\procB}{\val}$,
 $u_2 = \rcv{\procA}{\procB}{\val'}$,
 and $\card{u' \wproj_{\snd{\procA}{\procB}{\_}}} >
   \card{u' \wproj_{\rcv{\procA}{\procB}{\_}}}$:

 We define $q'_2$ such that
 $q'_{2,\procA} = q_{1,\procA}$,
 $q'_{2,\procB} = q_{3,\procB}$,
 and
 $q'_{2,\procE} = q_{3,\procE}$
 for all $\procE \neq \procA$ and $\procE \neq \procB$.

 In this case, the channel of $u_1$ and $u_2$ is the same but the side condition ensures that $u_2$ actually has a different message read since the channel $\xi_1(\channel{\procA}{\procB})$ is not empty by
 \cref{lm:execution-prefix-and-channel-content}
 and, hence, both transitions can act independently and lead to the same configuration.
\end{itemize}

This proves that $w'$ has a run in $\CSM{A}$ that ends in the same configuration which concludes the proof of the first claim.

For the second claim, we know that $\lang(\CSM{A}) \subseteq \interswaplang(\lang(\CSM{A}))$ by definition.
Hence, it suffices to show that
$\interswaplang(\lang(\CSM{A})) \subseteq \lang(\CSM{A})$.

We show the claim for finite traces first:
\[
    \interswaplang(\lang(\CSM{A})) \inters \Alphabet^*
    \subseteq
    \lang(\CSM{A}) \inters \Alphabet^*.
\]
Let $w' \in \interswaplang(\lang(\CSM{A})) \; \inters \; \Alphabet^*$.
There is $w \in \lang(\CSM{A}) \; \inters \; \Alphabet^*$ such that $w \interswap w'$.
By definition, $w$ has a run in $\CSM{A}$ which ends in a final configuration.
From the first claim, we know that $w'$ also has a run that ends in the same configuration which is final.
Therefore, $w \in \lang(\CSM{A}) \inters \Alphabet^*$.
Hence, the claim holds for finite traces.

It remains to show the claim for infinite traces.
To this end, we show that for every execution prefix $w$ of $\CSM{A}$ such that $w \interswap u$ for $u \in \pref(\lang(\CSM{A}))$ and any continuation $x$ of $w$, i.e., $wx$ is an execution prefix of $\CSM{A}$, it holds that $wx \interswap ux$ and $ux \in \pref(\lang(\CSM{A}))$ $(\square)$.
We know that $w \interswap_n u$ for some $n$ by definition, so $wx \interswap_n ux$ since we can mimic the same $n$ swaps when extending both $w$ and $u$ by $x$.
From the first claim, we know that $\CSM{A}$ is in the same configuration $(q, \xi)$ after processing $w$ and~$u$.
Therefore, $ux$ is an execution prefix of $\CSM{A}$ because $wx$~ is which yields $(\square)$.

We show that
\[
    \interswaplang(\lang(\CSM{A}) \inters \Alphabet^\omega
    \subseteq
    \lang(\CSM{A}) \inters \Alphabet^\omega.
\]
Let $w \in \interswaplang(\lang(\CSM{A}) \, \inters \, \Alphabet^\omega$.
We show that $w$ has an infinite run in $\CSM{A}$.

Consider a tree $\mathcal{T}$ where each node corresponds to a run $\rho$ on some finite prefix $w' \leq w$ in $\CSM{A}$.
The root is labelled by the empty run.
The children of a node $\rho$ are runs that extend $\rho$ by a single transition --- these exist by $(\square)$.
Since our CSM, derived from a global type, is built from a finite number of finite state machines, $\mathcal{T}$ is finitely branching.
By König's Lemma, there is an infinite path in~$\mathcal{T}$ that corresponds to an infinite run for $w$ in $\CSM{A}$, so $w \in \lang(\CSM{A}) \inters \Alphabet^\omega$.
\end{proof}

\begin{lemma}
\label{lm:interswap-complete-wrt-channelcompliancy}
Let $w \in \Alphabet^\infty$ be \channelcompliant.
Then, $w \interswap w'$
iff $w'$ is \channelcompliant and
$w \wproj_{\Alphabet_\procA} = w' \wproj_{\Alphabet_\procA}$ for all roles $\procA \in\Procs$.
\end{lemma}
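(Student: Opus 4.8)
The plan is to prove the two implications separately, in each case reducing the statement about infinite words to the one about finite words.

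For the forward direction it suffices to show that a single step $w \interswap_1 w'$ preserves both properties, and then lift to $\interswap$ by induction on $n$ (for finite words) and to infinite words through the definition of $\preceq_\interswap^\omega$. Each of the four rules for $\interswap_1$ swaps two adjacent events whose \emph{active roles} (the first role named in the event) differ, and since every letter of $\Alphabet$ belongs to $\Alphabet_\procA$ exactly for its active role $\procA$, such a swap leaves every projection $w \wproj_{\Alphabet_\procA}$ unchanged; hence so does $\interswap$. For \channelcompliancy: in rules (1), (2) and (3) the two swapped events act on different channels (this is exactly what the side conditions guarantee), so their order is immaterial; in rule (4) they act on the same channel, but the side condition says that channel is non-empty before the send, so the receive does not consume the just-sent message and \channelcompliancy is maintained. (Alternatively, the \channelcompliancy part is an instance of \cref{lm:csm-closed-under-interswap} applied to the universal CSM whose maximal traces are all complete \channelcompliant words.) For infinite $w \interswap w'$ we use $w' \preceq_\interswap^\omega w$: every finite prefix $p'$ of $w'$ satisfies $p'.z \interswap q'$ for some finite $z$ and a finite prefix $q' \leq w$; since $q'$ is \channelcompliant, so is $p'.z$ by the finite case, hence so is its prefix $p'$; matching the per-role projections of prefixes (and symmetrically using $w \preceq_\interswap^\omega w'$) then yields $w \wproj_{\Alphabet_\procA} = w' \wproj_{\Alphabet_\procA}$.

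For the converse with $w,w'$ finite, I argue by induction on $\card{w} = \card{w'}$ (they are equal because the projections partition the letters). The empty case is trivial. Otherwise write $w = v.a$ with $a$ the last letter and $\procA$ its active role; then $a$ is also the last $\Alphabet_\procA$-letter of $w'$, so $w' = v_1 . a . v_2$ with no $\Alphabet_\procA$-letter in $v_2$. The key claim is that $a$ can be \emph{bubbled rightwards} through $v_2$, i.e. $w' \interswap v_1 . v_2 . a =: \tilde w . a$, by $\card{v_2}$ single swaps of $a$ with the successive letters of $v_2$. Granting it, $\tilde w . a \interswap w'$ is \channelcompliant by the forward direction, hence so is its prefix $\tilde w$; moreover $\card{\tilde w} = \card{v}$ and $\tilde w \wproj_{\Alphabet_\procB} = v \wproj_{\Alphabet_\procB}$ for every $\procB$ (both are obtained from $w' \wproj_{\Alphabet_\procB}$, respectively $w \wproj_{\Alphabet_\procB}$, by deleting the trailing $a$). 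So the induction hypothesis gives $v \interswap \tilde w$, whence $w = v.a \interswap \tilde w . a \interswap w'$.

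The bubbling claim is where the main difficulty lies. When $a$ is moved past its current neighbour $c \in v_2$, the two events have distinct active roles, so if they are both sends, both receives, or a send and a receive on different channels, one of rules (1)--(3) applies. The one delicate case is a send and a receive on the \emph{same} channel, where only rule (4) could apply and its side condition demands that channel be non-empty at that position. Here \channelcompliancy does the work: because $a$ is the last $\Alphabet_\procA$-event of $w'$, all $\Alphabet_\procA$-events on that channel lie in $v_1$, and counting sends versus receipts on the channel in the prefix $v_1 . a$ of $w'$ — together with \channelcompliancy of $w$ in the sub-case where $a$ is itself a send, which forces a strict gap in $v$ — shows the required inequality holds at every intermediate arrangement, so every swap is admissible. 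Finally, for infinite \channelcompliant $w, w'$ with equal per-role projections, I show $w \preceq_\interswap^\omega w'$ (the converse being symmetric), which yields $w \interswap w'$: for a finite prefix $p \leq w$ take the shortest prefix $q \leq w'$ with $p \wproj_{\Alphabet_\procA}$ a prefix of $q \wproj_{\Alphabet_\procA}$ for all $\procA$; a standard scheduling argument — run $p$ first, then repeatedly the earliest not-yet-executed event of $q$, which \cref{lm:execution-prefix-and-channel-content} shows is enabled — extends $p$ to a \channelcompliant word $r$ with the same per-role projections as $q$, so by the finite case $q \interswap r$, giving $p \preceq_\interswap q$.
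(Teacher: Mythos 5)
Your proof is correct, but it takes a genuinely different route from the paper. The paper disposes of this lemma in three lines by invoking Gastin's dependence-graph characterisation of trace equivalence: it declares that $w \interswap w'$ iff the two words have isomorphic dependence graphs, and then observes that for \channelcompliant words the per-role orderings pin the graph down. You instead give a self-contained combinatorial argument: the forward direction by checking that each $\interswap_1$ rule preserves projections and \channelcompliancy, and the converse by induction on length with a ``bubbling'' step that moves the last letter of $w$ rightwards through $w'$, plus an explicit scheduling argument for the infinite case. The trade-off is instructive. The paper's appeal to Mazurkiewicz trace theory is short but delicate: rule (4) of $\interswap_1$ carries a context-dependent side condition (whether a send and a receive on the same channel commute depends on the prefix), so $\interswap$ is not induced by a static independence relation on letters, and the dependence-graph argument needs the restriction to \channelcompliant words to even make sense. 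Your direct proof handles exactly this point head-on: the one delicate case of the bubbling — a send and a receive on the same channel — is discharged by the counting argument you sketch, and the arithmetic does work out (the non-strict inequality from \channelcompliancy of the prefix of $w$, combined with the fact that the receive being crossed is itself one of the not-yet-passed receives on that channel, yields the strict inequality that rule (4) demands at every intermediate arrangement). Your phrasing attributes the strictness slightly loosely to ``a strict gap in $v$'' where it really comes from the crossed receive itself, but the claim you assert is true and the details verify. In short: the paper's proof is more economical but leans on an external result whose applicability is not entirely immediate here; yours is longer but elementary and makes the only genuinely subtle step explicit.
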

\begin{proof}
We use the characterisation of $\interswap$ using dependence graphs \cite{DBLP:conf/litp/Gastin90}.
For a word $w$ and a letter $a \in \Sigma$ that appears in $w$, let
$(a,i)$ denote the $i$th occurrence of $a$ in $w$.  Define the dependence graph $(V, E, \lambda)$, where
$V = \set{(a, i) \mid a\in\Sigma, i \geq 1}$,
$E = \set{((a,i), (b,j)) \mid \mbox{$a$ and $b$ cannot be swapped and
    $(a,i)$ occurs before  $(b,j)$ in $w$}}$,
and
$\lambda(a, i) = a$ for all $a\in\Sigma$, $i\geq 1$.
A fundamental result of trace theory states that $w \interswap w'$ iff they have isomorphic dependence graphs.
We observe that for two channel compliant words, the ordering of the letters on each $\Sigma_\procA$ for $\procA\in\Procs$
ensures isomorphic dependence graphs, since the ordering of receives is thus fixed.
\end{proof}

 \section{Implementing \Projectable Types: Proofs}
\label{sec:local-proof}

Let us start with some structural properties of $\semglobal(G)$ and $\semlocal(L)$.

\begin{proposition}[Shape of $\semglobal(G)$ for global type $G$ and $\semlocal(L)$ for local type~$L$]
\label{prop:shape-semantics-local-L}
\label{prop:shape-semantics-global-G}
Let $G$ be a global type and $L$ a local type.
Then, the following holds:
\vspace{-2ex}
\begin{enumerate}[(a)]
 \item every state $q$ in $\semglobal(G)$ or $\semlocal(L)$ either only has non-$\emptystring$-transitions or a single $\emptystring$-transition;
 \item the state that corresponds to the syntactic subterm $0$, if it exists, is the only state without any outgoing transitions.
 \item for each word $w$, there are at most three runs $\rho$, $\rho_1$, and $\rho_2$ in $\semglobal(G)$; moreover, when they exist,
the runs can be chosen such that $\rho_1 := \rho \xrightarrow{\emptystring} q$ and $\rho_2 := \rho_1 \xrightarrow{\emptystring} q'$ for some $q$,~$q'$.
\end{enumerate}
\end{proposition}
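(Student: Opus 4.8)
The plan is to prove the three parts by a direct inspection of the automata $\semglobal(G)$ and $\semlocal(L)$, doing a case analysis on the syntactic shape of the term that labels each state. Parts~(a) and~(b) are a routine finite check, and the structural facts they establish — in particular that $\emptystring$-transitions are ``mandatory'' — are exactly what drives the run-counting argument for~(c).

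For \textbf{(a)} and \textbf{(b)}, I would enumerate the kinds of state. In $M(\GG)$, and likewise in $\semlocal(L)$, a state is one of: the term $0$, which never occurs as the source of a transition in $\delta_{M(\GG)}$ (or in $\delta_{\semlocal(L)}$) and hence has no outgoing transition; a choice term $\sum_{i\in I}\msgFromTo{\procA}{\procB_i}{\val_i.G_i}$ — or, in the local machine, an internal or external choice — whose outgoing transitions are exactly the message (respectively send, receive) transitions for $i\in I$ (with $I\neq\emptyset$), all non-$\emptystring$ and, by the distinctness condition on $(\procB_i,\val_i)$, carrying pairwise distinct labels; a recursion term $\mu t.G'$, with a single outgoing $\emptystring$-transition to $G'$; or a recursion variable $t$, with a single outgoing $\emptystring$-transition to its binder, which exists and is unique by the standing assumption that every variable is bound and all binders are named distinctly. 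Passing from $M(\GG)$ to $\semglobal(\GG)$ only splits each message transition into a send transition into a fresh intermediate state $(s,\msgFromTo{\procA}{\procB}{\val},s')$ and a receive transition out of it — both non-$\emptystring$ — and the intermediate state has exactly this one outgoing transition; it adds no outgoing transition to $0$. So every state of $\semglobal(\GG)$ and $\semlocal(L)$ has either only non-$\emptystring$-transitions or exactly one $\emptystring$-transition, which is~(a), and $0$ is the unique state with no outgoing transition, which is~(b).

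For \textbf{(c)} I would exploit part~(a): $\emptystring$-transitions are forced (a state with an $\emptystring$-transition has no other outgoing transition) and deterministic (at most one per state), while the non-$\emptystring$-transitions out of a choice state carry pairwise distinct labels even after the send/receive split. Consequently, when reading a fixed word $w$, the transition taken is uniquely determined at every step, except that a run may stop early. Next I would note that there is no cycle of $\emptystring$-transitions: guardedness forces a message between $\mu t$ and each $t$, so an $\emptystring$-path cannot re-enter a state; hence the maximal $\emptystring$-path leaving any state is finite, and — under the normalisation of the fixed-point operator used in the paper — has length at most two. Combining these: if $w$ ends with a send letter, the run is unique, since the state reached is an intermediate state, which has no outgoing $\emptystring$-transition; otherwise ($w=\emptystring$, or $w$ ends with a receive letter) there is a unique shortest run $\rho$ reading $w$, and every run reading $w$ is $\rho$ followed by a prefix of the $\emptystring$-path out of the final state of $\rho$. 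This leaves exactly the candidates $\rho$, $\rho_1 := \rho \xrightarrow{\emptystring} q$, and $\rho_2 := \rho_1 \xrightarrow{\emptystring} q'$, which is~(c).

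I expect the only delicate point to be the bound ``three'' in part~(c): it forces one to be explicit about the normal form assumed for recursion (so that $\emptystring$-chains never exceed length two — for instance by normalising the body of each $\mu$ to a choice term) and about where the up to two optional trailing $\emptystring$-steps can sit in a run; parts~(a) and~(b) are an exhaustive but entirely mechanical inspection of the transition rules.
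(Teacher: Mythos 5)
Your proof is correct and follows essentially the same route as the paper's: a mechanical case analysis on the state shapes for (a) and (b), and for (c) the observation that the automaton is deterministic on letters (distinct choice labels) with forced, acyclic $\emptystring$-transitions, so runs on a fixed word differ only by up to two trailing $\emptystring$-steps bounded via guardedness. You are in fact more careful than the paper, which justifies the bound of two consecutive $\emptystring$-transitions with a one-line appeal to guardedness and silently excludes nested binders such as $\mu t.\,\mu s.\,G$ — the exact normalisation issue you flag.
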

\begin{proof}
Fact (a) follows from the definition of $\semlocal(L)$ and $\semglobal(G)$ and the syntax restriction that every variable is only bound once.
Fact (b) also follows directly from the definition of $\semlocal(L)$ and $\semglobal(G)$.
Let us prove Fact (c).
For $\semlocal(G)$ branches occur for sends only.
By definition of the syntax for global types, these send options are always distinct.
Hence, the runs are unique up to $\emptystring$-transitions.
Such $\emptystring$-transitions can only occur in the presence of recursion.
By assumption, recursion is always guarded and hence the number of consecutive $\emptystring$-transitions is bounded by~$2$.
\end{proof}

\subsection{Projection Preserves Behaviours}
\label{sec:global-type-subset-CSM-projected}

\begin{lemma}
\label{lm:pref-MST-has-run}
\label{lm:projections-yield-superset}
Let $\GG$ be a \projectable global type.
Then, the following holds:
\vspace{-1ex}
\begin{enumerate}[(1)]
 \item for every prefix $w \in \pref(\lang(\GG))$, there is a run $\rho$ in $\CSMl{\semlocal(\GG \tproj_\procA)}$ and for every extension $wx$ of $w$ in $\pref(\lang(\GG))$, $\rho$ can be extended
 \item $\lang(\GG) \subseteq \lang(\CSMl{\semlocal(\GG \tproj_\procA)})$.
\end{enumerate}
\end{lemma}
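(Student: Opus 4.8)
The plan is to prove part~(1) first and then read off part~(2) from it. Fix a \projectable global type $\GG$, the family of projections $\GG\tproj_\procA$, and a prefix $w \in \pref(\lang(\GG))$. Since $w$ is a prefix of some maximal trace of $\semglobal(\GG)$, there is a run of $\semglobal(\GG)$, not necessarily maximal, with trace $w$; projecting it letter by letter gives, for each role $\procA$, a run of $\semglobal(\GG)\wproj_{\Alphabet_\procA}$ with trace $w\wproj_{\Alphabet_\procA}$, and hence by \cref{prop:projection-preserves-per-process-runs} a run $\rho_\procA$ of $\semlocal(\GG\tproj_\procA)$ with trace $w\wproj_{\Alphabet_\procA}$. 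I would then assemble these per-role runs into a single run $\rho$ of $\CSMl{\semlocal(\GG\tproj_\procA)}$ with trace $w$ by scanning $w$ from left to right while maintaining a CSM configuration: at each letter, the active role performs its next transition along $\rho_\procA$, preceded by the corresponding $\emptystring$-steps of $\rho_\procA$, which become $\tau$-moves of the CSM. Because $w\wproj_{\Alphabet_\procA}$ is the trace of $\rho_\procA$, the relative order of a role's events in $w$ and in $\rho_\procA$ coincide, so these schedules are mutually consistent; the only coupling between distinct roles is through the channels.

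The only thing that genuinely needs checking in this assembly is that every step is enabled. A send is always enabled. For a receive $\rcv{\procA}{\procB}{\val}$ at some position of $w$, note that $w$ is \channelcompliant: in $\semglobal(\GG)$ every send letter is immediately followed by its matching receive, so in every prefix of $\lang(\GG)$ the sequence of receives on $\channel{\procA}{\procB}$ is a prefix of the sequence of the corresponding sends. Hence at that position $\channel{\procA}{\procB}$ is nonempty, and by the channel-content computation of \cref{lm:execution-prefix-and-channel-content} applied to the CSM run built so far, its head is exactly $\val$, so the receive fires. This yields the run $\rho$ with trace $w$. For the extension claim, suppose $wx \in \pref(\lang(\GG))$; the same construction applied to $wx$ gives a CSM run $\rho'$ with trace $wx$. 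By \cref{prop:shape-semantics-local-L} each $\semlocal(\GG\tproj_\procA)$ is deterministic up to $\emptystring$-transitions, and local-type choices are syntactically distinct by \cref{def:local-type}; together with \cref{lm:execution-prefix-and-channel-content} for the channel contents, this forces the configuration reached after the $w$-portion to be the same in $\rho$ and $\rho'$ once all pending $\emptystring/\tau$-moves are performed. Hence $\rho$ can be extended by the $x$-move of $\rho'$, and since $x$ was an arbitrary extension, a single $\rho$ works for all extensions of $w$.

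For part~(2), take $w \in \lang(\GG)$. If $w$ is finite it is a maximal trace of $\semglobal(\GG)$, so each $w\wproj_{\Alphabet_\procA}$ is a maximal finite trace of $\semlocal(\GG\tproj_\procA)$ by \cref{prop:projection-preserves-per-process-runs}, i.e.\ the per-role run ends in the final state $0$; moreover $w$ is complete, so by \cref{lm:execution-prefix-and-channel-content} all channels are empty after $w$. Thus the CSM run from part~(1) for the prefix $w$ ends in a final configuration, giving $w \in \lang(\CSMl{\semlocal(\GG\tproj_\procA)})$. If $w$ is infinite, part~(1) provides, for each finite prefix of $w$, a CSM run that can be extended along $w$; arranging these runs into a tree and applying König's Lemma (the CSM is finitely branching, being built from a finite number of finite state machines) produces an infinite run with trace $w$, which is maximal by definition. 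In both cases $w$ lies in the CSM language.

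The main obstacle is the assembly step of part~(1): turning the independent per-role runs $\rho_\procA$ into one coherent CSM run, that is, showing that the interleaving dictated by $w$ is executable. This is mostly careful bookkeeping — threading the $\emptystring/\tau$-transitions and using that $w$ is \channelcompliant to see that each reception finds its message at the head of the channel — rather than a conceptually deep argument; its correctness rests precisely on \cref{prop:projection-preserves-per-process-runs}, which guarantees that projection neither reorders a role's events nor drops one.
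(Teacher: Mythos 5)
Your proposal is correct and follows essentially the same route as the paper: both build the CSM run for a prefix of $\lang(\GG)$ step by step from \cref{prop:projection-preserves-per-process-runs}, use channel-compliance together with \cref{lm:execution-prefix-and-channel-content} to see that each reception finds its message at the head of the FIFO channel, and handle infinite words in part~(2) via K\"onig's Lemma on a finitely branching tree of runs. The only cosmetic difference is that your left-to-right assembly of the per-role runs is the paper's induction on $|w|$ in disguise, and you obtain the extension property of part~(1) from determinism of the local automata rather than directly from the inductive construction; both are sound.
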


\begin{proof}
For (1), let $w \in \pref(\lang(\GG))$.
We prove the claim by induction on the length of $w$.

The base case, $w = \emptystring$, is trivial.
For the induction step, we assume that the claim holds for $w$ with run $\rho$ and we extend $\rho$ for $wx \in \pref(\lang(\GG))$.
We do a case analysis on $x$.

First, suppose that $x = \snd{\procA}{\procB}{\val}$ for some $\procA$, $\procB$, and $\val$.
By \cref{prop:projection-preserves-per-process-runs}, we know that $(wx) \wproj_{\Alphabet_\procA}$ has a run in $\semlocal(\GG \tproj_\procA)$.
Therefore, we can simply extend the run $\rho$ in $\CSMl{\semlocal(\GG \tproj_\procA)}$.
Second, suppose that $x = \rcv{\procA}{\procB}{\val}$ for some $\procA$, $\procB$, and $\val$.
Again, by \cref{prop:projection-preserves-per-process-runs}, we know that $(wx) \wproj_{\Alphabet_\procB}$ has a run in $\semlocal(\GG \tproj_\procB)$.
Any prefix of $\lang(\GG)$ is \channelcompliant by construction.  From \cref{lm:execution-prefix-and-channel-content}, we know that $\channel{\procA}{\procB}$ is of the form $m \cdot u$.  Therefore $\procB$ can receive $\val$ from $\procA$ and we can extend $\rho$ for $wx$ in $\CSMl{\semlocal(\GG \tproj_\procA)}$.

In both cases, we have extended $\rho$ on $w \in \pref(\lang(\GG))$ to $wx \in \pref(\lang(\GG))$, completing the argument for (1).

For (2), let $w \in \lang(\GG)$.
We do a case analysis whether $w$ is finite or infinite.

For finite words $w \in \lang(\GG) \inters \Alphabet^*$, we know that $w$ is \channelcompliant and complete.
(1)~tells us that there is a run for $w$ in $\CSMl{\semlocal(\GG \tproj_\procA)}$ that reaches $(q, \xi)$ for some state $q$ and channel evaluation $\xi$.
We need to show that $(q, \xi)$ is a final configuration, i.e. (a) $q_\procA$ is a final state in $\semlocal(\GG \tproj_\procA)$ and (b) all channels in $\xi$ are empty.

To establish (b), we observe that all channels are empty after a run of a finite and complete word $w$, by \cref{lm:execution-prefix-and-channel-content}. The fact that $w$ is complete follows from \cref{def:language-global-mst}.

For (a), recall that $w \in \semglobal(\GG)$ and $w$ is finite, so the run with trace $w$ in $\semglobal(\GG)$ ends in a final state.
By \cref{prop:shape-semantics-global-G}, every final state corresponds to $0$ and has no outgoing transitions.
The projection of $0 \tproj_\procA$ is defined as $0$ for every $\procA$ and $0$ is never merged with local types different from $0$.
The definition of local types prescribes that all different send and reception options are distinct at all times.
This implies that $\semlocal(\GG \tproj_\procA)$ is deterministic (if it exists which is given by assumption).
Therefore, each $q_\procA$ corresponds to the local type $0$ and is hence final in $\semlocal(\GG \tproj_\procA)$.
Thus, $(q, \xi)$ is a final configuration of $\CSMl{\semlocal(\GG \tproj_\procA)}$, so $w \in \lang(\CSMl{\semlocal(\GG \tproj_\procA)})$.

Suppose that $w$ is infinite, i.e.,  $w \in \lang(\GG) \inters \Alphabet^\omega$.
We show that $w$ has an infinite run in $\CSMl{\semlocal(\GG \tproj_\procA)}$.
Consider a tree $\mathcal{T}$ where each node corresponds to a run $\rho$ on some finite prefix $w' \leq w$.
The root is labelled by the empty run.  The children of a node $\rho$ are runs that extend $\rho$ by a single transition---these exist by part (1) of the lemma.
Since our CSM, derived from a global type, is a finite number of finite state machines, $\mathcal{T}$ is finitely branching.
By König's Lemma, there is an infinite path in $\mathcal{T}$ that corresponds
to an infinite run for $w$ in $\CSMl{\semlocal(\GG \tproj_\procA)}$, so $w \in \lang(\CSMl{\semlocal(\GG \tproj_\procA)})$.
\end{proof}
 
\subsection{Projection does Not Introduce New Behaviours}
\label{sec:run-mapping}

\begin{definition}[Run mappings]
Let $\GG$ be a global type and $\CSM{A}$ be a CSM.
For an execution prefix $w$ of $\CSM{A}$ and a run $\rho$ in $\semglobal(\GG)$ such that $w$ is a prefix of some $w'$ with $w' \interswap \trace_{\semglobal(\GG)}(\rho)$,
a \emph{run mapping} $\rho_{\scriptscriptstyle \Procs} \from \Procs \to \pref(\rho)$ is a mapping such that
$w \wproj_{\Alphabet_\procA} = \trace_{\semglobal(\GG)}(\rho_{\scriptscriptstyle \Procs}(\procA)) \wproj_{\Alphabet_\procA}$.

We say that $\CSM{A}$ has a \emph{family of run mappings} for $\semglobal(\GG)$
iff
for every execution prefix $w$ of $\CSM{A}$,
there is a \emph{witness run} $\rho$ in $\semglobal(\GG)$ with a run mapping $\rho_{\scriptscriptstyle \Procs}$.
\end{definition}
When clear from context, we omit the subscript of run mappings~$\rho_{\scriptscriptstyle \Procs}(\hole)$.

Given a global type $\GG$ and an execution prefix of $\CSM{A}$, a run mapping guarantees that there is some common run in $\semglobal(\GG)$ that all roles could have pursued when observing this execution prefix.
They might not have processed all their events along this run, but they have not diverged.
A family of run mappings lifts this to all execution prefixes.
However, this does not yet ensure that every role will be able to follow this run to the end.
This will be part of \emph{Control Flow Agreement}.

\begin{lemma}[$\set{\semlocal(\GG \tproj_\procA)}_{\procA \in \Procs}$ has a family of run mappings]
\label{lm:proj-types-have-run-mappings}
Let $\GG$ be a \projectable global type.
Then, $\CSMl{\semlocal(\GG \tproj_\procA)}$ has a family of run mappings for $\semglobal(\GG)$.
\end{lemma}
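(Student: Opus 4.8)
The plan is an induction on the length of the execution prefix $w$ of $\CSMl{\semlocal(\GG \tproj_\procA)}$, carrying the run mapping itself as the induction invariant. For $w = \emptystring$, take $\rho$ to be the empty run at the initial state $\GG$ of $\semglobal(\GG)$ and set $\rho_{\scriptscriptstyle\Procs}(\procA) = \rho$ for every role; the conditions hold trivially since $w = \trace(\rho) = \emptystring$. For the step, suppose $w$ has witness run $\rho$ with run mapping $\rho_{\scriptscriptstyle\Procs}$ and let $wx$ be an execution prefix of the CSM, where $x$ is a single event whose active role is~$\procA$. By \cref{lm:execution-prefix-and-channel-content}, $wx$ is \channelcompliant, and by the induction hypothesis $w$ is a prefix of some $w'$ with $w' \interswap \trace(\rho)$. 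Since $\semlocal(\GG \tproj_\procA)$ is deterministic --- distinct send/receive options at every state, by \cref{prop:shape-semantics-global-G} and the syntactic restrictions --- the event $x$ is a transition available from the state reached along $\trace(\rho_{\scriptscriptstyle\Procs}(\procA)) \wproj_{\Alphabet_\procA} = w \wproj_{\Alphabet_\procA}$; I would trace this transition back through \cref{def:gen-projection} to a position in $\GG$, produce a witness run $\rho'$ for $wx$ (extending $\rho$, and at a choice possibly rerouting its tail), advance only the active role's entry of the mapping, and re-establish the invariant.

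The case analysis on $x$ is the substance of the argument. \emph{Receives.} If $x = \rcv{\procA}{\procB}{\val}$, then $\val$ is at the head of $\channel{\procA}{\procB}$, so by \cref{lm:execution-prefix-and-channel-content} it was emitted by $\procA$ along $\rho$, and $\procB$ consumes it. If $\procB$'s projected type is at an external choice merging receptions from the same sender, the FIFO order of $\channel{\procA}{\procB}$ pins down the branch; if it merges receptions from different senders, the availability check of \cref{def:merge-operator-wo-check-avail} --- together with the sets $\avail(B,T,G)$ computed during projection --- guarantees that $\rcv{\procA}{\procB}{\val}$ cannot occur in a sibling branch as an event independent of that branch's choice messages, so receiving $\val$ from $\procA$ determines a unique continuation of $\rho$ for $\procB$; in either case we advance $\rho_{\scriptscriptstyle\Procs}(\procB)$ accordingly, extending $\rho$ if needed. \emph{Sends.} If $x = \snd{\procA}{\procB}{\val}$ at a position that is not a proper choice of $\procA$, the message exchange $\msgFromTo{\procA}{\procB}{\val}$ lies on every global run through that position, so we extend $\rho$ past it and advance $\rho_{\scriptscriptstyle\Procs}(\procA)$. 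If it is the choice action selecting branch $i$ of a choice $\sum_{j} \msgFromTo{\procA}{\procB_j}{\val_j.G_j}$, I would argue that no role is yet committed to a branch $j \neq i$: by the shape of $\merge$, a role only moves onto a branch-specific part of $\rho$ after receiving that branch's choice message, which is produced only downstream of $\procA$'s branch-$j$ send --- impossible here because $\procA$'s deterministic type currently offers the branch-$i$ send. Hence $\rho$ can be taken to continue along branch $i$ (after possibly rerouting a tail that no role has entered), and the mapping is re-established.

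In all cases, the $\interswap$-reorderings needed to exhibit a word $w''$ with $wx$ a prefix of $w''$ and $w'' \interswap \trace(\rho')$ only swap events of distinct roles past one another, so closure of languages under $\interswap$ (\cref{lm:csm-closed-under-interswap}) together with the asymmetric $\preceq_\interswap^\omega$-closure handles the finite and infinite cases uniformly. The empty-path-elimination parameter $E$ in \cref{def:gen-projection} and the possibility that a role lags arbitrarily far behind a resolved choice (as in the example where, after the choice, a role only participates in a loop $\mu t.\,\msgFromTo{\procC}{\procA}{\val}.\,t$) add further case distinctions but introduce no new ideas.

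The main obstacle is the soundness of the availability analysis in the two choice-sensitive cases. I must connect the purely syntactic computation $\avail(B,T,G)$ on the global type --- which unfolds each recursion at most once --- to the messages that can genuinely reside in the CSM's channels in any $\interswap$-reachable configuration visited along a run of $\rho$, and argue that, because we only test presence or absence of a message rather than compare unbounded message sequences, a single unfolding is already sufficient. This is exactly where the generalised projection departs from prior work, and it is the technically delicate step on which the whole soundness argument rests; once it is in place, the run-mapping invariant propagates and subsequently feeds the \emph{control flow agreement} argument used to conclude protocol fidelity and deadlock freedom.
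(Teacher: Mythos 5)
Your top-level strategy---induction on the execution prefix, a send/receive case split, and a prophecy-style witness run that is rerouted at a choice---is the same skeleton the paper uses. But the proposal leaves unproven exactly the two facts on which the induction step rests, and you say so yourself for one of them: you flag connecting the syntactic $\avail(B,T,G)$ computation to the messages that can actually reside in reachable channel configurations as ``the main obstacle'' and do not discharge it. The paper does this by defining, for each subterm $G$ and set $\mathcal{B}$ of blocked roles, a semantic language $L^{\mathcal{B}}_{(G\ldots)}$ of executions possible while the roles in $\mathcal{B}$ cannot move, extracting from it the set $M^{\mathcal{B}}_{(G\ldots)}$ of receivable messages, and proving $M^{\mathcal{B}}_{(G\ldots)} \subseteq \avail(\mathcal{B},\emptyset,G)$ (\cref{lm:property-receive-projectable-MST}); the ``one unfolding suffices'' claim is justified there by pre-substituting unbound recursion variables so the analysis never re-enters a loop. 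Without some such lemma the receive case of your induction does not go through.

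The second gap is in the send case. Your parenthetical ``rerouting a tail that no role has entered'' is not available: the whole difficulty, illustrated by the paper's $\mu t.\,\msgFromTo{\procC}{\procA}{\val}.\,t$ example, is that a role can overtake the choice and perform unboundedly many actions inside the tail of a branch before the chooser commits, without thereby being committed to that branch. Rerouting the witness run from branch $j$ to branch $i$ therefore requires that every third role's behaviour performed so far in $G_j$ is also realisable in $G_i$, i.e.\ $L^{\set{\procA,\procB_i}}_{(G_i\ldots)}\wproj_{\Alphabet_\procC} = L^{\set{\procA,\procB_j}}_{(G_j\ldots)}\wproj_{\Alphabet_\procC}$ (\cref{lm:property-send-projectable-MST}). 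Establishing this from projectability is where the paper's real work lies: it introduces a blocked projection operator that unfolds recursion up to $\card{\Procs}-2$ times, extended local types with final/non-final markers to describe the prefix languages of blocked roles, and a correspondence lemma (\cref{lm:correspondence-bpo-bl}) tying these back to the blocked languages. Your appeal to ``the shape of $\merge$'' gestures at the right intuition but does not substitute for this machinery; as it stands the proposal is an outline of the paper's own outline, with both load-bearing lemmas missing.
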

The proof is not trivial because the merge can collapse different branches for a role so it follows sets of runs.
The intersection of all these sets contains (a prefix of) the witness~run.
In some case, e.g., $\procC$ in the previous example, a role can ``overtake'' a choice and perform actions that belong to the branches of a choice before the choosing process has made the~choice.

\subsubsection*{Outline and Motivation of Concepts}

As outlined in the main text, the overall proof works by induction on the execution prefix.
While the skeleton is quite basic, let us give an intuition which properties are needed for the cases where some role receives or sends a message in the extension of an execution prefix.
Based on these properties, we can explain the different concepts which need to be formalised for these properties.
In both cases, we need to argue that the previous run mapping can be kept for all roles but the one that takes a step.

For the receive case, suppose that $\procA$ receives from $\procB$.
On the one hand, we need to argue that the run for $\procA$ can actually be extended to match the one of of $\procB$.
On the other hand, we also need to ensure that there is no other message that $\procA$ could receive so that it follows a run different from the one for $\procB$.
While the first is easy to obtain by the definition of projection, we need to argue that the message availability check does indeed compute the available messages when $\procA$ has not proceeded with any other event yet.

For the send case, suppose that $\procA$ sends to $\procB$ and chooses between different branches, i.e., there actually is a choice between different continuations.
Though, there might be roles with (some) actions that do not depend on this choice and we have to ensure that $\procA$'s run can be extended to some prefix of theirs.
For this, we exploit the idea of prophecy variables and assume that all of those roles took the branch $\procA$ will choose but we need to ensure that this does not restrain them in any way.
So intuitively for each of these roles, we need to compare the executions along the different branches $\procA$ can choose from and need to prove that they are the same if $\procA$ has not yet committed this choice yet.
(Note the subtlety that only the part of the execution which does not depend on this choice needs to be the same.)

To formalise these ideas, we introduce three concepts.
\vspace{-2ex}
\begin{itemize}
 \item \emph{Blocked languages:}
        We inductively define the (global) language $L_{(G \ldots)}^{\mathcal{B}}$ that captures all executions that are possible from subterm $G$ in some global type~$\GG$, under the assumption that no role in $\mathcal{B} \subseteq \Procs$ can take any further step.
        We might call these roles \emph{blocked}.
        This definition serves for the formalisation of semantic arguments about the availability of messages and the languages along different choice branches.
 \item \emph{Blocked projection operator:}
        We introduce a variation of the projection operator that also allows to account for a set of blocked roles $\mathcal{B}$, prove its connection to the above languages, and use the relation to the standard projection operator to show the equality of the ``blocked'' languages along different choices.
 \item \emph{Extended local types:}
        In contrast to the standard projection operator, the blocked one can declare intermediate syntactic subterms as final states.
        Standard local types do not allow this so we introduce extended local types.
\end{itemize}

\subsubsection*{Blocked Languages -- The Language-theoretic Perspective}

We will use blocked languages to show that no role can proceed with an action that is specific to some choice which has not been committed yet.

\begin{definition}[Blocked language along path]
Let $\GG$ be a \projectable global type and $\mathcal{B} \subseteq \Procs$ a set of roles. Then, $L^{\mathcal{B}}_{(G\ldots)}$ is inductively defined:

\begin{footnotesize}
\noindent
$
L^{\mathcal{B}}_{(0 \ldots)} \is \set{\emptystring}
$
\hfill
$
L^{\mathcal{B}}_{(t \ldots)} \is L^\mathcal{B}_{(\mu t.G \ldots)}
$
\hfill
$
L^{\mathcal{B}}_{(\mu t.G \ldots)} \is L^\mathcal{B}_{(G \ldots)}
$\\
$
L^{\mathcal{B}}_{(\Sum_{i \in I} \msgFromTo{\procA}{\procB_i}{\val_i}.G_i \ldots)}
                \is
                \Union_{i \in I}
                    \set{
                    f(\mathcal{B}, \msgFromTo{\procA}{\procB_i}{\val_i}).
                    w
                    \mid
                    w \in L^{g(\mathcal{B}, \msgFromTo{\procA}{\procB_i}{\val_i})}_{(G_i \ldots)}
}
$
\hspace{0.5cm} where \\
$
f(\mathcal{B}, \msgFromTo{\procA}{\procB_i}{\val_i})
=
\begin{cases}
\snd{\procA}{\procB_i}{\val_i}.\rcv{\procA}{\procB_i}{\val_i}
& \text{if }
    \procA \notin \mathcal{B} \land
    \procB_i \notin \mathcal{B} \\
\snd{\procA}{\procB_i}{\val_i}
& \text{if }
    \procA \notin \mathcal{B} \land
    \procB_i \in \mathcal{B} \\
\emptystring
& \text{otherwise}
\end{cases}
$
\hfill
$
g(\mathcal{B}, \msgFromTo{\procA}{\procB_i}{\val_i})
=
\begin{cases}
\mathcal{B}
& \text{if }
    \procA \notin \mathcal{B} \\
\mathcal{B} \union
\set{\procB_i}
& \text{otherwise}
\end{cases}
$
\end{footnotesize}
Based on this definition, we compute the set of messages that can occur when some roles, i.e, the ones in $\mathcal{B}$ are blocked:
$
M^\mathcal{B}_{(G \ldots)} \is
    \Union_{x = x_1 \ldots \in L^\mathcal{B}_{(G \ldots)}} \set{x_i \mid x_i \in \Alphabet_r}.
$
\end{definition}

Note that we did not use the intermediate states but only the ones from $M(\GG)$ in this definition for clarity but mimic splitting the message exchanges faithfully.
We implicitly remove $\emptystring$, however, they are important so that languages are defined.

\subsubsection*{Extended Local Types and Blocked Projection Operator -- \\ The Type-theoretic Perspective}

\begin{definition}[Extended local types]
\label{def:extended-local-types}
We extend local types with marks $\final$ or $\nonfinal$ for syntactic subterms and obtain \emph{extended local types} with the following inductive rules where $\optionfn \in \set{\nonfinal, \final}$:
\vspace{-1ex}
    \begin{grammar}
     \mathit{\mathit{EL}} \is
            \epair{0}{\optionfn}
| \epair{\IntCh_{i ∈ I} \snd{}{\procB_i}{\val_i}.\mathit{EL}_i}{\optionfn}
| \epair{\ExtCh_{i ∈ I} \rcv{\procB_i}{}{\val_i}.\mathit{EL}_i}{\optionfn}
| \epair{μ t. \mathit{EL}}{\optionfn}
| \epair{t}{\optionfn}
\end{grammar}
\end{definition}

\vspace{-1ex}
Extended local types can be annotated with availability information about messages as local types.
When working with message set annotations, we assume to have extended availability annotated local types of shape $\epair{\apair{\hole}{\hole}}{\hole}$.
Thus, we can re-use the merge-operator~$\merge$.

The mark $\final$ indicates that this syntactic subterm shall be considered final in the semantics of the extended local types.
Any syntactic subterm $\epair{\mathit{EL}'}{\optionfn'}$ of $\epair{EL}{\optionfn}$ has a syntax tree.
Intuitively, $\optionfn'$ marks the root of the syntax tree of $\mathit{EL}'$ and tells whether one needs to continue, i.e.~$\epair{EL'}{\nonfinal}$, to obtain an accepting word or may stop, i.e.~$\epair{EL'}{\final}$, at this point.
Without such an extension, there is no way to specify such behaviour in local types (and global type) since $0$ explicitly marks the end of the type.
In the semantics, we consider two subterms equal if they agree on all (nested) markings.

\begin{definition}[Semantics of extended local types]
\label{def:language-extended-local-mst}
Let $\mathit{EL}$ be an extended local type for~$\procA$.
We construct a state machine $\semextlocal(\mathit{EL}) = (Q, \Sigma, δ, q₀, F)$ with
\vspace{-1ex}
\begin{itemize}
\item $Q$ is the set of all syntactic subterms in $\mathit{EL}$,
$Σ = \Alphabet_\procA$, 
$q₀ = \mathit{EL}$,
\item $F =
\set{\epair{\mathit{EL}'}{\final} \mid \epair{EL'}{\final}$ syntactic subterm of $EL}$, and
\item $δ$ is the smallest set containing
\vspace{-1ex}
\begin{itemize}
 \item $(\epair{\IntCh_{i ∈ I} \snd{}{\procB_i}{\val_i.\epair{\mathit{EL}_i}{\optionfn_i}}}{\optionfn'}, \snd{\procA}{\procB_i}{\val_i}, \epair{EL_i}{\optionfn_i})$ for every $i \in I$ where
 $\optionfn' \in \set{\final, \nonfinal}$ and
 $\optionfn_i \in \set{\final, \nonfinal}$ for each $i ∈ I$,
 \item $(\epair{\ExtCh_{i ∈ I} \rcv{\procB_i}{}{\val_i.\epair{\mathit{EL}_i}{\optionfn_i}}}{\optionfn'}, \rcv{\procB_i}{\procA}{\val_i}, \epair{EL_i}{\optionfn_i})$ for every $i \in I$ where
 $\optionfn' \in \set{\final, \nonfinal}$ and
 $\optionfn_i \in \set{\final, \nonfinal}$ for each $i ∈ I$, as well as
 \item $(\epair{μ t. \epair{\mathit{EL}'}{\optionfn'}}{\optionfn''}, ε, \epair{EL}{\optionfn'})$ and $(\epair{t}{\optionfn_t}, ε, \epair{μ t. \epair{EL'}{\optionfn'}}{\optionfn''})$ \\ for each subterm $\epair{\mu t.\epair{EL'}{\optionfn'}}{\optionfn''}$ and $\epair{t}{\optionfn_t}$ of $EL$ with $\optionfn_t, \optionfn', \optionfn'' \in \set{\final, \nonfinal}$.
\end{itemize}
\end{itemize}
We define the language of $\mathit{EL}$ as language of this state machine:
$\lang(\mathit{EL}) = \lang(\semextlocal(EL))$.
\end{definition}

It is straightforward to turn an extended (availability annotated) local type into a (availability annotated) local type by recursively projecting onto the first component of the extended local type and hence ignoring the markers $\final$ and $\nonfinal$.
Therefore, we apply this implicit coercion for operations defined on (availability annotated) local types only.

\begin{figure}[t]
\begin{subfigure}[b]{0.35\textwidth}
\begin{center}
\includegraphics[width=0.9\textwidth]{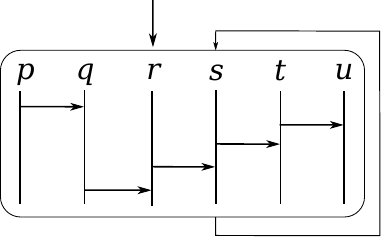}
\caption{To be Unfolded}
    \label{fig:unfolding-needed}
\end{center}
\end{subfigure}
\hfill
\begin{subfigure}[b]{0.4\textwidth}
\begin{center}
\includegraphics[width=0.8\textwidth]{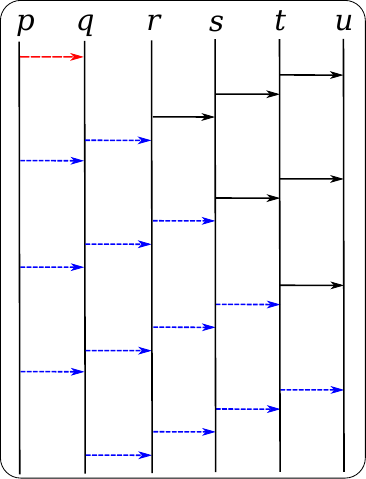}
\caption{Unfolded until all roles are blocked}
\label{fig:unfolding-needed-unfolded}
\end{center}
\end{subfigure}
\caption{An HMSC and some Unfolding}
\end{figure}

\begin{example}
Consider the HMSC in \cref{fig:unfolding-needed} and its unfolding in \cref{fig:unfolding-needed-unfolded}.
The unfolding exemplifies that one needs to unfold the recursion three times until all roles get blocked.
Still, if there were two more roles that would send messages to each other independently, one needs to account for these messages with a recursion.
\end{example}

The set of blocked roles does only increase.
If some message is blocked, two roles are blocked from the beginning and one iteration of the recursion is always present in any type.
In this case, we might need to unfold a recursion up to $\card{\Procs} - 3$ times to reach a fixed point of blocked roles.
In general, a fixed point is reached after $\card{\Procs} - 2$ unfoldings.

\begin{definition}[Blocked projection operator]
\label{def:special-proj}
Let $\GG$ be some global \projectable global type, $E$ a set of recursion variables, $\mathcal{B} \subseteq \Procs$ a set of roles and $n$ some natural number.
Let $T$ be a set of recursion variables and $\eta \from T \to \powerset(\Procs) \times \Nat$.
We define the blocked projection $\tproj^{E, \mathcal{B}}$ of $\GG$ parametrised by $n$ onto a role $\procC$ inductively:\\

\begin{minipage}{0.98\linewidth}
\begin{scriptsize}
\noindent
$
    0 \tproj^{E, \mathcal{B}}_{\procC, n} \is \epair{0 \tproj_\procC}{\final}
$
\hfill
$
  \left(\mu t. G'\right) \tproj^{E, \mathcal{B}}_{\procC, n}
  \is
 \begin{cases}
    \epair{
    \apair{μ t. (G \tproj^{E ∪ \set{t}}_\procC)}{\avail(\set{R},\set{t},G)}
    }{
    \nonfinal
    }
    \text{ and $\eta(t) \is (\mathcal{B}, 0)$} \\
        \hfill \text{if } G \tproj^{E ∪ \set{t}}_\procC \neq \epair{\apair{t}{\_}}{\_} ∧ t \text{ occurs in } G \tproj^{E ∪ \set{t}}_\procC\\
    \apair{(G \tproj^{E ∪ \set{t}}_\procC)}{\avail(\set{R},\set{t},G)}
        \qquad \text{if } t \text{ does not occur in } G \tproj^{E ∪ \set{t}}_\procC\\
    \epair{\apair{0}{∅}}{\final}
        \hfill \text{otherwise}
 \end{cases}
$

\[
    t \tproj^{E, \mathcal{B}}_{\procC, n} \is
    \begin{cases}
    \epair{
    \apair{
    \mu t'.((\getMuG(t))[t'/t]) \tproj^{E, \mathcal{B}}_{\procC,n}
    }{
    \avail(\set{\procC}, \set{t}, \getMuG(t))
    }
    }
    {\nonfinal}
    \text{ and }
    \eta(t') \is (\mathcal{B}, \eta_2(t) + 1) \\
        \hfill \text{if } \eta_1(t) \neq \mathcal{B} \land \eta_2(t) < n \\ \epair{
    \apair{
    t
    }{
    \avail(\set{\procC}, \set{t}, \getMuG(t))
    }
    }{
    \nonfinal} \\
        \hfill \text{if }\eta_1(t) = \mathcal{B} \land \eta_2(t) \leq n \\
    \text{undefined} \\
        \hfill \text{otherwise}
    \end{cases}
\]
\[
 \left( \sum_{i ∈ I} \msgFromTo{\procA}{\procB_i}{\val_i.G_i} \right)
 \tproj^{E, \mathcal{B}}_{\procC, n} \is
 \begin{cases}
    \epair{
        \apair{
            \IntCh_{i ∈ I} \snd{}{\procB_i}{\val_i}.(G_i \tproj^{E, \mathcal{B}}_{\procC, n})
        }{
            \bigcup_{i∈I} \avail(\set{R},∅,G_i)
        }
    }{\nonfinal} \hspace{1cm} \\
        \hfill \text{if } \procA \notin \mathcal{B} \land \procC = \procA \\
        \Mmerge
        \begin{cases}
            \epair{
            \apair{
                \ExtCh_{i \in I ∧ \procB_i = \procC ∧ \procC \notin \mathcal{B}} \rcv{\procA}{}{\val_i}.(G_i \tproj^{E, \mathcal{B}}_{\procC, n})
            }{
                \bigcup_{i∈I} \avail(\set{\procC},∅,G_i)
            }
            }{
            \nonfinal
            }\\
            \Mmerge_{i \in I ∧ \procB_i = \procC ∧ \procC \in \mathcal{B}} \;
            \epair{\apair{0}{\emptyset}}{\final} \\
\Mmerge_{i \in I ∧ \procB_i ≠ \procC ∧ ∀ t ∈ E.\, G_i \tproj^{E, \mathcal{B}}_{\procC, n} ≠ \epair{\apair{t}{\_}}{\_}} \;
            G_i \tproj_{\procC, n}^{E, \mathcal{B}}
        \end{cases} \\
        \hfill \text{if } \procA \notin \mathcal{B} \land \procC \neq \procA
        \\
    \epair{\apair{0}{\emptyset}}{\final}
        \hfill \text{if } \procA \in \mathcal{B} \land \procC = \procA \\
    \Mmerge_{i \in I}
    G_i \tproj^{E, \mathcal{B} \union \set{\procB_i}}_{\procC, n}
        \hfill \text{if } \procA \in \mathcal{B} \land \procC \neq \procA \\
 \end{cases}
\]
\end{scriptsize}
\end{minipage}
\\
where $\Mmerge$ is defined using $\Merge$:
\vspace{-2ex}
\begin{itemize}
 \item for any extended availability annotated local type
        $\epair{\apair{\mathit{EL}}{\mathit{Msg}}}{\optionfn}$, \\
       it holds that
$
            \epair{\apair{\mathit{EL}}{\mathit{Msg}}}{\optionfn}
                \mmerge
            \epair{\apair{0}{\emptyset}}{\final}
                \is
            \epair{\apair{\mathit{EL}}{\mathit{Msg}}}{\final}
$
for
       $\optionfn \in \set{\final, \nonfinal}$, and
 \item  for any two extended availability annotated local types
            $\epair{\apair{\mathit{EL}_1}{\mathit{Msg}_1}}{\optionfn_1}$
            and
            $\epair{\apair{\mathit{EL}_2}{\mathit{Msg}_2}}{\optionfn_2}$,
        it holds that
$
            \epair{\apair{\mathit{EL}_1}{\mathit{Msg}_1}}{\optionfn_1}
                \mmerge
            \epair{\apair{\mathit{EL}_2}{\mathit{Msg}_2}}{\optionfn_2}
                \is
            \epair{\apair{(\mathit{EL}_1 \merge EL_2)}{\mathit{Msg}}}{\optionfn'}
$ \\
\begin{scriptsize}
        where
        $\optionfn_1, \optionfn_2, \optionfn' \in \set{\final, \nonfinal}$,
        $\optionfn' = \final$ iff $\optionfn_i = \final$ for any $i \in \set{1, 2}$
        and
$
            \apair{\_}{\mathit{Msg}} =
                \apair{\mathit{EL}_1}{\mathit{Msg}_1} \merge \apair{EL_2}{\mathit{Msg}_2}.
$
\end{scriptsize}
\end{itemize}
By assumption, every variable $t \in T$ is only bound once and is bound before use in any global type.
Therefore $\eta(t)$ is always defined.
The blocked projection returns an extended availability annotated local type.
Erasing the annotations yields an extended local type.
\end{definition}

The special rule for recursion ensures that we unfold the definition (equi-recursively) if the set $\mathcal{B}$ has changed since the recursion variable has been bound.
As for the standard projection, we define $\GG \tproj^\mathcal{B}_{\procC, n} \is \GG \tproj^{E, \mathcal{B}}_{\procC, n}$.

When we want to emphasise the difference, we call projection $\tproj_{\_}$ the \emph{standard} projection, in contrast to the \emph{blocked} projection.
The blocked projection neglects all actions that depend on actions by any role in $\mathcal{B}$.
In the course of computation, $\mathcal{B}$ grows and represents the set of roles that are not able to proceed any further since their actions depend on some previous action from a blocked role.
With each unfolding, the set either grows or there is no unfolding necessary anymore.
Therefore, it holds that $\forall n \geq \card{\Procs} - \card{\mathcal{B}}. \; \GG \tproj_{\procC, n}^{\mathcal{B}} = \GG \tproj_{\procC, n+1}^{\mathcal{B}}$.
In case $n$ is greater than this threshold, we omit it for readability.

By definition, the blocked projection operator mimcs the standard one for $\mathcal{B} = \emptyset$ and $n = 0$.

\begin{proposition}
For any global type $\GG$ and $\procC$, it holds that $\GG \tproj^{\emptyset}_{\procC, 0} = \GG \tproj_\procC$.
\end{proposition}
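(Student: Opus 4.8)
The plan is to prove the (equivalent, slightly more informative) statement by structural induction on the subterms $G$ of $\GG$: for every role $\procC$ and every set of recursion variables $E$, one has $G \tproj^{E,\emptyset}_{\procC,0} = G \tproj^{E}_\procC$, where the equality is read after applying the implicit coercion that erases the $\final/\nonfinal$ markers (so that the left-hand extended availability annotated local type is compared as an availability annotated local type). The proposition is then the instance $E = \emptyset$. Along the induction I would maintain as an invariant that the auxiliary map $\eta$ produced by the blocked projection only ever stores entries of the form $(\emptyset,0)$: $\eta$ is assigned only in the $\mu t$ clause, where it is set to $(\mathcal{B},0) = (\emptyset,0)$, and the ``unfold'' clause for recursion variables in \cref{def:special-proj}, which would introduce entries $(\mathcal{B},k{+}1)$, is guarded by $\eta_1(t) \neq \mathcal{B}$, which is impossible once $\mathcal{B} = \emptyset$.

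Next I would go through the four syntactic cases, matching \cref{def:special-proj} (specialised to $\mathcal{B} = \emptyset$, $n = 0$) against \cref{def:gen-projection}. For $G = 0$ both sides coerce to $\apair{0}{\emptyset}$. For $G = t$, the invariant gives $\eta(t) = (\emptyset,0)$, so $\eta_1(t) = \emptyset = \mathcal{B}$ and $\eta_2(t) = 0 \leq 0 = n$; hence the middle clause of the blocked recursion-variable rule fires and returns $\apair{t}{\avail(\set{\procC},\set{t},\getMuG(t))}$ (marked $\nonfinal$), which coerces to exactly $t \tproj^{E}_\procC$, while the unfolding clause and the ``undefined'' clause are unreachable. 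For $G = \mu t.\,G'$, the blocked rule sets $\eta(t) := (\emptyset,0)$ and then recursively computes $G' \tproj^{E \union \set{t},\emptyset}_{\procC,0}$, which by the induction hypothesis equals $G' \tproj^{E \union \set{t}}_\procC$; I then line up the blocked rule's case split with the standard rule's split on whether $G' \tproj^{E \union \set{t}}_\procC = \apair{t}{\_}$. For a choice $G = \sum_{i \in I}\msgFromTo{\procA}{\procB_i}{\val_i.G_i}$, the two clauses of \cref{def:special-proj} that require $\procA \in \mathcal{B}$ are dead; in the surviving clauses the summand $\Mmerge_{i \in I \,\wedge\, \procB_i = \procC \,\wedge\, \procC \in \mathcal{B}} \epair{\apair{0}{\emptyset}}{\final}$ ranges over the empty index set and so is vacuous, the index set $\{i : \procB_i = \procC\}$ coincides with $I_{[=\procC]}$ and $\{i : \procB_i \neq \procC\}$ with $I_{[\neq\procC]}$, the $\avail$-computations and the empty-paths side condition $\forall t \in E.\,G_i \tproj \neq \apair{t}{\_}$ are syntactically the same as in \cref{def:gen-projection}, and $\Mmerge$ applied to $\nonfinal$-marked arguments reduces, under the coercion, to the plain $\Merge$ of \cref{def:merge-operator-wo-check-avail}; it then remains to invoke the induction hypothesis on each continuation $G_i$ (with parameter $\emptyset$ for the internal-choice and the $\procB_i = \procC$ receptions, and with parameter $E$ for $\procB_i \neq \procC$, exactly mirroring the standard rule).

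The step I expect to be the main obstacle is the $\mu t$ case, because the blocked recursion rule has three clauses where the standard rule has only two — in particular a dedicated clause that drops the binder when $t$ does not occur in the projected body. I would handle this by showing that this middle clause coincides, under the coercion, with the first clause: if $t \notin \freevars(G' \tproj^{E \union \set{t}}_\procC)$ then $\mu t.\,(G' \tproj^{E \union \set{t}}_\procC)$ and $G' \tproj^{E \union \set{t}}_\procC$ are the same availability annotated local type up to the vacuous $\mu t$-binder, and argue that this is the sense in which the two projections agree (and that the standard rule's only relevant distinction — whether the body equals $\apair{t}{\_}$ — is preserved). Getting the $\eta$-bookkeeping and this coercion precise enough that the conclusion reads as a genuine identity, rather than merely equality of induced languages, is the delicate point; the remaining cases are routine matching against the definitions.
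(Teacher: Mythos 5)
Your proposal is correct and is essentially the argument the paper intends: the paper states this proposition without any proof, treating it as immediate from comparing \cref{def:special-proj} (with $\mathcal{B}=\emptyset$, $n=0$) against \cref{def:gen-projection}, and your case-by-case matching together with the invariant that $\eta$ only ever stores $(\emptyset,0)$ is exactly that comparison made explicit. Your worry about the $\mu t$ case is well founded: when $t$ does not occur in the projected body, the blocked projection drops the binder while the standard projection keeps a vacuous $\mu t$, so the stated equality only holds up to erasing vacuous binders (and up to the paper's evident typos, e.g.\ the blocked choice clauses passing $E$ unchanged where the standard rule resets it to $\emptyset$, and $\set{R}$ versus $\set{\procB_i,\procC}$ in the $\avail$ calls); the paper silently glosses over this, so your reading of the equality modulo these trivialities is the right one.
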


\begin{remark}
Even though the blocked projection operator and the message causality analysis both share the concept of a growing set of blocked roles, it does not seem beneficial to unify both approaches.
First, we would establish a cyclic dependency between the projection operator and the message causality analysis used when merging.
Second, the message availability analysis fixes a role $\procC$ that gets blocked.
However, the blocked merging operator would return $0$ as soon as it encounters $\procC$ again which will not yield the desired information about the channel contents.
\end{remark}

\subsubsection*{Correspondence of Blocked Projection Operator and Blocked Languages}

\begin{lemma}[Correspondence of blocked projection operator and blocked languages]
\label{lm:correspondence-bpo-bl}
Let $\GG$ be a \projectable global type, $G$ a syntactic subterm of $\GG$, and $\procC \in \Procs$.
Then, it holds that 
$ 
    \forall \mathcal{B} \subseteq \Procs.\;
    L^{\mathcal{B}}_{(G \ldots)} \wproj_{\Alphabet_\procC}
    =
    \lang(G \tproj^{\mathcal{B}}_{\procC}).
$
\end{lemma}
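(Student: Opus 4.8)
The plan is to prove the equality $L^{\mathcal{B}}_{(G \ldots)} \wproj_{\Alphabet_\procC} = \lang(G \tproj^{\mathcal{B}}_{\procC})$ by structural induction on the syntactic subterm $G$, simultaneously for all $\mathcal{B} \subseteq \Procs$ (since $\mathcal{B}$ changes as we descend into subterms). Because both sides are defined via the same case split on the shape of $G$ — with subcases for $\procA \in \mathcal{B}$ versus $\procA \notin \mathcal{B}$ and, in the latter, $\procC = \procA$ versus $\procC \neq \procA$ — the induction will mirror that case analysis. For $G = 0$, both sides yield $\set{\emptystring}$ (on the type side, $\epair{0 \tproj_\procC}{\final}$ accepts exactly $\emptystring$). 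For $G = \mu t.G'$, the blocked language is $L^{\mathcal{B}}_{(G' \ldots)}$ and the blocked projection descends into $G'$ with $t$ recorded in $E$ and $\eta(t) = (\mathcal{B}, 0)$; the only subtlety is the empty-paths elimination case, where if $t$ does not occur in $G' \tproj^{E \cup \set{t}}_\procC$ the recursion is collapsed — here I must check that collapsing does not remove any letters of $\Alphabet_\procC$, which holds precisely because $t$-freeness of the projection means $\procC$ has no events before returning to $\mu t$, so the only traces are the ones that exit the loop. For a choice $G = \sum_{i \in I} \msgFromTo{\procA}{\procB_i}{\val_i.G_i}$, I unfold the definition of $L^{\mathcal{B}}_{(\cdot)}$ via the functions $f$ and $g$: the letters $f(\mathcal{B}, \msgFromTo{\procA}{\procB_i}{\val_i})$ restricted to $\Alphabet_\procC$ give exactly $\snd{\procA}{\procB_i}{\val_i}$ when $\procC = \procA \notin \mathcal{B}$, exactly $\rcv{\procA}{\procB_i}{\val_i}$ (i.e. $\rcv{\procB_i}{\procC}{\val_i}$ in the receiver notation) when $\procC = \procB_i \notin \mathcal{B}$ and $\procA \notin \mathcal{B}$, and $\emptystring$ otherwise, which matches precisely the four branches of the blocked projection's choice case.

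The main work is showing that the $\Mmerge$ in the $\procC \neq \procA$, $\procA \notin \mathcal{B}$ case computes exactly the union-over-$i$ of the (projected) blocked languages of the branches $G_i$. This is where I would lean on \cref{lm:correspondence-bpo-bl}'s relationship to the \emph{standard} merge $\merge$ and on the fact that $\GG$ is \projectable — so the merge is always defined. The key sub-claim is: when $\procC$ is the receiver in some branches (the external-choice part) and not in others (the recursive $\Mmerge$ part), the resulting external choice $\ExtCh_i \rcv{\procA}{}{\val_i}.(G_i \tproj^{E,\mathcal{B}}_{\procC,n})$ together with the merged branches accepts exactly $\bigcup_i \set{f(\mathcal{B}, \msgFromTo{\procA}{\procB_i}{\val_i}).w \mid w \in L^{g(\mathcal{B}, \cdot)}_{(G_i\ldots)} \wproj_{\Alphabet_\procC}}$. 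For the external-choice branches this is immediate by the induction hypothesis on $G_i$ (noting $g(\mathcal{B}, \cdot) = \mathcal{B}$ since $\procA \notin \mathcal{B}$). For the branches where $\procC \notin \set{\procA, \procB_i}$, the prefix contributes nothing and we need the IH to give $L^{\mathcal{B}}_{(G_i\ldots)} \wproj_{\Alphabet_\procC} = \lang(G_i \tproj^{E,\mathcal{B}}_{\procC,n})$, and then the semantics of the generalised merge $\merge$ (from \cref{def:merge-operator-wo-check-avail}) must be shown to take the \emph{union} of the languages of its arguments — this is a separate lemma about $\merge$, provable by induction on the merge derivation, observing that in each merge case (equal types, recursion, internal choice, external choice with disjoint-availability side conditions) the accepted language of the result is the union of the accepted languages of the two inputs. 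Crucially, the availability side conditions ($\rcv{\procB_i}{\procC}{\val_i} \notin \mathit{Msg}_2$ etc.) are \emph{not needed for the language equality} — they only affect \emph{definedness} — so for this lemma I only need that when the merge is defined, its language is the union.

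The remaining cases are the blocked ones. For $G = \sum_i \msgFromTo{\procA}{\procB_i}{\val_i.G_i}$ with $\procA \in \mathcal{B}$: if $\procC = \procA$, then $\procC$ is blocked, $L^{\mathcal{B}}_{(G\ldots)} \wproj_{\Alphabet_\procC} = \set{\emptystring}$ (since every $f(\mathcal{B}, \cdot)$ with $\procA \in \mathcal{B}$ is $\emptystring$ and $G_i$ is also under a superset of $\mathcal{B}$ containing $\procA$, so $\procC$ contributes nothing downstream either — this needs the observation that blocked sets only grow, a fact stated in the excerpt), matching $\epair{\apair{0}{\emptyset}}{\final}$. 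If $\procC \neq \procA$, both sides descend into the $G_i$ with $\mathcal{B} \cup \set{\procB_i}$, and we apply the IH branch by branch and the union-of-merge lemma. The handling of recursion variables $t$ under the blocked projection (the three-way case split on whether $\eta_1(t) = \mathcal{B}$ and $\eta_2(t)$ versus $n$) requires care: since $L^{\mathcal{B}}_{(t\ldots)} = L^{\mathcal{B}}_{(\mu t.G\ldots)} = L^{\mathcal{B}}_{(G\ldots)}$ unfolds the loop syntactically at the language level, I need that taking $n \geq \card{\Procs} - \card{\mathcal{B}}$ (so that the blocked set has stabilised, as the excerpt notes) makes the finite unfolding performed by $\tproj^{E,\mathcal{B}}_{\procC,n}$ faithful to the infinite language $L^{\mathcal{B}}_{(\cdot)}$; concretely, once $\mathcal{B}$ no longer changes, the blocked projection stops unfolding and reuses the variable $t$, and the loop in $\semextlocal$ generates exactly the $\omega$-iteration present in $L^{\mathcal{B}}_{(\cdot)}$. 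I expect the \textbf{main obstacle} to be this interplay between the bounded unfolding count $n$, the stabilisation of $\mathcal{B}$, and the $\eta$-bookkeeping — getting the induction hypothesis stated with the right quantification over $\mathcal{B}$, $E$, $T$, $\eta$, and $n$ so that the recursion-variable case goes through cleanly, while keeping the union-of-merge lemma cleanly separated.
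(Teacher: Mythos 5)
Your overall strategy---structural induction on the subterm, quantifying over $\mathcal{B}$, with the case analysis dictated by the parallel case splits in the two definitions---is the same as the paper's, and your unfolding of $f$ and $g$ against the four branches of the blocked projection's choice case matches the paper's cases (i)--(iv). Two points differ, one of which is a genuine gap.

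First, the missed simplification: the paper begins by substituting every recursion variable that is unbound in $G$ by its definition via $\getMuG$, obtaining a closed global type $\widehat{G}$, and then inducts on $\widehat{G}$. This single move disposes of most of the $T$/$\eta$/$n$ bookkeeping that you identify as your ``main obstacle'': the recursion-variable case is handled once, by observing that after the substitution every loop body has been traversed and further unfolding changes nothing, rather than by tracking the stabilisation of $\mathcal{B}$ through the counter $n$. You should adopt this reduction rather than fighting the quantification over $E$, $T$, $\eta$, and $n$ directly.

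Second, the gap: your proposed ``union-of-merge'' lemma---that $\lang(L_1 \merge L_2) = \lang(L_1) \cup \lang(L_2)$ whenever the merge is defined, with the availability side conditions ``not needed for the language equality''---is false as stated. The culprit is the recursion case of \cref{def:merge-operator-wo-check-avail}: merging $\mu t_1.\,\rcv{\roleP}{}{a}.t_1$ with $\mu t_2.\,\rcv{\roleQ}{}{b}.t_2$ yields $\mu t_1.(\rcv{\roleP}{}{a}.t_1 \ExtCh \rcv{\roleQ}{}{b}.t_1)$, a \emph{single} loop whose language contains every interleaving of the two receive options across iterations, strictly more than the union of the two input languages. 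So the side conditions (together with projectability of $\GG$ and the structure of the blocked projection, in particular that only identical recursion variables are ever merged and that merged variables carry the same blocked set) are precisely what rules out the situation in which the merge enlarges the language; they cannot be discarded as mere definedness conditions. The paper avoids stating any such general lemma: it argues constructor by constructor that the two sides of the equation contribute the same events before recursing, handles the absorbing case $\epair{\apair{0}{\emptyset}}{\final}$ of $\mmerge$ separately, and defers the hard semantic facts about merged branches to the subsequent lemmas on \projectable types. Either restrict your sub-lemma to the exact shapes of merge that actually occur in the blocked projection of a \projectable type, or drop it and argue in-line as the paper does.
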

\begin{proof}
Let $T'$ be all recursion variables in $G$ that are not bound in $G$.
We define $\widehat{G}$ to be the type that is obtained by substituting every $t \in T'$ by $\getMuG(t)$.
By construction, $\widehat{G}$ is a global type (and not only a syntactic subterm of a latter).
Then, it suffices to show that
$
    \forall \mathcal{B} \subseteq \Procs.\;
    L^{\mathcal{B}}_{(\widehat{G} \ldots)} \wproj_{\Alphabet_\procC}
    =
    \lang(\widehat{G} \tproj^{\mathcal{B}}_{\procC})
$
where $\widehat{G}$ can be considered to be the overall global type for $\semglobal(\widehat{G})$ as well.
We prove this claim by induction on $\widehat{G}$.
There are two base cases.
For $0$, the claim trivially holds and $t$ is no global type which $\widehat{G}$ is by assumption.
For the first induction step, let $\widehat{G} = \mu t.\widehat{G'}$ and the induction hypothesis holds for $\widehat{G'}$.
Both definition agree on the fact that $\mu t$ itself does not carry any event.
The blocked projection ensures to only bind a recursion variable if $\procC$ does use it later on.
There is no such need in the blocked language along paths.
We do the same case analysis as the blocked projection operator.
First, if $t$ is used after some actual event, $t$ is bound and the induction hypotheses concludes the claim.
Second, if $t$ is never used, $t$ is not bound and the blocked language along the path yields the same language by induction hypothesis.
Third, otherwise, i.e. if the projection of the continuation yields $\apair{t}{\_}$, the blocked language along path enters an empty loop and the claim follows.
For the second induction step, let $\widehat{G} = \Sum_{i \in I} \msgFromTo{\procA}{\procB_i}{\val_i}.\widehat{G_i}$ and the induction hypotheses hold for all $\widehat{G_i}$.
We do two case analyses and inspect the different cases in both definitions that apply: one whether $\procA \in \mathcal{B}$ and another one whether $\procC = \procA$.
In every but the last case, we only consider the added events and omit the recursion since this is handled by the induction hypotheses.

 (i) Suppose that $\procA \notin \mathcal{B}$ and $\procA = \procC$. On both sides, $\mathcal{B}$ does not change in the recursive calls.
        Thanks to the projection $\wproj_{\Alphabet_\procC}$, the same event $\snd{\procA}{\procB_i}{\val_i}$ is added before recursing.
        (Due to the projection on the left side, there is no need for a case analysis for $\procB_i \in \mathcal{B}$).

(ii) Suppose that $\procA \notin \mathcal{B}$ and $\procA \neq \procC$. \\
        For this case, we consider each branch $i \in I$ individually and do another case analysis whether $\procC = \procB_i$. \\
        For $\procC \neq \procB_i$, no events are added on the left side (and branches of empty loops, yielding empty languages along the path, ignored for further projection) and the ones added on the right are projected away. \\
        For $\procC = \procB_i$, we need another case split and check whether $\procC \in \mathcal{B}$.
        If so, the left side recurses and produces a language $\set{\emptystring}$ with this, the right side stops immediately and returns $\apair{\epair{0}{\emptyset}}{\final}$ which produces the same language.
        If not, both sides add the same reception event for $\procC$ (and the send event is projected away).
        
(iii) Suppose that $\procA \in \mathcal{B}$ and $\procA = \procC$. On both sides, the result is $\set{\emptystring}$.
        
(iv) Suppose that $\procA \in \mathcal{B}$ and $\procA \neq \procC$. On both sides, no event is added but the recursions add $\procB_i$ to the set $\mathcal{B}$.
        These recursions are also covered by the induction hypotheses because we quantified over $\mathcal{B}$.
\end{proof}

\subsubsection*{Properties of \Projectable Global Types}

In this section, we establish properties which will be used in the induction step of the proof for \cref{lm:proj-types-have-run-mappings} --- one for the send and one for the receive case.

\begin{lemma}[Property of \Projectable global types I - send]
\label{lm:property-send-projectable-MST}
Let $\GG$ be a \projectable global type and $G = \Sum_{i \in I} \msgFromTo{\procA}{\procB_i}{\val_i}.G_i$ be some syntactic subterm.
Then, for all $i, j \in I$ and $\procC \in \Procs$, it holds that
$
L^{\set{\procA, \procB_i}}_{(G_i \ldots)} \wproj_{\Alphabet_\procC}
=
L^{\set{\procA, \procB_j}}_{(G_j \ldots)} \wproj_{\Alphabet_\procC}.
$
\end{lemma}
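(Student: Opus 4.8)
The statement asserts that at a choice node $G = \Sum_{i \in I} \msgFromTo{\procA}{\procB_i}{\val_i}.G_i$, the blocked language of branch $G_i$ under the assumption that $\procA$ and $\procB_i$ are blocked, projected to any role $\procC$, agrees across all branches $i, j \in I$. The natural route is to reduce this to a statement about the \emph{blocked projection operator} via \cref{lm:correspondence-bpo-bl}, and then to a statement about the ordinary (standard) projection operator, which is well-defined by the \projectability hypothesis. Concretely, by \cref{lm:correspondence-bpo-bl} applied with $\mathcal{B} = \set{\procA, \procB_i}$, we have $L^{\set{\procA, \procB_i}}_{(G_i \ldots)} \wproj_{\Alphabet_\procC} = \lang(G_i \tproj^{\set{\procA, \procB_i}}_{\procC})$, and likewise for $j$. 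So it suffices to show that the \emph{blocked} projections $G_i \tproj^{\set{\procA,\procB_i}}_{\procC}$ and $G_j \tproj^{\set{\procA,\procB_j}}_{\procC}$ induce the same language for every $\procC$.

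\textbf{Key steps.} First I would observe that in the definition of the standard projection $G \tproj_\procC$ at the choice node (otherwise-case, $\procC \neq \procA$), the subterms $G_i$ for $i \in I_{[\neq\procC]}$ are combined via $\Merge$, and for $i \in I_{[=\procC]}$ the receptions are assembled into an external choice and merged with the rest. The definedness of $G \tproj_\procC$ — guaranteed by \projectability — therefore forces each branch's projection $G_i \tproj_\procC$ to be mergeable with each other $G_j \tproj_\procC$. The crucial point is that the $\Merge$ operator is only defined when the two operands have the \emph{same shape up to the external-choice case}: equal local types, or two recursions, or two internal choices over the same index set, or two external choices that pass the availability side-conditions. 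In particular, looking one step deeper: when $\procC$ is blocked (i.e.\ $\procC \in \mathcal{B}$, which happens precisely when $\procC = \procB_i$ or $\procC$ becomes blocked transitively), the blocked projection collapses the branch to $\epair{\apair{0}{\emptyset}}{\final}$, and $\mmerge$ absorbs such terms by marking the surviving term final. So I would argue by induction on the structure of $G_i$ (equivalently of $\widehat{G_i}$ as in the proof of \cref{lm:correspondence-bpo-bl}), showing that for the fixed role $\procC$, the blocked projection with $\procA, \procB_i$ blocked ``sees'' exactly the same sequence of $\procC$-events regardless of which $i$ was chosen — because the only difference between $G_i$ and $G_j$ at this level is the identity of $\procB_i$ versus $\procB_j$ and the value $\val_i$ versus $\val_j$, and all of those are events of $\procA$ or $\procB_i$/$\procB_j$, hence blocked and erased by $f(\mathcal{B}, \cdot)$ in the blocked-language definition (or by the blocked projection's $\procA \in \mathcal{B}$ cases). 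The \projectability of $\GG$ is what guarantees that the $\Merge$ in $G \tproj_\procC$ succeeds, and I would unfold that success into the required equality: if $G_i \tproj_\procC$ and $G_j \tproj_\procC$ merge, then after additionally blocking the remaining ``distinguishing'' roles they must project to the same language. I would carry this out by relating $G_i \tproj^{\set{\procA,\procB_i}}_{\procC}$ back to $G_i \tproj_\procC$ via the structure of the blocked projection operator (each blocked-projection case either mimics the standard one or, when a role is blocked, returns $0$/final), and then invoke the hypothesis that the standard merge of all $G_i \tproj_\procC$ is defined.

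\textbf{The main obstacle.} The delicate part is handling the interaction between the availability side-conditions in the external-choice case of $\Merge$ and the blocking of roles. When $\procC \notin \set{\procA, \procB_i}$, the projection $G_i \tproj_\procC$ may begin with a reception from a role that is \emph{not} $\procA$ but whose ability to send depends on $\procA$'s choice — yet under the blocked semantics that sender is not (yet) blocked, only $\procA$ and $\procB_i$ are. So I must show that the \projectability check, via the availability analysis, has already ruled out the cases where branch $i$ and branch $j$ could present $\procC$ with genuinely different first receptions that are not transitively caused by $\procA$. Phrased differently: the hard lemma underneath is that a \projectable global type cannot let $\procC$ observe a branch-specific event before observing a message transitively caused by the choosing role $\procA$ — this is precisely the ``roles learn choices before performing distinguishing actions'' intuition from Section~\ref{sec:soundness}, now needing a rigorous inductive formulation. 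I expect the proof to proceed by a simultaneous induction that tracks both the growing blocked set and the merge-definedness obligations, with the base/unfolding bookkeeping borrowed wholesale from the proof of \cref{lm:correspondence-bpo-bl}.
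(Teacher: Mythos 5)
Your overall strategy is the paper's: reduce the statement to the blocked projection operator via \cref{lm:correspondence-bpo-bl}, then exploit the definedness of the standard merge (guaranteed by \projectability) to show the blocked projections of $G_i$ and $G_j$ cannot differ. You also correctly isolate the central difficulty -- that $\procC$ might receive from a role that is not yet in $\set{\procA,\procB_i}$ but whose send is transitively caused by $\procA$'s choice. However, the proposal as written has two concrete gaps.

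First, the comparison engine is missing. You propose an ``induction on the structure of $G_i$'' showing that the blocked projection sees the same $\procC$-events ``regardless of which $i$ was chosen,'' and you assert that ``the only difference between $G_i$ and $G_j$ at this level is the identity of $\procB_i$ versus $\procB_j$ and the value $\val_i$ versus $\val_j$.'' That assertion is false: $G_i$ and $G_j$ are arbitrary, structurally unrelated continuations, so a structural induction on one of them cannot drive a comparison with the other. The paper resolves this by introducing a strengthened merge $\mmmerge$ that succeeds only on syntactically identical extended local types (same markers, same index sets on external choices) and then proving that $G_i \tproj^{\set{\procA,\procB_i}}_{\procC} \mmmerge\, G_j \tproj^{\set{\procA,\procB_j}}_{\procC}$ is defined, deriving a contradiction from each possible failure mode of $\mmmerge$ using the definedness of the ordinary $\merge$ of $G_i\tproj_\procC$ and $G_j\tproj_\procC$ together with the uniqueness of choice messages. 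Your plan gestures at ``unfolding the success of the merge into the required equality'' but never supplies the mechanism that makes the two sides comparable.

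Second, you do not address recursion alignment. The blocked projection equi-recursively unfolds a recursion variable $t$ whenever the blocked set has changed since $t$ was bound, so if the two branches reach $t$ with different accumulated blocked sets $\mathcal{B}_i \neq \mathcal{B}_j$, their unfoldings -- and hence their languages -- can diverge past the first loop iteration. The paper's Claim~I proves $\mathcal{B}_i = \mathcal{B}_j$ at every such merge point, which is what licenses restricting the remaining case analysis to the segment before the first recursion variable. Borrowing ``the bookkeeping from the proof of \cref{lm:correspondence-bpo-bl}'' does not supply this, since that lemma concerns a single term with a single blocked set; the synchronisation of two independently grown blocked sets is a separate argument your proposal would still need.
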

\begin{proof}
With \cref{lm:correspondence-bpo-bl}, it suffices to show that
$
G_i \tproj^{\set{\procA, \procB_i}}_{\procC} =
G_j \tproj^{\set{\procA, \procB_j}}_{\procC} $
for all $i, j$ and $\procC$.
First, we take care of the two special cases where $\procC$ is either $\procA$ or $\procB_i$ for some $i \in I$.
If $\procC = \procA$, it is straightforward that $\apair{\epair{0}{\emptyset}}{\final}$ is the result of both projections.
For the second case, suppose that there was $i, j \in I$ such that both projections are different and w.l.o.g.\ $\procC = \procB_i$.
Then, it holds that
$G_i \tproj^{\set{\procA, \procB_i}}_{\procC} = \apair{\epair{0}{\emptyset}}{\final}$.
We know that the $\merge$ of both projections is defined and therefore, $\procC$ cannot send as first action in the projection for $G_j$.
Therefore, it needs to receive.
However, then this message must be unique for this branch and by definition, the choice which branch to take is currently blocked and such a message cannot exist in the blocked projection.
Therefore, both projections are
$\apair{\epair{0}{\emptyset}}{\final}$.

For the case where $\procC \neq \procA$ and $\forall i. \; \procC \neq \procB_i$, we introduce a slight variation $\mmmerge$ of $\mmerge$ to state a sufficient claim.
The merge operator $\mmmerge$ is defined as $\mmerge$ but does only merge $\optionfn_1$ and $\optionfn_2$ if $\optionfn_1 = \optionfn_2$ and does require that $I = J$ when merging receptions.
Then, $\mmmerge$ does only merge extended local types that are exactly the same by definition.

Hence, it suffices to show that
$
G_i \tproj^{\set{\procA, \procB_i}}_{\procC} \mmmerge
G_j \tproj^{\set{\procA, \procB_j}}_{\procC} \text{ is defined for all } i, j, \text{ and } \procC.
$

For this, we first show that the set $\mathcal{B}$ in two branches will be the same if some recursion variable is merged with $\mmmerge$. \\
\emph{Claim I:}
For all $i,j$, and $\procC$, if
$t \tproj_\procC^{\mathcal{B}_i} \mmmerge t \tproj_\procC^{\mathcal{B}_j}$
occurs in the computation of the above merge with $\mmmerge$,
then $\mathcal{B}_i = \mathcal{B}_j$.

\emph{Proof Claim I.} \\
(Note that in the presence of empty loops, the definition prevents such kinds of merge by definition.)
Recall that $G_i$ and $G_j$ are two branches of the same choice in which $\procC$ is not involved.
Intuitively, this is true because the $\merge$-operator does not unfold recursions and a role is either agnostic to some choice (merging two recursion variables) or has learnt about the choice before encountering the recursion variable.
Towards a contradiction, suppose that there are $i, j$ and $\procC$ such that $\mathcal{B}_i \neq \mathcal{B}_j$.
W.l.o.g., there must be some role $\procD$ such that $\procD \in \mathcal{B}_i$ and $\procD \notin \mathcal{B}_j$.
It follows that $\procD \notin \set{\procA, \procB_j}$.
We do a case analysis whether $\procD = \procB_i$.

If so, $\procD$ receives in branch $i$ and since $\merge$ is defined, it must also receive in every continuation specified by $G_j$ -- otherwise $t$ would be merged with a reception which is undefined.
(There could be subsequent branches in $G_j$ and therefore $\procD$ could potentially receive different messages but at least one in every continuation.)
By definition, these possible receptions must be unique and therefore distinct from $\val_i$ by $\procA$ and all possible messages for $\procD$ in the continuation of $G_j$.
Because of this uniqueness, the send event for any reception in $G_j$ must not be possible in $G_i$.
But for this, the send event must depend on the decision by $\procA$ and therefore $\procD \in \mathcal{B}_j$ which yields a contradiction.

Suppose that $\procD \neq \procB_i$.
By assumption, $\procD \in \mathcal{B}_i$ so there must be some reception for $\procD$ for which the corresponding send event depends on $\snd{\procA}{\procB_i}{\val_i}$.
From here, the reasoning is analogous to the previous case.

\emph{End Proof of Claim I.} 

In Claim I, $\mathcal{B}_i$ and $\mathcal{B}_j$ capture all roles that depend on the choice by $\procA$ (without unfolding the recursion).
The standard projection and merge operator $\merge$ do not unfold any recursions and $t \merge t$ is the only rule to merge recursion variables.
Therefore, a role either has to learn about a choice by receiving some unique message before recursing or will never learn about it (since $t$ maps back to the same continuation).
Recall that we omit the parameter $n$ for the blocked projection operator if it is bigger than the number of iterations needed to reach a  fixed point.
Combined with Claim I, it suffices to show that
$
G_i \tproj^{\set{\procA, \procB_i}}_{\procC} \mmmerge
G_j \tproj^{\set{\procA, \procB_j}}_{\procC} $
does not return undefined until the first recursion variable is encountered, for all $i, j$, and $\procC$.
Towards a contradiction, suppose it does return undefined for some $i, j$, and $\procC$.

We check all possibilities why $\mmmerge$ can be undefined.
To start with, either of both projections~$\tproj$ can only return undefined when the parameter $n$ is too small which is not the case here.
For $l \in \set{i, j}$, let
$
\epair{
    \apair{\mathit{EL}_l}
    {\mathit{Msg_l}}
}
{\optionfn_l}
=
G_l \tproj^{\set{\procA, \procB_i}}_{\procC} $
Then,
$
\epair{
    \apair{\mathit{EL}_i}
    {\mathit{Msg_i}}
}
{\optionfn_i}
\mmmerge
\epair{
    \apair{\mathit{EL}_j}
    {\mathit{Msg_j}}
}
{\optionfn_j}
$
is undefined if $\optionfn_i \neq \optionfn_j$ or
$\mathit{EL}_i \mmmerge EL_j$ (*) is undefined.

The \textbf{first case} can only occur if
$\epair{\apair{0}{\emptyset}}{\final}$
is merged on top level in the projection of one branch and not the other.
W.l.o.g., let $G_i \tproj_\procC^{\set{\procA, \procB_i}}$ be the branch where this happens.
On the one hand, this occurs when $\procC$ receives from a role in $\mathcal{B}_i$ and since $G_i \tproj_\procC \merge G_j \tproj_\procC$ is defined by assumption, $\procC$ also receives in the branch of $j$ by definition of $\merge$.
For the standard merge $\merge$, both messages are checked to be unique for both branches.
However, if $\procC$ can still receive the message in $G_j \tproj_\procC^{\mathcal{B}_j}$, this cannot depend on the choice by $\procA$ and hence must also be possible in $G_i$ which yields a contradiction.
On the other hand, $\apair{0}{\emptyset}$ can also be the result of projecting empty loops (after some choice operation).
Then, however, since $G_i \tproj_\procC$ is defined, all other branches also return $\apair{0}{\emptyset}$ for the standard projection to be defined.
In turn, since $G_i \tproj_\procC \merge G_j \tproj_\procC$, $G_j \tproj_\procC$ is $\apair{0}{\emptyset}$ and hence the above projection cannot be undefined.

For the \textbf{second case}, we first observe that both $\merge$ and $\mmmerge$ agree on the type of events that can be merged, e.g., only send events can merge with send events.
This is why we do not consider all the different combinations, e.g., of merging send events with recursion binders, but only the ones that are defined in both cases.
Note that $(*)$ is defined for cases where the rules of $\mmmerge$ mimic the ones of $\merge$ literally.
(Recall that we do only need to check until the first recursion variable thanks to Claim I.)
In this sense, both definitions $\mmmerge$ and $\merge$ agree for merging $0$, recursion variables $t$, recursion binders $\mu t. \ldots$ and send events (internal choice $\IntCh$).
For reception events (external choice $\ExtCh$), they do not agree.
The variant $\mmmerge$ requires that both sets $I$ and $J$ are the same.
Since $\optionfn_i = \optionfn_j$, there needs to be some message exchange in one branch that cannot happen in the other branch.
However, analogous to the previous case, the send event of such a reception event must not depend on the choice of $\procA$ which branch to take.
However, this is not possible with the blocked projection operator so $\merge$ would also be undefined in this case which yields a contradiction.
\end{proof}

\begin{lemma}[Property of \Projectable global types II - receive]
\label{lm:property-receive-projectable-MST}
Let $\GG$ be some global type and $G$ be some syntactic subterm of $\GG$.
It holds that
$M^{\mathcal{B}}_{(G \ldots)} \subseteq \avail(\mathcal{B},∅,G)$ for every $\mathcal{B} \subseteq \Procs$.
\end{lemma}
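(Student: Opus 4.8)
The plan is to prove the containment by unfolding both sides into the same tree-shaped traversal of $\GG$ and comparing which receive events they collect; the whole difficulty lies in reconciling the unbounded unfolding of recursion implicit in $L^{\mathcal{B}}_{(G\ldots)}$ with the fact that $\avail$ unfolds each loop only once. First I would make precise the notion of a \emph{path} from $G$: a finite sequence of syntactic subterms $G = G^0,\dots,G^\ell$ where each step either strips a $\mu$-binder, replaces a recursion variable $t$ by $\getMuG(t)$, or descends into one branch $G_i$ of a choice, decorated with blocking sets $\mathcal{B}^0=\mathcal{B},\dots,\mathcal{B}^\ell$ obtained by applying the helper $g$ at each choice step and left unchanged elsewhere. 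Unwinding the definitions of $L^{\mathcal{B}}$ and $M^{\mathcal{B}}$, one obtains the characterisation: $\rcv{\procA}{\procB}{\val}\in M^{\mathcal{B}}_{(G\ldots)}$ iff there is such a path ending at a choice $C = \Sum_{i} \msgFromTo{\procA}{\procB_i}{\val_i}.G_i$ with the same sender $\procA$, with blocking set $\mathcal{B}'$ at $C$, and with a branch $k$ such that $(\procB_k,\val_k)=(\procB,\val)$ and $\procA,\procB\notin\mathcal{B}'$. The forward direction inspects the first occurrence of the event in a witnessing word — only the first case of $f$ emits a receive, which forces $\procA,\procB\notin\mathcal{B}'$; the backward direction extends the path to a maximal (finite or infinite) word, always possible since from every subterm one can move or the word is already complete.

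The second ingredient is that $\avail(\mathcal{B},\emptyset,G)$ performs exactly this traversal, with three differences: it takes the union over \emph{all} branches at a choice rather than one; its blocking-set update at a choice is literally $g$; and it refuses to unfold a recursion variable twice, recording the already-visited ones in the parameter $T$. Because of the union over branches, if $\avail$'s recursion can be routed along a given path $\pi$, then the event collected at the terminal choice is never lost on the way back towards $G$: each step back along $\pi$ is an equality (the $\mu$- and $t$-cases) or a union (the choice case), and the set difference $\setminus\{\rcv{\procA}{\procB_i}{m}\}$ in the choice case is neutralised by the accompanying union over $m\in\MsgVals$, so no element is ever discarded. Hence it suffices to route $\avail$ along some path $\pi$ witnessing $\rcv{\procA}{\procB}{\val}\in M^{\mathcal{B}}_{(G\ldots)}$ and ensure the $T$-cap does not interfere. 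For this I would take a \emph{shortest} witnessing path $\pi$. In the finite graph of subterms a shortest path is elementary, hence visits each subterm — in particular each $\mu t.(\ldots)$ node and each variable node $t$ — at most once, so $T$ never blocks a step of $\pi$ (on an elementary path, visiting the binder precludes visiting the variable, since both have successor $\getMuG(t)$, and conversely). One still has to justify that shortening preserves the witness property: the blocking set only grows along any path and is monotone in its starting value, so deleting a loop from a witnessing path can only \emph{shrink} the blocking set reached at $C$, and $\procA,\procB\notin\mathcal{B}'$ is preserved.

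Putting these together: run $\avail(\mathcal{B},\emptyset,G)$ and resolve choices as $\pi$ does; it traces $\pi$ exactly and reaches $C$ with blocking set $\mathcal{B}'$ (the updates agree with $g$); there $\procA\notin\mathcal{B}'$, so the first clause of the choice case of $\avail$ applies and its union contains $\{\rcv{\procA}{\procB_k}{\val_k}\}=\{\rcv{\procA}{\procB}{\val}\}$; propagating this back along $\pi$ yields $\rcv{\procA}{\procB}{\val}\in\avail(\mathcal{B},\emptyset,G)$. Note that projectability of $\GG$ is not needed, since $\avail$ and $L^{\mathcal{B}}$ are defined without reference to the merge operator. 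The main obstacle — the heart of the argument — is the recursion mismatch described above, i.e.\ showing that a single unfolding of each loop is enough; this rests on (a) $\avail$ exploring all branches, (b) monotonicity of blocking sets under path-shortening, and (c) the observation that an elementary path never revisits a recursion node so the visited-set $T$ stays inert. A minor point worth spelling out in the write-up is the precise effect of the union over $\MsgVals$ in the choice case of $\avail$, which is exactly what makes the apparently lossy set difference harmless for this over-approximation.
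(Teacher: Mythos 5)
Your proof is correct, but it takes a genuinely different route from the paper's. The paper first substitutes every free recursion variable of $G$ by $\getMuG(t)$ to obtain a closed type $\widehat{G}$, observes that $\avail(\mathcal{B},\emptyset,G)=\avail(\mathcal{B},T,\widehat{G})$ with $T$ the set of \emph{all} variables (so $\avail$ never unfolds again), and then runs a structural induction on $\widehat{G}$; the crucial case of a bound variable $t$ is discharged by the informal remark that ``we traversed all branches from the binder to the variable at least once and unfolding further does not change the set of available messages.'' You instead make that remark the theorem: you characterise membership of $\rcv{\procA}{\procB}{\val}$ in $M^{\mathcal{B}}_{(G\ldots)}$ by the existence of a decorated path to the emitting choice, prove that blocking sets grow along paths and that the update $g$ is monotone in its first argument, conclude that loop-deletion preserves witnesses, and hence that an elementary witnessing path exists on which the visited-set $T$ of $\avail$ never fires (since an elementary path ending at a choice cannot contain both $\mu t.G'$ and $t$, their common successor being the body $G'$). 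What each approach buys: the paper's induction is shorter on the page, but taken literally it does not close at the leaf $t$ — there $M^{\mathcal{B}}_{(t\ldots)}=M^{\mathcal{B}}_{(\mu t.G'\ldots)}$ can be nonempty while $\avail(\mathcal{B},T,t)=\emptyset$ once $t\in T$ — so it implicitly relies on exactly the context-sensitive ``one unfolding suffices'' fact that you prove. Your elementary-path argument supplies that missing justification and also makes explicit why the set difference in the choice case of $\avail$ is harmless for this inclusion (the union over $\MsgVals$, or in the degenerate singleton case the immediate re-insertion of $\rcv{\procA}{\procB_i}{\val_i}$), a point the paper's proof silently drops when it rewrites the choice case without the difference. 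Your observation that projectability is not needed is also accurate. The one place to be careful in a full write-up is the backward direction of your path characterisation (nonemptiness of the continuation language $L^{g(\ldots)}_{(G_k\ldots)}$ for loops), but since only the forward direction is used to prove the stated inclusion, this does not affect the lemma.
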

\begin{proof}
Let $T'$ be all recursion variables in $G$ that are not bound in $G$.
We define $\widehat{G}$ to be the type that is obtained by substituting every $t \in T'$ by $\getMuG(t)$.
By construction, $\widehat{G}$ is a global type (and not only a syntactic subterm of a latter).

Let $T$ be the set of all recursion variables in $\widehat{G}$.
Since we unfolded the unbound recursion variables up-front in $\widehat{G}$ prohibits us from substituting for any recursion variable again,
it is straightforward, from the definition of $\avail(\hole, \hole, \hole)$, that the following holds:
\[
\avail(\mathcal{B}, \emptyset, G) = \avail(\mathcal{B}, T, \widehat{G})
\]
By construction of $\widehat{G}$, the computation of $\avail(\mathcal{B}, T, \widehat{G})$ will never apply the second case which is the only one where we do not simply descend in the structure of the third parameter so induction hypotheses will apply.
By the equi-recursive view on global types, it holds that
$
M^{\mathcal{B}}_{(G \ldots)}
=
M^{\mathcal{B}}_{(\widehat{G} \ldots)}
$
and therefore it suffices to show that
$
M^{\mathcal{B}}_{(\widehat{G} \ldots)}
\subseteq
\avail(\mathcal{B}, T, \widehat{G}).
$

We do induction on the structure of $\widehat{G}$ to prove the claim.
For the base case where $\widehat{G} = 0$, the claim follows trivially.
For $\widehat{G} = t$, we know that either $t \in T'$ has been substituted by its definition using $\getMuG(\hole)$ or it was bound in~$G$.
Normally, one would need to substitute $t$ again and consider the messages.
However, in both cases, we know that we traversed all branches from the binder to the variable at least once and unfolding further does not change the set of available messages.

For the induction step, suppose that $\widehat{G} = \mu t.\widehat{G'}$.
By induction hypothesis, we know that the claim holds for $\widehat{G'}$ and the binder for recursion variable $t$ does not add any message to the set of messages:
$
    M^{\mathcal{B}}_{(\mu t.\widehat{G'} \ldots)} =
    M^{\mathcal{B}}_{(\widehat{G'} \ldots)}.
$
By definition, we have that
$
    \avail(\mathcal{B}, T, \mu t.\widehat{G'}) =
    \avail(\mathcal{B}, T ∪ \set{t}, \widehat{G'}) =
    \avail(\mathcal{B}, T, \widehat{G'}).
$
By induction hypothesis, it holds that
$M^{\mathcal{B}}_{(\widehat{G'} \ldots)}
\subseteq
\avail(\mathcal{B}, T, \widehat{G'})$
which proves the claim.

Last, let
$\widehat{G} = \Sum_{i \in I} \msgFromTo{\procA}{\procB_i}{\val_i}.\widehat{G'_i}$
for some index set $I$, roles $\procA$, $\procB_i$, message values $\val_i$ and global subterms $\widehat{G'_i}$ for every $i \in I$.
We do a case analysis on whether $\procA \notin \mathcal{B}$.

Suppose that $\procA \notin \mathcal{B}$.
Then, all receipts are added by definition of $M^{\mathcal{B}}_{\_}$:
\[
    M^{\mathcal{B}}_{(\widehat{G} \ldots)} =
    \set{\rcv{\procA}{\procB}{\val_i} \mid i \in I} \union
    \Union_{i \in I} M^{\mathcal{B}}_{(\widehat{G'_i}.\ldots)}.
\]
For $\avail(\mathcal{B}, T, \widehat{G})$, the fourth case in the definition applies:
\[
    \avail(\mathcal{B}, T, \widehat{G}) =
    \set{ \rcv{\procA}{\procB}{\val_i} \mid i ∈ I } ∪
    \bigcup_{i∈I} \avail(\mathcal{B}, T, \widehat{G'_i}).
\]
The first part of the equations are identical and combining the induction hypotheses for each $\widehat{G'_i}$ proves the claim.

Suppose that $\procA \in \mathcal{B}$.
Then, by definition, it holds that that $M^{\mathcal{B}}_{(\widehat{G} \ldots)}$ does not contain any of the receipts $\rcv{\procA}{\procB_i}{\_}$.
Additionally, $\procB_i$ is added to $\mathcal{B}$ by definition:
$
    M^{\mathcal{B}}_{(\widehat{G} \ldots)} =
    \Union_{i \in I} M^{\mathcal{B} \union \set{\procB_i}}_{(\widehat{G'_i}.\ldots)}.
$
For $\avail(\mathcal{B}, T, \widehat{G})$, the fifth case applies:
$
    \avail(\mathcal{B}, T, \widehat{G}) =
    \bigcup_{i∈I} \avail(\mathcal{B} \union \set{\procB_i}, T, \widehat{G'_i}).
$
Combining the induction hypotheses for each $\widehat{G'_i}$ proves the claim.
\end{proof}
 
\subsubsection*{Proof of \cref{lm:proj-types-have-run-mappings}: \\ Existence of Family of Run Mappings for \Projectable Global Types}
Equipped with these properties, we can prove \cref{lm:proj-types-have-run-mappings} which reads:
Let $\GG$ be a \projectable global type.
Then, $\CSMl{\semlocal(\GG \tproj_\procA)}$ has a family of run mappings for $\semglobal(\GG)$.

\begin{proof}
Let $w$ be a prefix of an execution of $\CSMl{\semlocal(\GG \tproj_\procA)}$.
We prove the claim by induction on the length of $w$.
The base case where $w = \emptystring$ is trivial: $\rho = \emptystring$ and $\rho(\procA) = \emptystring$ for all $\procA$.

For the induction step, we append $x$ to obtain $wx$ for which $x = \tau$,
$x = \rcv{\procB}{\procA}{\val}$ or
$x = \snd{\procA}{\procB}{\val}$ for some $\procA, \procB$ and $\val$.
As induction hypothesis, we assume that there is some run $\rho$ in $\semglobal(\GG)$ such that
$w \wproj_{\Alphabet}$ is the prefix of $w'$ with
$w' \interswap \trace(\rho)$
and a run mapping $\rho(\hole)$ such that
$ w \wproj_{\Alphabet_\procA} = \trace(\rho(\procA)) \wproj_{\Alphabet_\procA}$
for every $\procA$.

For all cases, we re-use the witness run $\rho$ by extending it to $\rho'$ when necessary and re-use the run mapping $\rho(\procA)$ except for at most one process.

\textbf{First}, Suppose that $x = \tau$.
Then, the claim follows with the same run $\rho$ and mapping $\rho(\hole)$.
It is straightforward that $w \wproj_{\Alphabet_\procA} = (w\tau) \wproj_{\Alphabet_\procA}$ for every $\procA$ holds.
Let $\procA$ be the process which takes an $\emptystring$-transition and hence leads to the extension by $\tau$.
By \cref{prop:shape-semantics-local-L}, we know that a state can only have a single $\emptystring$-transition and hence the execution prefix $w \tau$ can be extended in the same way as $w$ is extended to $w'$ to obtain $w''$ for $w\tau$.
Since $w'$ and $w''$ only differ by one $\tau$, $w'' \interswap w' \interswap \trace(\rho)$ and the claim follows by transitivity of $\interswap$.

\textbf{Second}, suppose that $x = \rcv{\procB}{\procA}{\val}$. \\
\emph{Claim 1.} It holds that $\rho(\procA) \preforder \rho(\procB)$.

      \emph{Proof of Claim 1.}
      We know that $\trace(\rho(\procA)) \wproj_{\Alphabet_\procA} = w \wproj_{\Alphabet_\procA}$
      and $\trace(\rho(\procB)) \wproj_{\Alphabet_\procB} = w \wproj_{\Alphabet_\procB}$.
      By definition of $\semglobal(\GG)$, we know that $\procA$'s and $\procB$'s common actions always happen in pairs of sending and receiving a message.
      Since there is no out-of-order execution, we know that $\procA$ cannot have received $\val$ from $\procB$ yet.
      Hence, $\rho(\procA) \preforder \rho(\procB)$.

      \emph{End Proof of Claim 1.}

For every $\procC \neq \procA$, we define $\rho'(\procC) = \rho(\procC)$.
We know that $(wx) \wproj_{\Alphabet_\procC} = w \wproj_{\Alphabet_\procC}$ for every $\procC \neq \procA$ so the conditions on the run mapping is satisfied for all $\procC \neq \procA$.

For $\procA$, we extend $\rho(\procA)$ to be maximal, i.e., extending the run further would render one of the conditions for run mappings unsatisfied, and obtain a set of possible extended runs.
For every run, either the last or second-last state of these runs corresponds to some syntactic subterm of $\GG$.
Note that one of them will be a prefix of the sender's run $\rho(\procB)$.
In case of empty loop branches, the set of runs might be infinite, however, only one is a prefix of the sender's run.
We denote its syntactic subterm by $G'$ and we will show that only the option taken by $\procB$ can be pursued.

Since $\procA$ can receive $\val$ from $\procB$, we know that the current local type $L$ of $\procA$ is the result of (merging) the projection of some type(s) of shape $\Sum_{\_} \msgFromTo{\_}{\_}{\_}.\_$ and all these syntactic subterms are of this form.

Since $\procA$ can receive $\val$ from $\procB$, the local type $L$ must be the result of some merge applying the receiving rule last to compute
\[
 \apair{L}{\_} = \apair{L₁}{\mathit{Msg}_1} \merge \ldots \merge \apair{L_n}{\mathit{Msg}_n}
\]
 for some $n$ and
$L_l = \ExtCh_{i ∈ I_l} \rcv{\procB_l}{}{\val_i.AL_{l,i}}$ for every $l \in \set{1, \ldots, n}$ and for some $l \in \set{1, \ldots, n}$, it holds that $\apair{L_l}{\mathit{Msg}_l} = G' \tproj_\procA$.
The other local types correspond to the branches $\procA$ explores concurrently while processing $w \wproj_{\Alphabet_\procA}$.
Note that we unrolled the binary definition of merge but since $\merge$ is associative and commutative, we know that $\apair{L_l}{\mathit{Msg}_l} \merge \apair{L_{l'}}{\mathit{Msg}_{l'}}$ is defined for all $l, l' \in \set{1, \ldots, n}$.
Therefore it holds, for any $l, l' \in \set{1, \ldots, n}$, that
   \begin{align*}
            ∀ i∈ I_l \setminus I_{l'}.\, \rcv{\procB_l}{\procA}{\val_i} \notin \mathit{Msg}_{l'}, & \text{ and }\\
            ∀ i∈ I_{l'} \setminus I_l.\, \rcv{\procB_{l'}}{\procA}{\val_i} \notin \mathit{Msg}_l. &
   \end{align*}

W.l.o.g., let $k \in \Union_{1 \leq l \leq n} I_l$ be the index for which $\procB = \procB_k$ and $\val = \val_k$.
Therefore, for any $l$ with $k \notin I_l$, it holds that $\rcv{\procB_k}{\procA}{\val_k} \notin \mathit{Msg}_l$.

Let $l$ be some index.
To obtain every individual availability annotated local type $\apair{L_l}{\mathit{Msg}_l}$, the following case of the projection has been applied:
\[
    \apair{L_l}{\mathit{Msg}_l} =
    \apair{
        \ExtCh_{j \in J_l} \rcv{\procB_l}{}{\val_j}.(G_j \tproj_\procA)
    }{
        \bigcup_{j∈J} \avail(\set{\procA},∅,G_j)
    }
\]
for some index sets $J_l$ and global syntactic subterms $G_j$.
\\
From \cref{lm:property-receive-projectable-MST}, we have that, for every $l \in \set{1, \ldots, n}$, $j \in I_l$ and $\mathcal{B} \subseteq \Procs$, it holds that
$M^{\mathcal{B}}_{(G_j \ldots)} \subseteq \avail(\mathcal{B},∅,G_j)$ (*). Recall that $M^{\mathcal{B}}_{(G_j \ldots)}$ is defined as all messages that can be sent in any run starting with branch $G_j$ when no process in $\mathcal{B}$ can take any further step.

By definition of $\merge$, all the first actions $\rcv{\procB_k}{}{\val_k}$ will be merged together to one branch in the local type $L$ so $\procA$ will continue with this branch of $L$.
Let us denote this branch by $B_k$ for now.
By instantiating (*) with $\mathcal{B} = \set{\procA}$ whose next action is to receive from $\procB$, we know that $\val_k$ by $\procB_k$ cannot occur in the channel $\channel{\procB}{\procA}$ in any branch of $L$ different from $B_k$ so $\procA$ cannot diverge from $\procB$'s run $\rho(\procB)$ by receiving $\val_k$ out of order.
Hence, $\procA$ follows the run of $\procB$.
By assumption $\procA$ can receive from $\procB$, so $\procB$ must have taken one of the runs that corresponds to this branch of $L$ so we can choose $\rho'(\procA)$ such that $\rho'(\procA) \preforder \rho(\procB)$ which proves the claim.

\textbf{Third}, suppose that $x = \snd{\procA}{\procB}{\val}$.

We collect the processes with longer runs than $\procA$ in a set: $\mathcal{S} \is
\set{ \procC \mid \rho(\procA) \preforder \rho(\procC) \land \rho(\procA) \neq \rho(\procC)}$.
So $\procA \notin \mathcal{S}$.
We assume $\mathcal{S}$ is not empty as the claim follows trivially if it is.
(We could simply extend $\rho$, keep the mapping for all other processes and extend it for $\procA$ accordingly.)

As before, for $\procA$, we extend $\rho(\procA)$ to be maximal, i.e., extending the run further would render one of the conditions unsatisfied, and obtain a set of possible runs.
In the presence of empty loop branches, this set could be infinite. For this single step of the induction, it suffices to obtain some valid run mapping. By construction, all possible paths through the loop are present. For further extensions of the run mapping, we exploit the idea of prophecy variables to get the run mapping where $\procA$ chose the path in the loop as chosen by the deciding processes and show that this does not impose any restrictions.
For every run in this set, either the last or second-last state of these runs corresponds to some syntactic subterm of $\GG$.
Since $\procA$ can send $\val$ to $\procB$, we know that the current local type $L$ of $\procA$ is the result of (merging) the projection of some type(s) of shape $\Sum_{\_} \msgFromTo{\_}{\_}{\_}.\_$ and all these syntactic subterms are of this form.

The current local type $L$ of $\procA$ is the result of merging of this shape since $\procA$ sends at this step:
\[
 L =
 \Merge_{i \in I}
 (\IntCh_{j \in J} \snd{}{\procB_j}{\val_j}.(G_{(i,j)} \tproj_\procA))
 =
 \IntCh_{j \in J} \snd{}{\procB_j}{\val_j}.
 \Merge_{i \in I} (G_{(i,j)} \tproj_\procA)
\]
for some index sets $I$ and $J$, $\procB_j$ as well as $\val_j$ and $G_{(i,j)}$ for every $j \in J$ and $i \in I$.

So the merge $\merge$ ensures that $\procA$ has the same options to send after processing $w \wproj_{\Alphabet_\procA}$, no matter which run in $\semglobal(\GG)$ was pursued.
This ensures that we are able to adapt the mapping from the induction hypothesis such that $\trace(\rho'(\procA)) = (wx) \wproj_{\Alphabet_\procA}$.

Inspired by prophecy variables \cite{DBLP:conf/lics/AbadiL88}, we assume that each process $\procC$ in $\mathcal{S}$ followed the run that $\procA$ will choose to take, i.e., $\rho'(\procA) \preforder \rho(\procC)$.
Assuming this, we can re-use the same mapping for them: $\rho'(\procC) = \rho(\procC)$ and the overall claim follows.

It remains to show that the mappings $\rho(\procC)$ for $\procC \in \mathcal{S}$ can be chosen in such a way without prohibiting any of the processes to proceed with some events.
We will do so by applying \cref{lm:property-send-projectable-MST}.

Process $\procA$ might be pursuing more than one run in $\semglobal(\GG)$.
Since all first send options for $\procA$ could be merged, we know that they are the same over all possible continuations.
Therefore, we show that no matter which option $\procA$ chooses, no process $\procC \in \mathcal{S}$ can have processed an action that is incompatible with this branch yet, provided that neither the sender $\procA$ nor each receiver $\procB_i$ has processed some event in the continuations of the branch of option $i$.

For this, we show for all $\procC \in \mathcal{S}, i \in I$ and $j_1, j_2 \in J$ that
\[
 L^{\set{\procA, \procB_i}}_{(G_{(i,j_1)} \ldots)} \wproj_{\Alphabet_\procC}
 =
 L^{\set{\procA, \procB_j}}_{(G_{(i,j_2)} \ldots)} \wproj_{\Alphabet_\procC}.
\]
Since we do only compare runs that start at the same state, i.e., the syntactic subterm $\IntCh_{j \in J} \snd{}{\procB_j}{\val_j}.G_{(i,j)}$,
this claim follows from \cref{lm:property-send-projectable-MST}.
It suffices to compare those since $\procA$ can not and does not dictate which of the branches of $I$ were taken, but the ones in $J$.
It might happen that some choice wrt.\ the options in $I$ has already been committed but $\procA$ does not know about this at this stage yet.
Still, all processes in $\mathcal{S}$ have the same options along all the options in $J$ and therefore the use of prophecy variables does not prohibit them from proceeding with any action that is possible in the execution.

\end{proof}
  
\subsubsection*{Runs Following a Common Path are Extendable}

\begin{lemma}[Runs following a common path are extendable]
\label{lm:exec-prefix-single-run-extendable}
Let $\GG$ be a \projectable global type,
$w$ an execution prefix of $\CSMl{\semlocal(\GG \tproj_\procA)}$, and
$\rho$ a finite run in $\semglobal(\GG)$ such that
$w \wproj_{\Alphabet_\procA} \preforder \trace(\rho) \wproj_{\Alphabet_\procA}$ for every $\procA$.
Then, there is an extension $w'$ of $w$ in $\CSMl{\semlocal(\GG \tproj_\procA)}$ such that $w' \interswap \trace(\rho)$.
\end{lemma}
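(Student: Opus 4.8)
The plan is to argue by induction on the measure $d(w,\rho) \is \sum_{\procB\in\Procs}\bigl(\card{\trace(\rho)\wproj_{\Alphabet_{\procB}}} - \card{w\wproj_{\Alphabet_{\procB}}}\bigr)$. This is a well-defined natural number: $\trace(\rho)$ is finite, the hypothesis gives $w\wproj_{\Alphabet_{\procB}} \preforder \trace(\rho)\wproj_{\Alphabet_{\procB}}$ for every role, and since every event has a unique active role, $\card{w} = \sum_{\procB}\card{w\wproj_{\Alphabet_{\procB}}}$ is finite as well.

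In the base case $d(w,\rho)=0$ we have $w\wproj_{\Alphabet_{\procB}} = \trace(\rho)\wproj_{\Alphabet_{\procB}}$ for all $\procB$. The word $w$ is \channelcompliant by \cref{lm:execution-prefix-and-channel-content}, and $\trace(\rho)$ is \channelcompliant because in $\semglobal(\GG)$ every send is immediately followed by its matching receive, so along any prefix the receipts on each channel form a prefix of the sends. Hence \cref{lm:interswap-complete-wrt-channelcompliancy} yields $w \interswap \trace(\rho)$, and $w' \is w$ works.

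For the inductive step, suppose $d(w,\rho) > 0$, so some role is strictly behind. Write $\trace(\rho) = e_1 \cdots e_n$; for each strictly-behind role $\procB$ let $k(\procB)$ be the position in $\trace(\rho)$ of its first event not yet in $w\wproj_{\Alphabet_{\procB}}$, pick $\procB$ minimising $k(\procB)$, and put $x \is e_{k(\procB)}$. The heart of the argument is that $wx$ is again an execution prefix of $\CSMl{\semlocal(\GG\tproj_\procA)}$ and still satisfies the hypothesis against $\rho$. The latter is easy: only $\procB$'s projection changes, growing by exactly the next letter of $\trace(\rho)\wproj_{\Alphabet_{\procB}}$, so $wx\wproj_{\Alphabet_{\procC}} \preforder \trace(\rho)\wproj_{\Alphabet_{\procC}}$ still holds for every $\procC$ and $d(wx,\rho) = d(w,\rho) - 1$. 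For the former: $w\wproj_{\Alphabet_{\procB}}\cdot x$ is a prefix of $\trace(\rho)\wproj_{\Alphabet_{\procB}}$, which by \cref{prop:projection-preserves-per-process-runs} is a trace of $\semlocal(\GG\tproj_\procB)$; since this machine is deterministic (distinct labels for distinct options, and at most one $\emptystring$-transition per state by \cref{prop:shape-semantics-local-L}), $\procB$ can perform $x$ from the local state it is in after $w$, up to $\tau$-moves. If $x$ is a send, this suffices. If $x = \rcv{\procD}{\procB}{\val}$ (with $\procD \neq \procB$, as there are no self-channels), we must also check that $\val$ heads the channel $\channel{\procD}{\procB}$ after $w$: let $r \is \card{w\wproj_{\rcv{\procD}{\procB}{\_}}}$; since $w\wproj_{\Alphabet_{\procB}}$ is the $\procB$-restriction of $e_1\cdots e_{k(\procB)-1}$ and each send alternates with its receipt in $\semglobal(\GG)$, this prefix contains exactly $r$ receipts and $r+1$ sends on $\channel{\procD}{\procB}$, the $(r+1)$-st carrying $\val$; moreover that $(r+1)$-st send, being a $\procD$-event at a position $< k(\procB) \leq k(\procD)$ (or $\procD$ not strictly behind, in which case all of its events are already in $w$), already occurs in $w$. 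By \cref{lm:execution-prefix-and-channel-content} together with \channelcompliancy of $\trace(\rho)$, the content of $\channel{\procD}{\procB}$ after $w$ therefore starts with $\val$, so $\procB$ can receive it.

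Consequently $wx$ is an execution prefix of the CSM with $d(wx,\rho) < d(w,\rho)$, and the induction hypothesis supplies an extension $w'$ of $wx$ — hence of $w$ — with $w' \interswap \trace(\rho)$, which proves the lemma. The main obstacle is exactly the receive case above: the choice of the ``most urgent'' role $\procB$ and the FIFO bookkeeping certifying that the matching send has already been executed and is waiting at the head of the channel; the remaining cases and the closure steps are routine.
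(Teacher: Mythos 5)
Your proof is correct and follows essentially the same strategy as the paper's: both complete $w$ to a reordering of $\trace(\rho)$ one event at a time, scheduling each pending receive only after locating its matching send via the minimal-position argument and the FIFO bookkeeping of \cref{lm:execution-prefix-and-channel-content}, and both close with \cref{lm:interswap-complete-wrt-channelcompliancy}. The only cosmetic difference is that you build the CSM run directly by induction on a deficiency measure, whereas the paper constructs a channel-compliant word $wy$ with the right per-role projections and then transfers it to a CSM run via \cref{lm:pref-MST-has-run} and \cref{lm:csm-closed-under-interswap}.
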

\begin{proof}
From \cref{lm:pref-MST-has-run}, we know that every prefix of $\lang(\GG)$ has some run in $\CSMl{\semlocal(\GG \tproj_\procA)}$.
It holds that $\trace(\rho) \in \pref(\lang(\GG))$ so there is a run for $\trace(\rho)$ in $\CSMl{\semlocal(\GG \tproj_\procA)}$.
With \cref{lm:csm-closed-under-interswap}, it follows that every trace in $\interswaplang(\trace(\rho))$ has a run in $\CSMl{\semlocal(\GG \tproj_\procA)}$.
Hence, it suffices to show that $w \in \pref(\interswaplang(\trace(\rho))$.
To prove this, we need to find $w'$ such that $w \preforder w'$ and $w' \interswap \trace(\rho)$.

By \cref{lm:interswap-complete-wrt-channelcompliancy}, it suffices to find a \channelcompliant $w'$ such that
$w' \wproj_{\Alphabet_\procA} = \trace(\rho) \wproj_{\Alphabet_\procA}$.

For every $\procA$, there is $y_\procA$ such that $(w \wproj_{\Alphabet_\procA}).y_\procA = \trace(\rho) \wproj_{\Alphabet_\procA}$.
We show that there is a \channelcompliant $w' = wy$ such that $y \wproj_{\Alphabet_\procA} = y_\procA$.
Since $w$ is an execution prefix of a CSM, $w$ is \channelcompliant (\cref{lm:execution-prefix-and-channel-content}).
We need to construct $y$ and start with $y = \emptystring$.
We extend $y$ by consuming prefixes of $y_\procA$ and do so as long as some $y_\procA$ is not empty.
At all times, we preserve the following invariant for every $\procA$:
$
    (wy) \wproj_{\Alphabet_\procA}.y_\procA =
    \trace(\rho) \wproj_{\Alphabet_\procA}.
$

If there is any $y_\procA = (\snd{\procA}{\procB}{\val}).y'_\procA$, we extend $y = y.(\snd{\procA}{\procB}{\val})$ and set $y_\procA = y'_\procA$.
This preserves the invariant and appending actions of shape $\snd{\_}{\_}{\_}$ to $y$ does preserve \channelcompliancy of $wy$.
If there is no such $y_\procA$ (and not all $y_\procA$ are empty), there is a set of roles $\mathcal{S}$ such that $y_\procA = (\rcv{\_}{\procA}{\_}).y'_\procA$ for every $\procA \in \mathcal{S}$.
We define the shortest prefix $\rho'$ of $\rho$ such that there is some role whose trace is included in the projection of $\trace(\rho')$:
\[
\rho' \is
\min_{\preforder}\set{\rho' \mid \exists \procA \in \mathcal{S}.\, (wy) \wproj_{\Alphabet_\procA} = \trace(\rho') \wproj_{\Alphabet_\procA} \text{ for } \rho' \preforder \rho}.
\]
Let $\procA \in \mathcal{S}$ such that $(wy) \wproj_{\Alphabet_\procA} = \trace(\rho') \wproj_{\Alphabet_\procA}$.
Let $\rho' = \rho''.q.q'$.
By design, we know that the label $x \in \Alphabet_\procA$ for transition $(q, x, q')$ in $\semglobal(\GG)$.
Therefore, the choice of $\procA$ is unique and for all other roles $\procB \neq \procA$, it holds that $(wy) \wproj_{\Alphabet_\procB} \neq \trace(\rho') \wproj_{\Alphabet_\procB}$.
By assumption, it holds that $(wy) \wproj_{\Alphabet_\procB} \preforder \trace(\rho)\wproj_{\Alphabet_\procB}$ and hence $\trace(\rho') \wproj_{\Alphabet_\procB} \preforder (wy) \wproj_{\Alphabet_\procB}$ because $\rho' \preforder \rho$ for $\procB \neq \procA$ (*).

Let $y_\procA = (\rcv{\procB}{\procA}{\val}).y'_\procA$ for some $\procB$ and $\val$.
We claim that appending the first action $\rcv{\procB}{\procA}{\val}$ from $y_\procA$ to $y$, i.e., $y \is y.(\rcv{\procB}{\procA}{\val})$ and $y_\procA = y'_\procA$, preserves \channelcompliancy of $wy$.
It is straightforward that the invariant is preserved.
For \channelcompliancy of $wy$, we need to show that
$
    \MsgVals((wy) \wproj_{\snd{\procB}{\procA}{\_}}) \preforder
    \MsgVals((wy) \wproj_{\rcv{\procB}{\procA}{\_}}).
$

By the invariant, we know that
$
    (wy) \wproj_{\Alphabet_\procA}.y_\procA =
    \trace(\rho) \wproj_{\Alphabet_\procA}
    \text{ and }
    (wy) \wproj_{\Alphabet_\procB}.y_\procB =
    \trace(\rho) \wproj_{\Alphabet_\procB}
$,
so it holds that
$
    (w.y.y_\procA) \wproj_{\rcv{\procB}{\procA}{\_}} =
    \trace(\rho) \wproj_{\rcv{\procB}{\procA}{\_}}
    \text{ and }
    (w.y.y_\procB) \wproj_{\snd{\procB}{\procA}{\_}} =
    \trace(\rho) \wproj_{\snd{\procB}{\procA}{\_}}.
$
Because global types describe protocols,
$\trace(\rho)$ is \channelcompliant which yields that
    $\MsgVals(\trace(\rho) \wproj_{\rcv{\procB}{\procA}{\_}}) =
     \MsgVals(\trace(\rho) \wproj_{\snd{\procB}{\procA}{\_}})$.
This ensures that the next message in $\channel{\procB}{\procA}$ is $\val$ if it exists.
By design of $\semglobal(\GG)$, the send and reception part of a $\msgFromTo{\procC}{\procD}{\val}$ always occur in pairs on the run $\rho$, i.e., it holds that for any transitions $q' \redtoover{\rcv{\procC}{\procD}{\val}} q''$ in $\rho$, there is $q$ such that $q \redtoover{\snd{\procC}{\procD}{\val}} q'$ in $\rho$.
From (*), we know that $\snd{\procB}{\procA}{\val}$ must be the next action of shape $\snd{\procB}{\procA}{\_}$ with unmatched reception in $wy$.
Since our channels preserve FIFO order, the next message in channel $\channel{\procB}{\procA}$ is~$\val$.

Overall, this procedure ensures that we can always extend $wy$ by appending prefixes of $y_\procA$ in a way that preserves \channelcompliancy of $wy$ and
$
    (wy) \wproj_{\Alphabet_\procA}.y_\procA =
    \trace(\rho) \wproj_{\Alphabet_\procA}.
$ for every $\procA$.
Note that $\rho$ is finite and hence each $y_\procA$ is finite so the procedure terminates.
For $w'$, we can choose the result of the procedure $wy$ and it holds, for every~$\procA$, that
$
  (wy) \wproj_{\Alphabet_\procA} =
  \trace(\rho) \wproj_{\Alphabet_\procA}
$
which proves the claim.
\end{proof}
 
\subsubsection*{From Run Mapping via Control Flow Agreement to Protocol Fidelity}

To show protocol fidelity, we use a strengthening of the run mappings by a progress condition.
Any execution prefix can be extended to match all actions in the run given by the run mapping.

\begin{definition}[Control Flow Agreement]
\label{def:cfa}
Let $\GG$ be a global type.
A CSM $\CSM{A}$
satisfies \emph{Control Flow Agreement (CFA)} for $\semglobal(\GG)$ iff
for every execution prefix $w$ of $\CSM{A}$,
there is a run $\rho$ in $\semglobal(\GG)$ such that the following holds:
\begin{itemize}
\item $w \wproj_{\Alphabet_\procA}$ is a prefix of
$\trace_{\semglobal(\GG)}(\rho) \wproj_{\Alphabet_\procA}$ for every role $\procA$, and
\item $w$ can be extended (in $\CSM{A}$) to $w'$ such that
$w' \interswap \trace_{\semglobal(\GG)}(\rho)$.
\end{itemize}
\end{definition}

\begin{lemma}[Run mappings entail CFA]
\label{lm:run-mappings-entail-cfa}
Let $\GG$ be a \projectable global type,
If $\CSMl{\semlocal(\GG \tproj_\procA)}$ has a family of run mappings for $\semglobal(\GG)$,
then $\CSMl{\semlocal(\GG \tproj_\procA)}$ satisfies CFA for $\semglobal(\GG)$.
\end{lemma}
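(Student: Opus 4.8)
The plan is to derive CFA almost directly from the hypothesis together with \cref{lm:exec-prefix-single-run-extendable}. Fix an arbitrary execution prefix $w$ of $\CSMl{\semlocal(\GG \tproj_\procA)}$; I must produce a run of $\semglobal(\GG)$ satisfying the two conditions of \cref{def:cfa}. By the assumed family of run mappings, there is a witness run $\rho$ of $\semglobal(\GG)$ and a run mapping $\rho_{\scriptscriptstyle\Procs}\colon \Procs \to \pref(\rho)$ with $w \wproj_{\Alphabet_\procA} = \trace(\rho_{\scriptscriptstyle\Procs}(\procA)) \wproj_{\Alphabet_\procA}$ for every role $\procA$. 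First I would observe that each $\rho_{\scriptscriptstyle\Procs}(\procA)$ may be taken to be a \emph{finite} prefix of $\rho$: the minimal prefix of $\rho$ realizing that equality ends at the $\card{w \wproj_{\Alphabet_\procA}}$-th $\Alphabet_\procA$-labelled transition of $\rho$, which sits at a finite position. Since all these prefixes lie on the single run $\rho$, they are linearly ordered by the prefix order, so I can let $\hat{\rho}$ be the longest of them; then $\hat{\rho}$ is a finite run of $\semglobal(\GG)$ and $\rho_{\scriptscriptstyle\Procs}(\procA) \preforder \hat{\rho}$ for every $\procA$.

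Next I would verify that $\hat{\rho}$ witnesses CFA for $w$. For the first condition of \cref{def:cfa}: for each $\procA$, from $\rho_{\scriptscriptstyle\Procs}(\procA) \preforder \hat{\rho}$ we get $\trace(\rho_{\scriptscriptstyle\Procs}(\procA)) \wproj_{\Alphabet_\procA} \preforder \trace(\hat{\rho}) \wproj_{\Alphabet_\procA}$, and combining with the run-mapping equality yields $w \wproj_{\Alphabet_\procA} \preforder \trace(\hat{\rho}) \wproj_{\Alphabet_\procA}$, i.e., $w \wproj_{\Alphabet_\procA}$ is a prefix of $\trace(\hat{\rho}) \wproj_{\Alphabet_\procA}$. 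This is now literally the hypothesis needed by \cref{lm:exec-prefix-single-run-extendable} applied to the finite run $\hat{\rho}$ and the execution prefix $w$; invoking that lemma yields an extension $w'$ of $w$ in $\CSMl{\semlocal(\GG \tproj_\procA)}$ with $w' \interswap \trace(\hat{\rho})$, which is the second condition. Since $w$ was arbitrary, this establishes CFA for $\semglobal(\GG)$.

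I do not expect a serious obstacle here: the statement is essentially a corollary, since the hard part — turning per-role agreement on a common run into a genuine channel-compliant extension of $w$ that is $\interswap$-indistinguishable from $\trace(\hat{\rho})$ — is already packaged in \cref{lm:exec-prefix-single-run-extendable}, and the availability of a family of run mappings is exactly \cref{lm:proj-types-have-run-mappings}. The one subtlety to get right is that the run supplied by a run mapping need not be finite, so one must first pass to the finite prefix $\hat{\rho}$ that covers all the per-role prefixes $\rho_{\scriptscriptstyle\Procs}(\procA)$ before applying \cref{lm:exec-prefix-single-run-extendable}, which is stated only for finite runs; choosing $\hat{\rho}$ (rather than $\rho$) as the CFA witness is legitimate because \cref{def:cfa} only asks for the existence of \emph{some} run with the two properties.
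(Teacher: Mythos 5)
Your proof is correct and follows essentially the same route as the paper: condition (1) of CFA falls out of the run-mapping equalities, and condition (2) is discharged by applying \cref{lm:exec-prefix-single-run-extendable} to a finite run. The only cosmetic difference is that you explicitly construct the finite witness $\hat{\rho}$ as the longest of the per-role prefixes, whereas the paper simply notes that the witness run may be assumed finite because $w$ is a finite execution prefix; your version is slightly more careful but not a different argument.
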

\begin{proof}
Let $w$ be an execution prefix of $\CSMl{\semlocal(\GG \tproj_\procA)}$ with witness run $\rho$ and its run mapping $\rho(\hole)$.
We can assume that $\rho$ is finite since $w$ is a finite execution prefix.
We know that
$\trace(\rho(\procA)) \wproj_{\Alphabet_\procA} = w \wproj_{\Alphabet_\procA}$ and $\rho(\procA) \preforder \rho$ for every $\procA$.
Hence, it holds that
$w \wproj_{\Alphabet_\procA} \preforder \trace(\rho) \wproj_{\Alphabet_\procA}$
for every $\procA$,
which is exactly the first condition of CFA.

The second condition requires that we can extend $w$ to $w'$ in $\CSMl{\semlocal(\GG \tproj_\procA)}$
such that $w' \interswap \trace(\rho)$.
Since $\rho$ is finite, we can apply \cref{lm:exec-prefix-single-run-extendable} and obtain an extension $w'$ of $w$ in $\CSMl{\semlocal(\GG \tproj_\procC)}$ such that $w' \interswap \trace(\rho)$.
\end{proof}
 
\begin{lemma}[CFA entails protocol fidelity]
\label{lm:cfa-entails-prot-fidelity}
Let $\GG$ be a \projectable global type.
If the CSM $\CSMl{\semlocal(\GG \tproj_\procA)}$ satisfies CFA for $\semglobal(\GG)$,
then
$
    \interswaplang(\lang(\GG))
    =
    \lang(\CSMl{\semlocal(\GG \tproj_\procA)}).
$
\end{lemma}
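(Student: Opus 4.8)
The plan is to establish the two inclusions of the claimed equality separately. The inclusion $\interswaplang(\lang(\GG)) \subseteq \lang(\CSMl{\semlocal(\GG \tproj_\procA)})$ does not use CFA at all: \cref{lm:projections-yield-superset} gives $\lang(\GG) \subseteq \lang(\CSMl{\semlocal(\GG \tproj_\procA)})$, and since $\interswaplang(\cdot)$ is monotone and, by \cref{lm:csm-closed-under-interswap}, $\lang(\CSMl{\semlocal(\GG \tproj_\procA)})$ is closed under $\interswaplang$, the inclusion follows. All the work is in the converse inclusion $\lang(\CSMl{\semlocal(\GG \tproj_\procA)}) \subseteq \interswaplang(\lang(\GG))$, which I would prove by fixing a maximal trace $w$ of the CSM and distinguishing whether $w$ is finite (so its maximal run ends in a final configuration) or infinite.

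For finite $w$: apply CFA (\cref{def:cfa}) to the execution prefix $w$ to obtain a run $\rho$ in $\semglobal(\GG)$ with $w \wproj_{\Alphabet_\procA} \preforder \trace(\rho) \wproj_{\Alphabet_\procA}$ for every $\procA$, together with an extension $w'$ of $w$ in the CSM satisfying $w' \interswap \trace(\rho)$. Since $w$ is a maximal finite trace, its run ends in a final configuration, in which every component is the term $0$; by \cref{prop:shape-semantics-local-L} the state $0$ has no outgoing transition, so $w$ has no proper extension as an execution prefix, whence $w' = w$ and $w \interswap \trace(\rho)$. It then remains to show $\rho$ may be taken maximal, i.e.\ $\trace(\rho) \in \lang(\GG)$. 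As $w$ is complete and channel-compliant by \cref{lm:execution-prefix-and-channel-content}, so is $\trace(\rho)$ by \cref{lm:interswap-complete-wrt-channelcompliancy}; a complete trace cannot leave $\semglobal(\GG)$ in an intermediate state (that would leave a send unmatched), so $\rho$ ends in a syntactic subterm $G'$, and if $G' \neq 0$ then, via at most two $\emptystring$-steps, $G'$ reaches a state with an outgoing transition $\snd{\procB}{\procD}{m}$, so $\trace(\rho).\snd{\procB}{\procD}{m}$ is again a run trace of $\semglobal(\GG)$; projecting to $\Alphabet_\procB$ and invoking \cref{prop:projection-preserves-per-process-runs} would give a run of $\semlocal(\GG \tproj_\procB)$ on $w \wproj_{\Alphabet_\procB}.\snd{\procB}{\procD}{m}$, contradicting that $w \wproj_{\Alphabet_\procB}$ already reaches the (deterministic, outgoing-transition-free) state $0$ there. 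Hence $G' = 0$, $\rho$ is maximal, $\trace(\rho) \in \lang(\GG)$, and $w \interswap \trace(\rho)$ with $w$ finite, so $w \in \interswaplang(\lang(\GG))$.

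For infinite $w$, write $u_n$ for its length-$n$ prefix and use CFA to obtain, for each $n$, a finite run $\rho_n$ in $\semglobal(\GG)$ with $u_n \preceq_\interswap \trace(\rho_n)$ and $u_n \wproj_{\Alphabet_\procA} \preforder \trace(\rho_n) \wproj_{\Alphabet_\procA}$ for all $\procA$. The goal is to assemble these finite witnesses into a single infinite run $\rho_\infty$ of $\semglobal(\GG)$ whose per-role projections dominate those of $w$; then $v := \trace(\rho_\infty) \in \lang(\GG) \cap \Alphabet^\omega$, and, as $w$ and $v$ are channel-compliant, the dependence-graph characterisation underlying \cref{lm:interswap-complete-wrt-channelcompliancy}, extended to the infinite asymmetric setting of Gastin~\cite{DBLP:conf/litp/Gastin90}, yields $w \preceq_\interswap^\omega v$, i.e.\ $w \in \interswaplang(\lang(\GG))$. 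To build $\rho_\infty$ I would organise the runs $\rho_n$ (suitably truncated so as to stay within $w$'s per-role projections) as nodes of the run tree of $\semglobal(\GG)$, which is finitely branching since the CSM is built from a finite type; a counting argument shows this tree has nodes at arbitrary depth, so König's Lemma — exactly as already used in the proofs of \cref{lm:csm-closed-under-interswap,lm:projections-yield-superset} — produces the desired infinite path.

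The main obstacle is this last construction for infinite $w$. The per-prefix witnesses $\rho_n$ supplied by CFA need not be nested: a role may ``overtake'' an as-yet-unresolved choice, so the witness run for $u_n$ can commit to a branch that the extension to $u_{n+1}$ resolves differently, and then $\rho_{n+1}$ is not an extension of $\rho_n$. Handling this — choosing the truncations so that the resulting tree is simultaneously finitely branching and of infinite height, and so that the limiting run still dominates $w$ on every role — is the delicate step; the asymmetry of the $\preceq_\interswap^\omega$-closure (we include the pure-sender execution, reflecting the absence of fairness) is precisely why ``domination per role'' rather than ``equality per role'' is the right invariant to maintain along the construction. The finite case, by contrast, is routine once one observes that a maximal finite trace pins every role at the state $0$.
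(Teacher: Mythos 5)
Your overall strategy coincides with the paper's: the easy inclusion via \cref{lm:projections-yield-superset} together with closure of CSM languages under $\interswap$ (\cref{lm:csm-closed-under-interswap}), then a finite/infinite case split on a maximal trace $w$ of the CSM, using CFA to produce a witness run in $\semglobal(\GG)$. Your finite case is sound and matches the paper's argument in substance: maximality of a finite trace pins every role at the subterm $0$, which forces $w'=w$ in the second CFA condition, and via \cref{prop:projection-preserves-per-process-runs} together with guardedness of recursion the witness run cannot be extended by a non-$\emptystring$ transition, hence ends at $0$ and is maximal in $\semglobal(\GG)$.

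The one genuine gap is the infinite case, and you flag it yourself: you observe that the per-prefix witnesses $\rho_n$ need not be nested because a role can overtake an unresolved choice, and you leave open how to truncate and glue them into a tree. The missing observation --- which is how the paper discharges this --- is that no truncation or repair is needed: a CFA witness for a longer prefix $u_{m}$ of $w$ is automatically a CFA witness for every shorter prefix $u_n$, $n \leq m$. The first CFA condition is monotone ($u_n\wproj_{\Alphabet_\procA} \preforder u_{m}\wproj_{\Alphabet_\procA} \preforder \trace(\rho_{m})\wproj_{\Alphabet_\procA}$), and the second is inherited because the required extension of $u_n$ can be routed through $u_{m}$ and then through $u_{m}$'s own extension. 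One may therefore choose the witnesses coherently ``from the future'' (the paper phrases this as a prophecy-variable-style oracle in the sense of Abadi--Lamport), so that the set of run prefixes arising as witnesses for cofinally many prefixes of $w$ forms a prefix-closed, finitely branching tree of unbounded depth; K\"onig's Lemma then yields an infinite run $\rho$ with $w \preceq^\omega_\interswap \trace(\rho)$, which is exactly membership in $\interswaplang(\lang(\GG))$ under the asymmetric infinite closure. Your ``domination per role'' invariant is the right one to carry along, but without this downward-closure of witnesses the tree you describe is not obviously well-defined, so as written the infinite case does not go through.
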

\begin{proof}
First, \cref{lm:projections-yield-superset} yields that
$
    \lang(\GG)
    \subseteq
    \lang(\CSMl{\semlocal(\GG \tproj_\procA)}).
$
The language of a CSM is closed under $\interswap$ (\cref{lm:csm-closed-under-interswap}), hence the first inclusion holds.

It remains to prove the second inclusion:
$
    \lang(\CSMl{\semlocal(\GG \tproj_\procA)})
    \subseteq
    \interswaplang(\lang(\GG)).
$

Let $w \in \lang(\CSMl{\semlocal(\GG \tproj_\procA)})$.
For both the finite and infinite case, we will use \cref{def:cfa} to obtain a maximal run $\rho$ in $\semglobal(\GG)$ for which $w \wproj_{\Alphabet_\procA} \in \trace(\rho) \wproj_{\Alphabet_\procA}$ for every~$\procA$.

We do a case split whether $w$ is finite or infinite.

In case $w$ is finite, we use \cref{def:cfa} to find the corresponding run~$\rho$ in~$\semglobal(\GG)$ and want to show that $w \interswap \trace(\rho)$.
So we do not extend $w$ to obtain~$w'$ but keep it as it is.
First, we apply \cref{def:cfa} to obtain the shortest run~$\rho$ that satisfies both conditions.

We show that $\rho$ ends in a final state $s$ and is hence maximal in $\semglobal(\GG)$.
Let $(q, \xi)$ be the final configuration of $\CSMl{\semlocal(\GG \tproj_\procA)}$ that has been reached with trace $w$ and therefore $q_\procA$ be the states for the individual state machines $\semlocal(\GG \tproj_\procA)$ for every $\procA$.
Since $(q, \xi)$ is final, every $q_\procA$ is final in $\semlocal(\GG \tproj_\procA)$.
Then, we know that $q_\procA$ corresponds to $0$ by definition of $\semlocal(\GG \tproj_\procA)$.
By \cref{prop:shape-semantics-local-L}, $q_\procA$ does not have any outgoing transitions.
\cref{prop:projection-preserves-per-process-runs} states that every run in $\semglobal(\GG) \wproj_{\Alphabet_\procA}$ is also possible in $\semlocal(G \tproj_\procA)$.
Therefore, by contradiction, it follows that a run in $\semglobal(\GG) \wproj_{\Alphabet_\procA}$ with trace $w \wproj_{\Alphabet_\procA}$ cannot be extended to a run with a longer trace for all $\procA$.
Hence, the run of $\trace(\rho)$ in $\semglobal(\GG)$ cannot be extended to a run with a longer trace either, i.e., there can only be $\emptystring$-transitions.
By definition of $\semglobal(\GG)$, $\emptystring$-transitions occur from states corresponding to $t$ and $\mu t.G'$.
We claim that $s$ does not correspond to any $t$ or syntactic subterm of shape $\mu t.G'$.
Towards a contradiction, assume that $s$ corresponds to some $t$ or subterm of shape $\mu t.G'$.
Global types have guarded recursion, i.e., $G'$ cannot be $0$ for any $\mu t.G'$.
Then there is some non-$\emptystring$-transition from $s$ which is a contradiction.

Hence, there is also no $\emptystring$-transition from $s$ and $s$ does not have any outgoing transitions.
By \cref{prop:shape-semantics-global-G}, $s$ corresponds to $0$, is final in $\semglobal(\GG)$, and $\rho$ is maximal in $\semglobal(\GG)$.

It remains to show that we do not need to extend $w$ to $w'$ for any role.
We define $C \is \set{ \procA \mid w \wproj_{\Alphabet_\procA} \neq \trace(\rho) \wproj_{\Alphabet_\procA} }$.
We claim that $C = \emptyset$.
Towards a contradiction, let us assume that $w \wproj_{\Alphabet_\procA} \neq \trace(\rho) \wproj_{\Alphabet_\procA}$ for $\procA \in C$. Since we reached the final state $s$ with $\rho$ in $\semglobal(\GG)$ which corresponds to $0$, there is no successor in $\CSMl{\semlocal(\GG \tproj_\procA)}$ for $q_\procA$ either and there is no possibility to extend $w$ for any $\procA$ but this contradicts the second condition of CFA and we do not need to extend $w'$ for any role.

So it holds that $w \interswap \trace(\rho)$.
Because $\rho$ is maximal, we have that $\trace(\rho) \subseteq \lang(\GG)$ and the claim (for finite $w$) follows because $\interswaplang(\lang(\GG))$ is closed under~$\interswap$ by definition.

Suppose that $w$ is infinite.
We show that there is an infinite run in $\semglobal(\GG)$ that matches with $w$ when closing with $\interswap$.

From \cref{def:cfa}, we can obtain a finite run~$\rho$ for every prefix $u$ of $w$ such that the conditions for CFA hold, i.e., there is an extension $u'$ of $u$ such that $u' \interswap \trace(\rho)$.
To simplify the argument, we use an idea similar to prophecy variables \cite{DBLP:conf/lics/AbadiL88} as in the proof of \cref{lm:proj-types-have-run-mappings}.
Here, we can use a similar oracle that tells which witness run and mapping to use for every prefix of $w$.
This does not restrict the roles in any way.
All conditions of CFA hold for the prefix and this witness run since they hold for longer prefixes of $w$.
By this, we ensure that the run $\rho$ can always be extended for longer prefixes of $w$ $(\square)$.

Consider a tree $\mathcal{T}$ where each node corresponds to a run $\rho$ on some finite prefix $w' \leq w$.
The root is labelled by the empty run.  The children of a node $\rho$ are runs that extend $\rho$ by a single transition---these exist by $(\square)$.
Since our CSM, derived from a global type, is a finite number of finite state machines, $\mathcal{T}$ is finitely branching.
By König's Lemma, there is an infinite path in $\mathcal{T}$ that corresponds to an infinite run $\rho$ in $\semglobal(\GG)$ for which $w \preceq^\omega_\interswap \trace(\rho)$.
By definition of $\interswaplang(\lang(\GG))$ for infinite traces, this yields $w \in \interswaplang(\lang(\GG))$.
\end{proof}
 
\begin{lemma}[CFA for $\CSMl{\semlocal(\GG \tproj_\procA)}$ entails deadlock freedom]
\label{lm:CFA-entails-deadlockfree-CSM-of-projected-types}
Let $\GG$ be a \projectable global type.
If $\CSMl{\semlocal(\GG \tproj_\procA)}$ satisfies CFA for $\semglobal(\GG)$,
then $\CSMl{\semlocal(\GG \tproj_\procA)}$ is deadlock~free.
\end{lemma}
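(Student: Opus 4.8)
To show deadlock freedom I would proceed directly from the defining property: a CSM is deadlock free once every finite run can be prolonged to a maximal run. So write $\mathcal{A} \coloneq \CSMl{\semlocal(\GG \tproj_\procA)}$, fix an arbitrary finite run $\pi$ of $\mathcal{A}$ with trace $w$, and aim to extend $\pi$ to a maximal run. The plan is: use CFA to obtain a run of $\semglobal(\GG)$ compatible with $w$, enlarge it to a \emph{maximal} run, push that maximal run back into $\mathcal{A}$ via $\lang(\GG)\subseteq\lang(\mathcal{A})$, and splice it onto $\pi$. First I would apply \cref{def:cfa} to the execution prefix $w$: this yields a run $\rho$ in $\semglobal(\GG)$ with $w\wproj_{\Alphabet_\procA}$ a prefix of $\trace(\rho)\wproj_{\Alphabet_\procA}$ for every role $\procA$, together with an extension $w'$ of $w$ in $\mathcal{A}$ such that $w'\interswap\trace(\rho)$. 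If $\rho$ is infinite then $w'$ is an infinite trace of $\mathcal{A}$ extending $w$, so $\pi$ already extends to a maximal (infinite) run and we are done; hence assume $\rho$, and so $w'$, finite.

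Next I would extend $\rho$ to a maximal run $\hat\rho$ of $\semglobal(\GG)$. This is always possible: by \cref{prop:shape-semantics-global-G}(b) the only state of $\semglobal(\GG)$ without an outgoing transition is the one for $0$, which is final, so a run either gets stuck at a final state or can be prolonged indefinitely, and the (possibly infinite) limit run is maximal. Then $\trace(\hat\rho)\in\lang(\semglobal(\GG))=\lang(\GG)$ and $\trace(\rho)$ is a prefix of $\trace(\hat\rho)$. By \cref{lm:projections-yield-superset}(2) we have $\lang(\GG)\subseteq\lang(\mathcal{A})$, so $\trace(\hat\rho)$ is a maximal trace of $\mathcal{A}$; fix a maximal run $\hat\pi$ realising it, and let $(q,\xi)$ be the configuration $\hat\pi$ is in after reading the prefix $\trace(\rho)$. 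Since $w'\interswap\trace(\rho)$ with both words finite, \cref{lm:csm-closed-under-interswap} gives a run on $w'$ ending in exactly $(q,\xi)$; and because $w$ is a prefix of $w'$ and each $\semlocal(\GG\tproj_\procA)$ is deterministic (distinctness of all send/receive options, as used in the proof of \cref{lm:pref-MST-has-run}) with channel contents a function of the trace (\cref{lm:execution-prefix-and-channel-content}), any run of $\mathcal{A}$ on a trace having $w$ as a prefix passes through the configuration $\pi$ reaches after $w$, up to the bounded number of trailing $\emptystring$-transitions allowed by guardedness (\cref{prop:shape-semantics-global-G}(c)). Concatenating: $\pi$ is prolonged to the run on $w'$, which sits in $(q,\xi)$, and then further along the suffix of $\hat\pi$ issuing from $(q,\xi)$; the result is a maximal run of $\mathcal{A}$ extending $\pi$. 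As $\pi$ was arbitrary, $\mathcal{A}$ is deadlock free.

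The substantive content is entirely imported from the already-established machinery — CFA, closure of CSM languages under $\interswap$ (\cref{lm:csm-closed-under-interswap}), and $\lang(\GG)\subseteq\lang(\mathcal{A})$ (\cref{lm:projections-yield-superset}) — so the only genuinely new work is the splicing step, which I also expect to be the main obstacle: one must argue carefully that $w'$ and $\trace(\rho)$ reach \emph{literally} the same configuration (so the suffix of $\hat\pi$ is enabled there), and keep track of the harmless $\emptystring$-transitions so that it is the \emph{given} run $\pi$ on $w$, and not merely some run on $w$, that one is entitled to continue.
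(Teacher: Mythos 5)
Your proof is correct, but it takes a genuinely different route from the paper's. The paper argues by contradiction: it assumes a reachable non-final configuration from which no transition is enabled, observes that the second CFA condition then forces $w'=w$, so every role has already consumed its entire projection of $\trace(\rho)$; it then shows the last state of $\rho$ must be the $0$-state (otherwise \cref{prop:projection-preserves-per-process-runs} would put an enabled send at some role, contradicting stuckness), whence by \cref{prop:shape-semantics-local-L} every local state is the final $0$-state and all channels are empty, i.e.\ the configuration is final --- a contradiction. You instead build the maximal extension explicitly: prolong the CFA witness run to a maximal run of $\semglobal(\GG)$, pull its trace back into the CSM via $\lang(\GG)\subseteq\lang(\CSMl{\semlocal(\GG\tproj_\procA)})$ (\cref{lm:projections-yield-superset}), and splice the tail onto the given run using \cref{lm:csm-closed-under-interswap} together with determinism of the local machines modulo $\emptystring$-transitions. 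The paper's version is shorter and concentrates all the work in the single claim ``stuck implies final'', sidestepping your splicing step entirely; yours is more constructive and reuses the already-established fidelity inclusion, at the price of the bookkeeping you rightly flag as the delicate point --- that $w'$ and $\trace(\rho)$ reach literally the same configuration, and that it is the \emph{given} run (up to trailing $\emptystring$-transitions, which \cref{prop:shape-semantics-local-L} keeps harmless and bounded) that one continues. Both arguments ultimately rest on the same two facts: CFA pins all local views to a common global run, and the shape of the automata guarantees that the only non-extendable state is the final one.
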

\begin{proof}
Towards a contradiction, assume that $\CSMl{\semlocal(\GG \tproj_\procA)}$ has a deadlock.
Then, there is an execution prefix $w$ for which $\CSMl{\semlocal(\GG \tproj_\procA)}$ reaches a non-final configuration $(q, \xi)$ and no transition is possible.
Let $\rho$ be the maximal run from \cref{def:cfa} for which the conditions of CFA  hold.
Since $\CSMl{\semlocal(\GG \tproj_\procA)}$ cannot make another step, the second condition can only apply for $w' = w$, so all events along $\rho$ have been processed by all roles, i.e., $w \wproj_{\Alphabet_\procA} = \trace(\rho) \wproj_{\Alphabet_\procA}$ for every $\procA$ by \cref{lm:interswap-complete-wrt-channelcompliancy}.

\emph{Claim I:}
We claim the last state $s$ of $\rho$ corresponds to $0$.

\emph{Proof of Claim I.}
Towards a contradiction, assume that $\rho$ can be extended, i.e., $s$ has a successor.
Since $\rho$ is maximal, there cannot be any $\emptystring$-transitions from $s$ but only some transition labelled different than $\emptystring$.
Since all roles have processed all events along $\rho$, such a transition can only be labelled by $\snd{\procA}{\procB}{\val}$ for some $\procA$, $\procB$, and $\val$.
Hence, there is a run for $(w \wproj_{\Alphabet_\procA}).\snd{\procA}{\procB}{\val}$ in $\semglobal(\GG) \wproj_{\Alphabet_\procA}$.
By \cref{prop:projection-preserves-per-process-runs}, there is a run for $(w \wproj_{\Alphabet_\procA}).\snd{\procA}{\procB}{\val}$ in $\semlocal(\GG \tproj_\procA)$.
Role $\procA$ can hence take another transition which contradicts the assumption that $(q, \xi)$ is a deadlock. \\
\emph{End Proof of Claim I.}

So we know that the last state $s$ of $\rho$ corresponds to $0$.
Again, by \cref{prop:projection-preserves-per-process-runs}, there is no transition for any $\procA$ from $q_\procA$ in $\semlocal(\GG \tproj_\procA)$ because $\semglobal(\GG) \wproj_{\Alphabet_\procA}$ has no transition for any $\procA$.
All channels in $\xi$ are empty and $w \wproj_{\Alphabet_\procA} = \trace(\rho) \wproj_{\Alphabet_\procA}$ for every $\procA$.
By \cref{prop:shape-semantics-local-L}, we know that there is only a single state in $\semlocal(\GG \tproj_\procA)$ that does not have outgoing transitions, i.e., the one that corresponds to $0$ and hence is final.
But then, all individual final states $q_\procA$ are final (and all channels in $\xi$ are empty) so $(q, \xi)$ is a final configuration which yields the desired contradiction.
\end{proof}
 
\subsection{Proof of
\cref{thm:projectable-MST-sat-protocol-fidelity}: \\
Any \Projectable Global Type is Implementable}
Now,
we can prove
\cref{thm:projectable-MST-sat-protocol-fidelity}:
Any \projectable global type is implementable.

\medskip
\begin{proof}[Proof of \cref{thm:projectable-MST-sat-protocol-fidelity}]
Let $\CSMl{\semlocal(\GG \tproj_\procA)}$ be the CSM of local types.
Then, we know that $\CSMl{\semlocal(\GG \tproj_\procA)}$ has a family of run mappings for $\semglobal(\GG)$ by \cref{lm:proj-types-have-run-mappings}.
This, in turn, entails that $\CSMl{\semlocal(\GG \tproj_\procA)}$ satisfies control-flow agreement for $\semglobal(\GG)$ by \cref{lm:run-mappings-entail-cfa}.
This satisfies the condition for \cref{lm:CFA-entails-deadlockfree-CSM-of-projected-types} (and \ref{lm:cfa-entails-prot-fidelity}) and hence $\CSMl{\semlocal(\GG \tproj_\procA}$ is deadlock free.
With \cref{lm:projections-yield-superset,lm:cfa-entails-prot-fidelity}, protocol fidelity holds and the claim follows.
\end{proof}

\end{document}